\providecommand{\tabularnewline}{\\}
\providecommand{\algorithmname}{Algorithm}
\theoremstyle{plain}
\newtheorem{prop}{\protect\propositionname}
\theoremstyle{plain}
\newtheorem{thm}{\protect\theoremname}
\theoremstyle{remark}
\newtheorem*{rem*}{\protect\remarkname}
\theoremstyle{plain}
\newtheorem{assumption}{\protect\assumptionname}
\theoremstyle{plain}
\newtheorem{lem}{\protect\lemmaname}
\date{}
\providecommand{\assumptionname}{Assumption}
\providecommand{\lemmaname}{Lemma}
\providecommand{\propositionname}{Proposition}
\providecommand{\remarkname}{Remark}
\providecommand{\theoremname}{Theorem}
\begin{document}
\onehalfspacing
\title{\textbf{Efficient Multimodal Sampling via Tempered Distribution Flow}}
\author{Yixuan Qiu\\
{\normalsize{}School of Statistics and Management, Shanghai University
of Finance and Economics}\\
{\normalsize{}Shanghai 200433, P.R. China, }\texttt{\normalsize{}qiuyixuan@sufe.edu.cn}\\
\\
Xiao Wang\\
{\normalsize{}Department of Statistics, Purdue University}\\
{\normalsize{}West Lafayette, IN 47907, U.S.A., }\texttt{\normalsize{}wangxiao@purdue.edu}}
\maketitle
\begin{abstract}
Sampling from high-dimensional distributions is a fundamental problem
in statistical research and practice. However, great challenges emerge
when the target density function is unnormalized and contains isolated
modes. We tackle this difficulty by fitting an invertible transformation
mapping, called a transport map, between a reference probability measure
and the target distribution, so that sampling from the target distribution
can be achieved by pushing forward a reference sample through the
transport map. We theoretically analyze the limitations of existing
transport-based sampling methods using the Wasserstein gradient flow
theory, and propose a new method called TemperFlow that addresses
the multimodality issue. TemperFlow adaptively learns a sequence of
tempered distributions to progressively approach the target distribution,
and we prove that it overcomes the limitations of existing methods.
Various experiments demonstrate the superior performance of this novel
sampler compared to traditional methods, and we show its applications
in modern deep learning tasks such as image generation. The programming
code for the numerical experiments is available at \url{https://github.com/yixuan/temperflow}.
\end{abstract}
\noindent \textit{Keywords}: deep neural network, gradient flow,
Markov chain Monte Carlo, normalizing flow, parallel tempering

\newpage{}

\setstretch{1.8}
\addtolength{\abovedisplayskip}{-3.5pt}
\addtolength{\belowdisplayskip}{-3.5pt}

\section{Introduction}

Sampling from probability distributions is a fundamental task in statistical
research and practice, with omnipresent applications in Bayesian data
analysis \citep{gelman2014bayesian}, uncertainty quantification \citep{sullivan2015introduction},
and generative machine learning models \citep{salakhutdinov2015learning,bond2021deep},
among many others. It has long been a central topic in statistical
computing and simulation, with many standard approaches developed,
including rejection sampling, importance sampling, Markov chain Monte
Carlo (MCMC), etc. The general sampling problem can be described in
the following way. Let $p(x),x\in\mathbb{R}^{d}$ be the target density
function from which one wishes to sample, and suppose that it is in
an unnormalized form. In other words, one only has access to the energy
function $E(x)=-\log p(x)+C$, where $C$ is an unknown constant free
of $x$. Then the aim of sampling is to generate an independent sample
of points $X_{1},\ldots,X_{n}$, each following the $p(x)$ distribution.

Despite the availability of standard sampling algorithms, great challenges
emerge when the target distribution has complicated structures, including
high dimensions, isolated modes, and strong correlations between components.
As a result, many sampling methods are not fully applicable to such
sophisticated problems. For example, rejection sampling requires a
strict upper bound for the density function of interest, which is
hard to obtain in general, let alone the unnormalized case. Importance
sampling, on the other hand, is mainly used to approximate integrals
rather than generating independent samples, and may have a high variance
that grows exponentially with the dimension of variables (see Examples
9.3 of \citealp{owen2013monte}). To this end, MCMC becomes one of
the most popular general-purpose samplers for high-dimensional distributions,
with good statistical properties and many variants available \citep{gilks1995markov,brooks2011handbook,dunson2020hastings}.

Nevertheless, MCMC methods are not without limitations. First, the
random variates generated by MCMC are correlated, which lead to reduced
effective sample size in approximating expectations. Also, they are
not appropriate for scenarios in which an independent sample is required.
Second, users typically need to manually tune hyperparameters in most
MCMC algorithms, which require much expertise and experience. Moreover,
it is non-trivial to test the convergence of a Markov chain, making
it challenging to determine the stopping rule. Last but not least,
the target density function typically contains many isolated modes,
which lead to MCMC samples that are highly sensitive to the initial
state, as sampled points may be trapped in one mode and can hardly
escape. All these aspects severely harm the application of MCMC in
practice.

More recently, the Schr\"odinger--F\"ollmer sampler \citep{huang2021schrodinger}
is a new sampling method based on a diffusion process, which transports
particles following a degenerate distribution at time zero to the
target distribution at time one. The Schr\"odinger--F\"ollmer sampler
has an easy implementation, and shows promising results on multimodal
Gaussian mixture models. However, the drift term of the diffusion
process does not always have closed forms, and may need to be approximated
by Monte Carlo methods, which further increases the computing burden.

To overcome the difficulties mentioned above, in this article we advocate
a new framework for efficient statistical sampling and simulation.
Our method is motivated by the measure transport framework introduced
in \citet{marzouk2016sampling}, whose central idea is to estimate
a deterministic transport map between a base probability measure and
the target distribution. The base measure is typically a fixed and
convenient distribution, and then independent random variables from
the target distribution can be obtained by simply pushing forward
an i.i.d. reference sample through the transport map. Despite its
appealing features, there are many unresolved challenges in \citet{marzouk2016sampling},
among which the flexibility of the map and the difficulty of the computation
are the biggest concerns. More importantly, we point out in Section
\ref{subsec:challenges} that there is an intrinsic limitation of
the existing methods when the target distribution has multiple modes.
In particular, we show that these methods fail even for simple distributions
such as a bimodal normal mixture.

To uncover the reason for such failures, we analyze the existing measure
transport methods using gradient flows of probability measures, also
known as Wasserstein gradient flows \citep{ambrosio2008gradient,santambrogio2017euclidean}.
A Wasserstein gradient flow can be viewed as a continuous evolution
of probability measures, and serves as a powerful tool to study the
convergence property of optimization problems involving distributions.
To address the multimodal sampling problem, we borrow ideas from the
simulated and parallel tempering algorithms \citep{kirkpatrick1983optimization,swendsen1986replica,geyer1991markov,marinari1992simulated,geyer1995annealing,neal1996sampling},
and develop a new method to estimate the transport map between a base
measure and the target multimodal distribution. We show that under
very mild conditions, the proposed algorithm converges fast and can
generate high-quality samples from the target distribution.

It is worth mentioning that the proposed sampling method greatly benefits
from recent advances in deep learning. In particular, we primarily
use deep neural networks (DNNs, \citealp{goodfellow2016deep}) to
construct transport maps between measures, due to their superior expressive
power to approximate highly nonlinear relationships. DNNs have already
been broadly applied to statistical modeling \citep{yuan2020deep,pang2020learning,qiu2021almond,sun2021consistent,liu2021density}
and sampling given training data \citep{romano2020deep,zhou2021deep},
and in this article we also show their great potentials for sampling
given energy functions. Combined with powerful optimization techniques
such as stochastic approximation \citep{robbins1951stochastic} and
adaptive gradient-based optimization \citep{kingma2014adam}, the
proposed method is scalable to complex and high-dimensional distributions,
and can be accelerated by modern computing hardware such as graphics
processing units (GPUs).

The remainder of this article is organized as follows. In Section
\ref{sec:background} we provide the background on the measure transport
framework proposed in \citet{marzouk2016sampling}, introduce the
existing samplers, discuss their issues and challenges, and describe
their connections with MCMC. Section \ref{sec:flow_sampler} is dedicated
to the theoretical framework of Wasserstein gradient flows and the
analysis of existing samplers under this framework. We propose the
new TemperFlow sampler in Section \ref{sec:temperflow}, and describe
its theoretical properties in Section \ref{sec:convergence_properties}.
Various numerical experiments are conducted in Section \ref{sec:simulation}
to demonstrate the effectiveness of the proposed sampler, and we show
its applications in modern deep generative models in Section \ref{sec:application}.
Finally, we conclude this article in Section \ref{sec:conclusion}
with discussion.

\section{Background and Related Work}

\label{sec:background}

\subsection{Measure transport}

Let $\mu:\mathcal{B}(\mathbb{R}^{d})\rightarrow[0,1]$ be the probability
measure from which we want to sample, defined over the Borel $\sigma$-algebra
on $\mathbb{R}^{d}$. In this article, we focus on sampling from continuous
distributions, so we assume that $\mu$ admits a density function
$p(x)$ with respect to $\lambda$, the Lebesgue measure on $\mathbb{R}^{d}$.
In what follows, we would use $\mu$ and $p(x)$ interchangeably to
indicate the target distribution. Let $\mu_{0}:\mathcal{B}(\mathbb{R}^{d})\rightarrow[0,1]$
be another probability measure from which we can easily generate independent
samples. In most cases, $\mu_{0}$ can be chosen to be a simple fixed
distribution such as the standard multivariate normal $N(0,I_{d})$,
and we follow this convention. A mapping $T:\mathbb{R}^{d}\rightarrow\mathbb{R}^{d}$
is said to push forward $\mu_{0}$ to $\mu$ if $\mu(A)=\mu_{0}(T^{-1}(A))$
for any set $A\in\mathcal{B}(\mathbb{R}^{d})$, denoted by $T_{\sharp}\mu_{0}=\mu$.
In this case, $T$ is also called a transport map. Throughout this
article, we assume that the transport map is invertible and differentiable,
\emph{i.e.}, $T$ is a diffeomorphism.

When both $\mu$ and $\mu_{0}$ are absolutely continuous with respect
to $\lambda$, the transport map from $\mu_{0}$ to $\mu$ always
exists, although it is not necessarily unique \citep{villani2008optimal}.
To generate a random sample from the target distribution $\mu$, one
only needs to simulate $Z_{1},\ldots,Z_{n}\overset{iid}{\sim}\mu_{0}$,
and then it follows that $T(Z_{1}),\ldots,T(Z_{n})\overset{iid}{\sim}\mu$.
Therefore, the key step of this framework is to estimate the transport
map $T$ given the energy function $E(x)$.

\citet{marzouk2016sampling} proposes to solve $T$ via the following
optimization problem
\begin{equation}
\min_{T}\ \mathrm{KL}(T_{\sharp}\mu_{0}\Vert\mu)\quad\text{subject to}\quad\det\nabla T>0,\ T\in\mathcal{T},\label{eq:map_optimization}
\end{equation}
where $\mathrm{KL}(\nu\Vert\mu)=\int\log(\mathrm{d}\nu/\mathrm{d}\mu)\mathrm{d}\nu$
is the Kullback--Leibler (KL) divergence from $\mu$ to $\nu$, $\nabla T$
is the Jacobian matrix of $T$, and $\mathcal{T}$ is a suitable set
of diffeomorphisms. In \citet{marzouk2016sampling}, $\mathcal{T}$
is the set of triangular maps based on the Knothe--Rosenblatt rearrangement,
which means that the $i$-th output variable of $T$ depends only
on the first $i$ input variables. Section \ref{subsec:flow_sampler}
shows that there are also more general specifications for $\mathcal{T}$.
Condition $\det\nabla T>0$ guarantees that the pushforward density
$T_{\sharp}\mu_{0}$ is positive on the support of $\mu$. It is worth
noting that optimizing (\ref{eq:map_optimization}) requires only
the energy function of $\mu$, since (\ref{eq:map_optimization})
is equivalent to
\begin{equation}
\min_{T}\ \int p_{0}(x)\left[E(T(x))-\log\det\nabla T(x)\right]\mathrm{d}x\quad\text{subject to}\quad\det\nabla T>0,\ T\in\mathcal{T},\label{eq:energy_optimization}
\end{equation}
where $p_{0}(x)$ is the density function of $\mu_{0}$. In practice,
$T$ is parameterized by a finite-dimensional vector $\theta$, and
the integral in (\ref{eq:energy_optimization}) can be unbiasedly
estimated by Monte Carlo samples. For example, \citet{marzouk2016sampling}
uses polynomials to approximate the $\mathcal{T}$ space, and optimizes
(\ref{eq:energy_optimization}) through stochastic gradient descent.
However, maintaining the invertibility of $T$ and the constraint
$\det\nabla T>0$ is nontrivial for polynomials, and hence the applicability
of the method in \citet{marzouk2016sampling} is severely limited.

\subsection{The neural transport samplers}

\label{subsec:flow_sampler}

To overcome such issues, \citet{hoffman2019neutra} extends \citet{marzouk2016sampling}
by using inverse autoregressive flows \citep{kingma2016improving}
to represent the transport map $T$, resulting in more scalable computation.
Inverse autoregressive flows belong to the broader class of normalizing
flows \citep{tabak2010density,tabak2013family,rezende2015variational},
which can be described as transformations of a probability density
through a sequence of invertible mappings. Normalizing flows include
many other variants such as affine coupling flows (\citealp{dinh2014nice};
\citealp{dinh2016density}), masked autoregressive flows \citep{papamakarios2017masked},
neural spline flows \citep{durkan2019neural}, and linear rational
spline flows \citep{dolatabadi2020invertible}. These models typically
use neural networks to construct invertible mappings, so they are
also referred to as invertible neural networks in the literature.
To clarify, the term ``flow'' in normalizing flows has a conceptual
gap with that in gradient flows, where the latter is the focus of
this article. Therefore, to avoid ambiguity, we use invertible neural
networks to refer to normalizing flow models hereafter. A more detailed
introduction to invertible neural networks is given in Section \ref{sec:inn}
of the supplementary material.

It should be noted that all the invertible neural networks mentioned
above are guaranteed to be diffeomorphisms that satisfy $\det\nabla T>0$,
by properly designing their neural network architectures. Therefore,
they are natural and powerful tools to construct the transport map
$T$ in (\ref{eq:map_optimization}). We use the term \emph{neural
transport sampler} to stand for measure transport sampling methods
that construct the transport map $T$ using invertible neural networks.
Since in (\ref{eq:map_optimization}) $T$ is learned by minimizing
the KL divergence, we also call such a method the KL neural transport
sampler, or KL sampler for short, when we need to emphasize this objective
function. The outline of the KL sampler is shown in Algorithm \ref{alg:kl_sampler}.

\begin{algorithm}[h]
\caption{\label{alg:kl_sampler}The KL neural transport sampler under the measure
transport framework.}


\begin{algorithmic}[1]

\REQUIRE Target distribution $\mu$ with energy function $E(x)$,
invertible neural network $T_{\theta}$ with initial parameter value
$\theta^{(0)}$, batch size $M$, step sizes $\{\alpha_{k}\}$

\ENSURE Neural network parameters $\hat{\theta}$ such that $T_{\hat{\theta}\sharp}\mu_{0}\approx\mu$

\FOR{ $k=1,2,\ldots$ }

\STATE Generate $Z_{1},\ldots,Z_{M}\overset{iid}{\sim}\mu_{0}$

\STATE Define $L(\theta)=M^{-1}\sum_{i=1}^{M}\left[E(T_{\theta}(Z_{i}))-\log\det\nabla_{z}T_{\theta}(z)|_{z=Z_{i}}\right]$ 

\STATE Set $\theta^{(k)}\leftarrow\theta^{(k-1)}-\alpha_{k}\nabla_{\theta}L(\theta)|_{\theta=\theta^{(k-1)}}$

\IF{ $L(\theta)$ converges }

\RETURN $\hat{\theta}=\theta^{(k)}$

\ENDIF

\ENDFOR

\end{algorithmic}

\end{algorithm}

\subsection{Issues and challenges}

\label{subsec:challenges}

Although the global minimizer of (\ref{eq:map_optimization}) is guaranteed
to satisfy $T_{\sharp}\mu_{0}=\mu$ provided that $\mathcal{T}$ is
rich enough, in practice the convergence may be extremely slow if
$\mu$ contains multiple isolated modes. Below, we use a motivating
example to illustrate this issue. Specifically, we apply the neural
transport sampler to two simple univariate distributions: the first
one has a unimodal and log-concave density function $p_{u}(x)=\exp\{x-\exp(x/3)\}/6$,
and the second is a mixture of two normal distributions $p_{m}(x)\sim0.7\cdot N(1,1)+0.3\cdot N(8,0.25)$.
The base measure is $\mu_{0}=N(0,1)$, $T$ is initialized as the
identity map, and the stochastic gradient descent is used to update
the parameters in $T$. Let $T^{(k)}$ denote the estimated transport
map after $k$ iterations, and then in Figure \ref{fig:demo_uni_multi_modal}
we plot the density function of $T_{\sharp}^{(k)}\mu_{0}$ with $k=0,10,50,100,500$.

\begin{figure}[h]
\begin{centering}
\includegraphics[width=0.85\textwidth]{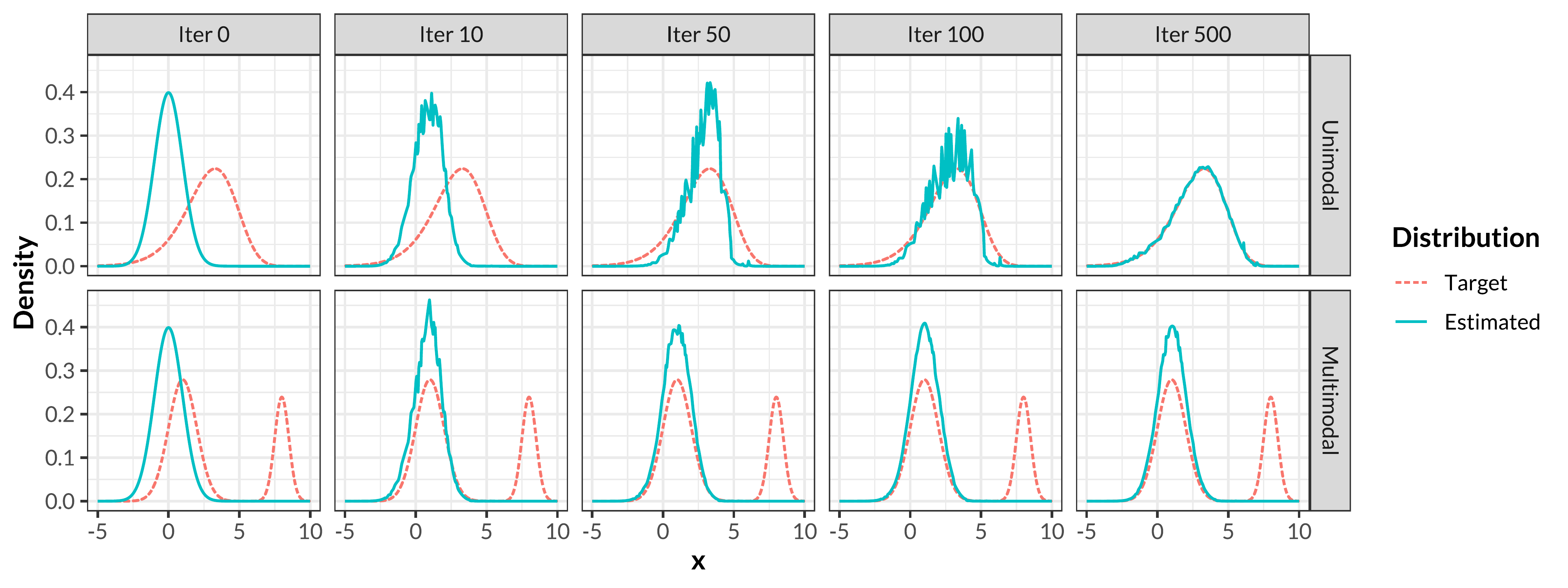}
\par\end{centering}
\caption{\label{fig:demo_uni_multi_modal}Applying the KL sampler to two distributions.
The first row shows the evolution of the estimated density for a unimodal
distribution, and the second row shows that of a bimodal distribution.
Each column stands for one iteration in the optimization process.}
\end{figure}

It is clear that for the unimodal density $p_{u}(x)$, $T_{\sharp}^{(k)}\mu_{0}$
converges very fast and approximates the target density well, whereas
in the case of bimodal normal mixture, $T_{\sharp}^{(k)}\mu_{0}$
only captures the first mode, and has very little progress after fifty
iterations. This motivating example suggests that the optimization
problem (\ref{eq:energy_optimization}) has intrinsic difficulties
when the target density $p(x)$ is multimodal. Therefore, we need
to carefully analyze the dynamics of the optimization process in order
to uncover the reason of failure.

\subsection{Connections with MCMC}

MCMC was listed as one of the top algorithms in the 20th century \citep{dongarra2000guest}.
However, for distributions that are far from being log-concave and
have many isolated modes, additional techniques are necessary for
current MCMC techniques. For example, simulated tempering \citep{marinari1992simulated}
is an attractive method of this kind. The simulated tempering swaps
between Markov chains that are different temperature variants of the
original chain. The intuition behind this is that the Markov chains
at higher temperatures can cross energy barriers more easily, and
hence they act as a bridge between isolated modes. Provable results
of this heuristic are few and far between. The lower bound of the
spectral gap for generic simulated chains is established by \citet{zheng2003swapping,woodard2009conditions},
but the spectral gap bound in \citet{woodard2009conditions} is exponentially
small as a function of the number of modes. For Langevin dynamics
\citep{bhattacharya1978criteria}, the effect of discretization for
arbitrary non-log-concave distributions has been analyzed by \citet{raginsky2017non,cheng2018sharp,vempala2019rapid},
but the mixing time is exponential in general. \citet{ge2018simulated}
combined Langevin diffusion with simulated tempering and theoretically
analyzed the Markov chain for a mixture of log-concave distributions
of the same shape. In particular, \citet{ge2018simulated} proved
fast mixing for multimodal distributions in this setting. Other related
techniques include parallel tempering \citep{geyer1991markov,falcioni1999biased}
and simulated annealing \citep{kirkpatrick1983optimization}. 

We emphasize that measure transport and MCMC are complementary to
each other. In particular, measure transport requires extensive training
to obtain the transport map, whereas MCMC does not need any training
to generate samples. But after obtaining the transport map, the generation
step becomes trivial and extremely fast for measure transport; instead,
MCMC requires running Markov chains iteratively. 

\section{Theoretical Analysis of the Neural Transport Sampler}

\label{sec:flow_sampler}

\subsection{Wasserstein gradient flow}

\label{subsec:gradient_flow}

In this section, we introduce the Wasserstein gradient flow as a tool
to analyze the neural transport sampler. Assume that the transport
map $T$ belongs to the Hilbert space $\mathbb{H}=\{T:\int\Vert T(x)\Vert^{2}\mathrm{d}\mu_{0}(x)<\infty\}$
equipped with the inner product $\langle T,M\rangle_{\mathbb{H}}=\int\langle T(x),M(x)\rangle\mathrm{d}\mu_{0}(x)$,
where $\Vert\cdot\Vert$ and $\langle\cdot,\cdot\rangle$ are the
usual Euclidean norm and inner product, respectively. For a functional
$\mathcal{G}(T)$, define its functional derivative $\delta\mathcal{G}/\delta T$
as the mapping $M:\mathbb{R}^{d}\rightarrow\mathbb{R}^{d}$ such that
$\left.\frac{\mathrm{d}}{\mathrm{d}\varepsilon}\mathcal{G}(T+\varepsilon\Psi)\right|_{\varepsilon=0}=\langle M,\Psi\rangle_{\mathbb{H}}$
for any $\Psi\in\mathbb{H}$, assuming that the relevant quantities
are well-defined. Then the main optimization problem (\ref{eq:map_optimization})
can be solved via gradient descent methods as follows. For the objective
functional $\mathcal{G}(T)=\mathrm{KL}(T_{\sharp}\mu_{0}\Vert\mu)$,
at iteration $k$ we modify the current transport map $T^{(k)}$ by
a small perturbation to obtain the new map $T^{(k+1)}=T^{(k)}-\alpha_{k}M^{(k)}$,
where $\alpha_{k}>0$ is the step size, and $M^{(k)}=(\delta\mathcal{G}/\delta T)|_{T=T^{(k)}}$
is the functional derivative of $\mathcal{G}(T)$ evaluated at $T^{(k)}$.
In Proposition \ref{prop:vector_field} we show that under mild smoothness
assumptions, $\delta\mathcal{G}/\delta T$ exists and has a simple
form.
\begin{prop}
\label{prop:vector_field}Let $p(x)$ and $q(x)$ be the density functions
of $\mu$ and $T_{\sharp}\mu_{0}$, respectively, and denote by $\Vert\cdot\Vert_{\mathrm{op}}$
the operator norm of a matrix. For any $T,\Psi\in\mathbb{H}$, if
(a) $\Vert\nabla(\Psi\circ T^{-1})(x)\Vert_{\mathrm{op}}\le C_{\Psi,T}$
and $\Vert\nabla^{2}\log p(x)\Vert_{\mathrm{op}}\le c$ for some constants
$C_{\Psi,T},c>0$ not depending on $x$; (b) $[\nabla(\log p-\log q)]\circ T\in\mathbb{H}$;
and (c) $\lim_{\Vert x\Vert\rightarrow\infty}\Vert q(x)\Psi(T^{-1}(x))\Vert=0$,
then
\[
\left.\frac{\mathrm{d}}{\mathrm{d}\varepsilon}\mathcal{G}(T+\varepsilon\Psi)\right|_{\varepsilon=0}=\left\langle \frac{\delta\mathcal{G}}{\delta T},\Psi\right\rangle _{\mathbb{H}},
\]
where $\delta\mathcal{G}/\delta T$ is the functional derivative of
$\mathcal{G}(T)$ with respect to $T$, given by $(\delta\mathcal{G}/\delta T)(x)=M(T(x))$,
$M(x)=\nabla\left[\log q(x)-\log p(x)\right]$.
\end{prop}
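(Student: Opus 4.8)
The plan is to reduce the functional derivative to an ordinary derivative of the explicit integral in (\ref{eq:energy_optimization}), and then to convert the Jacobian-derivative term into the $\nabla\log q$ term by a change of variables followed by integration by parts.

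First I would rewrite $\mathcal{G}$ in a form suited to differentiation. Because $\det\nabla T>0$, the change-of-variables formula gives $q(T(z))\det\nabla T(z)=p_{0}(z)$, hence $\log q(T(z))=\log p_{0}(z)-\log\det\nabla T(z)$, and with $E=-\log p+C$,
\[
\mathcal{G}(T)=\int p_{0}(z)\bigl[\log q(T(z))-\log p(T(z))\bigr]\mathrm{d}z=\int p_{0}(z)\bigl[E(T(z))-\log\det\nabla T(z)\bigr]\mathrm{d}z+c_{0},
\]
with $c_{0}=\int p_{0}\log p_{0}-C$ a constant free of $T$. Writing $T_{\varepsilon}=T+\varepsilon\Psi$ I would differentiate under the integral at $\varepsilon=0$. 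For the energy term $\partial_{\varepsilon}E(T_{\varepsilon}(z))=\langle\nabla E(T_{\varepsilon}(z)),\Psi(z)\rangle$, and the Hessian bound $\Vert\nabla^{2}\log p\Vert_{\mathrm{op}}\le c$ from (a), through $\Vert\nabla E(T_{\varepsilon}(z))-\nabla E(T(z))\Vert\le c|\varepsilon|\Vert\Psi(z)\Vert$, produces an $\varepsilon$-uniform majorant whose $\mu_{0}$-integrability reduces by Cauchy--Schwarz to $\Psi\in\mathbb{H}$ and $\nabla\log p\circ T\in\mathbb{H}$. For the Jacobian term, Jacobi's formula gives $\partial_{\varepsilon}\log\det\nabla T_{\varepsilon}(z)\big|_{\varepsilon=0}=\mathrm{tr}\bigl((\nabla T(z))^{-1}\nabla\Psi(z)\bigr)$, which is dominated uniformly near $\varepsilon=0$ by the bound in (a). Dominated convergence then yields
\[
\left.\frac{\mathrm{d}}{\mathrm{d}\varepsilon}\mathcal{G}(T_{\varepsilon})\right|_{\varepsilon=0}=\int p_{0}(z)\Bigl[-\langle\nabla\log p(T(z)),\Psi(z)\rangle-\mathrm{tr}\bigl((\nabla T(z))^{-1}\nabla\Psi(z)\bigr)\Bigr]\mathrm{d}z.
\]

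The core identity to prove is $-\int p_{0}(z)\,\mathrm{tr}\bigl((\nabla T(z))^{-1}\nabla\Psi(z)\bigr)\mathrm{d}z=\int p_{0}(z)\langle\nabla\log q(T(z)),\Psi(z)\rangle\mathrm{d}z$. I would substitute $x=T(z)$, under which $p_{0}(z)\,\mathrm{d}z$ becomes $q(x)\,\mathrm{d}x$ while the chain rule gives $\nabla_{x}(\Psi\circ T^{-1})(x)=\nabla\Psi(z)(\nabla T(z))^{-1}$, so the trace equals the Euclidean divergence $\nabla\cdot(\Psi\circ T^{-1})(x)$ and the left-hand side is $-\int q(x)\,\nabla\cdot(\Psi\circ T^{-1})(x)\,\mathrm{d}x$. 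Integration by parts over $\mathbb{R}^{d}$ --- whose boundary term at infinity vanishes by assumption (c), $\Vert q(x)\Psi(T^{-1}(x))\Vert\to0$ --- turns this into $\int\nabla q(x)\cdot(\Psi\circ T^{-1})(x)\,\mathrm{d}x=\int q(x)\langle\nabla\log q(x),\Psi(T^{-1}(x))\rangle\mathrm{d}x$, and reversing the substitution gives the identity. Inserting it into the previous display collapses the two terms into $\int p_{0}(z)\langle\nabla[\log q-\log p](T(z)),\Psi(z)\rangle\mathrm{d}z=\langle M\circ T,\Psi\rangle_{\mathbb{H}}$ with $M=\nabla(\log q-\log p)$; assumption (b) is exactly the requirement $M\circ T\in\mathbb{H}$, so this is a legitimate inner product and, by the definition of $\delta\mathcal{G}/\delta T$, we conclude $(\delta\mathcal{G}/\delta T)(x)=M(T(x))$ as claimed.

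I expect the main obstacle to be the rigorous justification of the interchange of $\mathrm{d}/\mathrm{d}\varepsilon$ and $\int$ in the first step --- concretely, exhibiting a genuine $\varepsilon$-uniform integrable majorant for $\langle\nabla E(T_{\varepsilon}(z)),\Psi(z)\rangle$. This is where the three hypotheses are used together: (a) and $\Psi\in\mathbb{H}$ control the $\varepsilon$-dependence, while the $\varepsilon=0$ part needs $\nabla\log p\circ T\in\mathbb{H}$, which one obtains from (b) combined with the representation of $\nabla\log q\circ T$ obtained by differentiating $\log q\circ T=\log p_{0}-\log\det\nabla T$ (so that $\nabla\log q(T(z))=((\nabla T(z))^{\top})^{-1}[\nabla\log p_{0}(z)-\nabla_{z}\log\det\nabla T(z)]$). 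By contrast, Jacobi's formula and the integration-by-parts step are routine once (a) and (c) are available.
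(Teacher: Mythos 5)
Your argument is correct and takes essentially the same route as the paper's: both hinge on Jacobi's formula for the $\log\det$ term, integration by parts with assumption (c) to produce $\nabla\log q\circ T$, a Taylor expansion with the Hessian bound from (a) for the $\log p$ term, and assumption (a) again to justify interchanging $\mathrm{d}/\mathrm{d}\varepsilon$ and $\int$. The only organizational difference is that you differentiate the reformulated objective from equation (\ref{eq:energy_optimization}) directly in the $z$ variable and change variables to $x=T(z)$, whereas the paper decomposes $\mathcal{G}=\mathcal{G}_1-\mathcal{G}_2$ from the raw KL definition and reaches the same $\log\det$ term for $\mathcal{G}_1$ via the composition $A\circ T=T+\varepsilon\Psi$ with $A=I+\varepsilon\Psi\circ T^{-1}$, verifying the $\varepsilon$-uniform domination with explicit singular-value bounds where you invoke it more briefly.
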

In Proposition \ref{prop:vector_field}, assumption (a) is used to
control the smoothness of mappings and energy functions, assumption
(b) guarantees that the computed derivative is well-defined, and assumption
(c) basically means that the density of $q$ vanishes at infinity.
They do not impose real restrictions on the form of the objects we
would analyze below.

To gain more insights about the optimization algorithm and to avoid
unnecessary technicalities, we consider the case of an infinitesimal
step size, and then the optimization process becomes a continuous-time
dynamics. Let $T^{(t)}$ be the transport map at time $t$, and then
$T^{(t)}$ evolves according to a differential equation,
\begin{equation}
\frac{\partial}{\partial t}T^{(t)}(x)=-\left(\left.\frac{\delta\mathcal{G}}{\delta T}\right|_{T=T^{(t)}}\right)(x)=\mathbf{u}_{t}(T^{(t)}(x)),\quad\mathbf{u}_{t}(x)=-\nabla\left[\log p_{t}(x)-\log p(x)\right],\label{eq:ode_map}
\end{equation}
where $p_{t}(x)$ is the density function of $T_{\sharp}^{(t)}\mu_{0}$,
and the whole process $T^{(t)}$ is typically called the \emph{gradient
flow} with respect to the functional $\mathcal{G}$.

However, the vector field $\mathbf{u}_{t}$ depends on $T^{(t)}$
indirectly via $p_{t}(x)$, making the convergence property of $T^{(t)}$
hard to analyze. Instead, we introduce the notion of Wasserstein gradient
flow in the space of probability measures. At a high level, instead
of studying the transport map $T$, we view (\ref{eq:map_optimization})
as an optimization problem with respect to the probability measure
$T_{\sharp}\mu_{0}$, and a differential equation comparable to (\ref{eq:ode_map})
is set up for this measure.

Denote by $\mathscr{P}(\mathbb{R}^{d})$ the space of Borel probability
measures on $\mathbb{R}^{d}$ with finite second moment. Theorem 8.3.1
of \citet{ambrosio2008gradient} shows that any absolutely continuous
curve $\mu_{t}:(0,T)\rightarrow\mathscr{P}(\mathbb{R}^{d})$ is a
solution to the continuity equation
\begin{equation}
\frac{\partial}{\partial t}\mu_{t}+\nabla\cdot(\mathbf{v}_{t}\mu_{t})=0\label{eq:continuity_eq}
\end{equation}
for some vector field $\mathbf{v}_{t}:\mathbb{R}^{d}\rightarrow\mathbb{R}^{d}$.
(\ref{eq:continuity_eq}) holds in the sense of distributions, \emph{i.e.},
\[
\int_{0}^{T}\int_{\mathbb{R}^{d}}\left(\partial_{t}\varphi(x,t)+\langle\mathbf{v}_{t}(x),\nabla_{x}\varphi(x,t)\rangle\right)\mathrm{d}\mu_{t}\mathrm{d}t=0,\quad\forall\varphi\in C^{\infty}(\mathbb{R}^{d}\times(0,T)),
\]
where $C^{\infty}$ is the class of infinitely differentiable functions.
Under some mild conditions, the curve $\mu_{t}$ is uniquely determined
by the vector field $\mathbf{v}_{t}$, and vice versa (\citealp{ambrosio2008gradient},
Proposition 8.1.8, Theorem 8.3.1).

Let $\mathcal{F}:\mathscr{P}(\mathbb{R}^{d})\rightarrow\mathbb{R}$
be a functional of the form $\mathcal{F}(\rho)=\int F(x,\rho(x),\nabla\rho(x))\mathrm{d}x$,
where $\rho$ is a probability measure on $\mathbb{R}^{d}$ equipped
with a differentiable density function $\rho(x)$, and $F$ is a smooth
function of its three arguments. Define the first variation of $\mathcal{F}$
at $\rho$, $(\mathrm{\delta}\mathcal{F}/\delta\nu)(\rho)$, as the
function such that $\left.\frac{\mathrm{d}}{\mathrm{d}\varepsilon}\mathcal{F}(\rho+\varepsilon\psi)\right|_{\varepsilon=0}=\int(\mathrm{\delta}\mathcal{F}/\delta\nu)(\rho)\mathrm{d}\psi$
for any perturbation $\psi$ \citep{santambrogio2017euclidean}. Then
$\mu_{t}$ is called the \emph{Wasserstein gradient flow} with respect
to the functional $\mathcal{F}$ if
\begin{equation}
\mathbf{v}_{t}=-\nabla\left(\frac{\delta\mathcal{F}}{\delta\nu}(\mu_{t})\right).\label{eq:velocity_field}
\end{equation}
To summarize, a Wasserstein gradient flow $\mu_{t}$ characterizes
the evolution of a probability measure over time by the continuity
equation (\ref{eq:continuity_eq}), whose vector field $\mathbf{v}_{t}$
is determined by a functional $\mathcal{F}$ based on equation (\ref{eq:velocity_field}).

For the measure transport problem (\ref{eq:map_optimization}), $\mathcal{F}(\cdot)=\mathcal{F}_{\mathrm{KL}}(\cdot)\coloneqq\mathrm{KL}(\cdot\Vert\mu)$.
Then Lemma 2.1 of \citet{gao2019deep} shows that for $\mathcal{F}_{\mathrm{KL}}$,
$\mathbf{v}_{t}(x)=-\nabla\left[\log p_{t}(x)-\log p(x)\right]$,
where $p_{t}(x)$ is the density function of $\mu_{t}$. It is interesting
to note that this $\mathbf{v}_{t}$ is exactly the same as $\mathbf{u}_{t}$
in (\ref{eq:ode_map}). In fact, the Wasserstein gradient flow theory
indicates that under suitable regularity conditions on $\mathbf{v}_{t}$,
there is an equivalence between the gradient flow of the transport
map $T^{(t)}$ and the Wasserstein gradient flow of the probability
measure $\mu_{t}$, in the sense that $T^{(t)}$ and $\mu_{t}$ are
connected by the relation $\mu_{t}=T_{\sharp}^{(t)}\mu_{0}$ (\citealp{ambrosio2008gradient},
Lemma 8.1.4, Proposition 8.1.8). Therefore, the behavior of $T^{(t)}$
can be understood by studying $\mu_{t}$, and vice versa. In this
paper, we focus on the latter, since it provides much convenience
in analyzing the convergence behavior of optimization algorithms.

\subsection{When the neural transport sampler fails}

\label{subsec:diagnosis}

The neural transport sampler in Algorithm \ref{alg:kl_sampler} uses
the discrete-time gradient descent method to update the sampler distribution
$\{T^{(k)}\mu_{0}\}$, which can be viewed as the discretized version
of the Wasserstein gradient flow $\mu_{t}$ with respect to the objective
functional $\mathcal{F}=\mathcal{F}_{\mathrm{KL}}$. For theoretical
analysis, we focus on continuous $\mu_{t}$, although it has to be
discretized in practical implementation. We also call $\mu_{t}$ the
sampler distribution at time $t$ for convenience.

An important fact about Wasserstein gradient flows is that the curve
$\mu_{t}$ governed by the vector field in (\ref{eq:velocity_field})
decreases the functional $\mathcal{F}$ over time (\citealp{ambrosio2008gradient},
p.233):
\begin{equation}
\frac{\mathrm{d}}{\mathrm{d}t}\mathcal{F}(\mu_{t})=-\int p_{t}(x)\Vert\mathbf{v}_{t}(x)\Vert^{2}\mathrm{d}x\le0,\label{eq:kl_derivative}
\end{equation}
implying that $\mu_{t}$ eventually converges if $\mathcal{F}\ge0$.
However, its convergence speed is significantly affected by the shape
of the target density $p(x)$. Recall that in Figure \ref{fig:demo_uni_multi_modal},
the sampler distribution accurately estimates the target distribution
$p_{u}(x)$ within a few hundreds of iterations. In fact, this good
property applies to all log-concave densities, and the rate of convergence
can be quantified. Let $\chi^{2}(\nu\Vert\mu)=\int(\mathrm{d}\nu/\mathrm{d}\mu)^{2}\mathrm{d}\mu-1$
and $H^{2}(\mu,\nu)=\frac{1}{2}\int(\sqrt{\mathrm{d}\mu/\mathrm{d}\lambda}-\sqrt{\mathrm{d}\nu/\mathrm{d}\lambda})^{2}\mathrm{d}\lambda$
be the $\chi^{2}$ divergence and the squared Hellinger distance between
$\mu$ and $\nu$, respectively. Then we have the following results.
\begin{thm}
\label{thm:kl_decay}Assume $p(x)$ is log-concave, i.e., $\log p(x)$
is concave. Let $\Sigma$ be the covariance matrix of $p(x)$, and
denote by $\sigma^{2}$ the largest eigenvalue of $\Sigma$. Then
for any $t>0$:
\end{thm}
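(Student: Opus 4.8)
The plan is to exploit the fact that $\mu_{t}$ is the Wasserstein gradient flow of $\mathcal{F}_{\mathrm{KL}}(\cdot)=\mathrm{KL}(\cdot\Vert\mu)$ --- equivalently, the Fokker--Planck (overdamped Langevin) evolution toward $\mu$ --- and that the hypothesis ``$p$ log-concave'' is precisely the statement that $\mathcal{F}_{\mathrm{KL}}$ is geodesically convex on $\mathscr{P}(\mathbb{R}^{d})$. Writing $\mathrm{KL}(\rho\Vert\mu)=\int\rho(x)\log\rho(x)\,\mathrm{d}x+\int E\,\mathrm{d}\rho+C$, the entropy term is always displacement convex, while the potential term $\int E\,\mathrm{d}\rho$ is displacement $\lambda$-convex with $\lambda=\inf_{x}\lambda_{\min}\!\left(\nabla^{2}E(x)\right)=\inf_{x}\lambda_{\min}\!\left(-\nabla^{2}\log p(x)\right)\ge0$. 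Hence the gradient flow obeys the evolution variational inequality with parameter $0$; evaluating it at the minimizer $\mu$, at which $\mathcal{F}_{\mathrm{KL}}(\mu)=0$, gives, with $W_{2}$ the quadratic Wasserstein distance,
\[
\frac{\mathrm{d}}{\mathrm{d}t}\left[\tfrac{1}{2}W_{2}^{2}(\mu_{t},\mu)\right]\;\le\;\mathcal{F}_{\mathrm{KL}}(\mu)-\mathcal{F}_{\mathrm{KL}}(\mu_{t})\;=\;-\,\mathrm{KL}(\mu_{t}\Vert\mu),
\]
which, in particular, also yields the contraction $W_{2}(\mu_{t},\mu)\le W_{2}(\mu_{0},\mu)$ for every $t\ge0$.

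Integrating the displayed inequality over $[0,t]$ gives $\int_{0}^{t}\mathrm{KL}(\mu_{s}\Vert\mu)\,\mathrm{d}s\le\tfrac{1}{2}W_{2}^{2}(\mu_{0},\mu)$; combined with the dissipation identity (\ref{eq:kl_derivative}), which says that $s\mapsto\mathrm{KL}(\mu_{s}\Vert\mu)$ is non-increasing, this forces $t\cdot\mathrm{KL}(\mu_{t}\Vert\mu)\le\int_{0}^{t}\mathrm{KL}(\mu_{s}\Vert\mu)\,\mathrm{d}s$, and therefore $\mathrm{KL}(\mu_{t}\Vert\mu)\le W_{2}^{2}(\mu_{0},\mu)/(2t)$. (Alternatively, one can bypass the EVI and apply the HWI inequality with zero curvature, $\mathrm{KL}(\mu_{t}\Vert\mu)\le W_{2}(\mu_{t},\mu)\sqrt{I(\mu_{t}\Vert\mu)}$, together with $W_{2}(\mu_{t},\mu)\le W_{2}(\mu_{0},\mu)$ and $I(\mu_{t}\Vert\mu)=\int p_{t}\Vert\mathbf{v}_{t}\Vert^{2}\,\mathrm{d}x=-\tfrac{\mathrm{d}}{\mathrm{d}t}\mathrm{KL}(\mu_{t}\Vert\mu)$; the resulting differential inequality $\tfrac{\mathrm{d}}{\mathrm{d}t}\,\mathrm{KL}(\mu_{t}\Vert\mu)^{-1}\ge W_{2}^{-2}(\mu_{0},\mu)$ integrates to the same $O(1/t)$ rate.) To exhibit the $\sigma^{2}$-dependence, bound $W_{2}^{2}(\mu_{0},\mu)$ by the cost of the independent coupling of $\mu_{0}=N(0,I_{d})$ with $\mu$: if $m=\mathbb{E}_{\mu}[X]$ then $W_{2}^{2}(\mu_{0},\mu)\le d+\mathrm{tr}(\Sigma)+\Vert m\Vert^{2}\le d(1+\sigma^{2})+\Vert m\Vert^{2}$. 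The Hellinger part then follows from $2H^{2}(\mu_{t},\mu)\le\mathrm{KL}(\mu_{t}\Vert\mu)$; if an exponential $\chi^{2}$ bound is also claimed, it comes instead from the self-adjoint $L^{2}(\mu)$ structure of the Fokker--Planck generator, namely $\tfrac{\mathrm{d}}{\mathrm{d}t}\chi^{2}(\mu_{t}\Vert\mu)=-2\int\Vert\nabla(p_{t}/p)\Vert^{2}\,\mathrm{d}\mu\le-2C_{P}^{-1}\chi^{2}(\mu_{t}\Vert\mu)$, using the Poincar\'{e} inequality that every log-concave measure satisfies (with $C_{P}$ its Poincar\'{e} constant).

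The main obstacle is not the above computation, which is essentially a one-line ODE comparison once the pieces are assembled, but the rigorous justification of the EVI$_{0}$ inequality --- equivalently, of the displacement convexity together with the regularity needed to identify $\mu_{t}$ with the metric gradient flow in the sense of \citet{ambrosio2008gradient} --- and the verification that the curve initialized at $\mu_{0}=N(0,I_{d})$ remains in $\mathscr{P}(\mathbb{R}^{d})$ with finite entropy and finite relative Fisher information for $t>0$, so that both the dissipation identity (\ref{eq:kl_derivative}) and the differentiation of $t\mapsto W_{2}^{2}(\mu_{t},\mu)$ are legitimate. The sign condition $\lambda\ge0$ is exactly what log-concavity provides; note also that the $W_{2}$-contraction used in the HWI variant is itself a consequence of EVI$_{0}$, so the two routes share the same technical core rather than being genuinely independent.
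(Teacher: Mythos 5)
Your main route does not prove the theorem. The theorem's first part asserts an \emph{exponential} decay, $\mathcal{F}_{\mathrm{KL}}(\mu_t)\le e^{-t/(2\sigma^2C(d))}\chi^2(\mu_0\Vert\mu)$, whereas your EVI$_0$/HWI arguments under mere geodesic $0$-convexity yield only the $O(1/t)$ rate $\mathrm{KL}(\mu_t\Vert\mu)\le W_2^2(\mu_0,\mu)/(2t)$. These are genuinely different conclusions: log-concavity (as opposed to strong log-concavity) does not give displacement convexity with a positive modulus, and the $O(1/t)$ bound is what zero convexity buys you. To obtain exponential decay the paper instead invokes a Poincar\'e inequality, which every log-concave measure satisfies, but the crucial quantitative ingredient is the isoperimetric bound of Chen (2021) that makes the Poincar\'e constant $C_P\lesssim\sigma^2 C(d)$ explicit; without that, the $\sigma^2 C(d)$ dependence in the theorem is unreachable. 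Your parenthetical Poincar\'e remark goes in the right direction, but it produces an exponential bound for $\chi^2(\mu_t\Vert\mu)$, whereas the theorem bounds $\mathrm{KL}(\mu_t\Vert\mu)$ by the \emph{initial} $\chi^2$ divergence; that mixed statement is exactly Corollary 4 of Chewi et al.\ (2020), which the paper cites and which is not an immediate corollary of the $\chi^2$ contraction you wrote down. You also never bound $W_2^2(\mu_0,\mu)$ by $\chi^2(\mu_0\Vert\mu)$, so even your $O(1/t)$ bound has the wrong right-hand side.

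The second part of the theorem is misread. It is not the Le Cam inequality $2H^2\le\mathrm{KL}$; it is a lower bound on the dissipation rate, $|\mathrm{d}\mathcal{F}_{\mathrm{KL}}(\mu_t)/\mathrm{d}t|\ge[\sigma^2C(d)]^{-1}\,[2-H^2(\mu_t,\mu)]\,H^2(\mu_t,\mu)$, i.e.\ the non-vanishing-gradient property that drives the whole discussion in Section~\ref{subsec:diagnosis}. The paper proves it by writing the Fisher information dissipation $\int\Vert\nabla\log(\mathrm{d}\mu_t/\mathrm{d}\mu)\Vert^2\mathrm{d}\mu_t=4\int\Vert\nabla f\Vert^2\mathrm{d}\mu$ with $f=\sqrt{\mathrm{d}\mu_t/\mathrm{d}\mu}$, applying the Poincar\'e inequality (\ref{eq:poincare_ineq}) to this $f$, and then identifying the resulting variance with a function of the Hellinger affinity. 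None of that appears in your proposal, and the inequality $2H^2\le\mathrm{KL}$ runs in the wrong direction to produce it.
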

\begin{enumerate}
\item $\mathcal{F}_{\mathrm{KL}}(\mu_{t})\le e^{-t/(2\sigma^{2}C(d))}\chi^{2}(\mu_{0}\Vert\mu)$,
where $C(d)=\exp\left\{ c\sqrt{l_{d}\cdot[\log(l_{d})+1]}\right\} $,
$l_{d}=\log(d)+1$, and $c>0$ is a universal constant.
\item ${\displaystyle \left|\frac{\mathrm{d}\mathcal{F}_{\mathrm{KL}}(\mu_{t})}{\mathrm{d}t}\right|}\ge[\sigma^{2}C(d)]^{-1}\cdot\left[2-H^{2}(\mu_{t},\mu)\right]\cdot H^{2}(\mu_{t},\mu)\ge[\sigma^{2}C(d)]^{-1}H^{2}(\mu_{t},\mu)$.
\end{enumerate}
The first part of Theorem \ref{thm:kl_decay} shows that for log-concave
densities, the difference between the sampler distribution and the
target distribution decays exponentially fast along the gradient flow,
which explains the rapid convergence of the neural transport sampler
for the unimodal distribution in Figure \ref{fig:demo_uni_multi_modal}.
The second part provides a more delicate interpretation of this phenomenon.
Specifically, it shows that the time derivative $\mathrm{d}\mathcal{F}_{\mathrm{KL}}(\mu_{t})/\mathrm{d}t$
does not vanish as long as $\mu_{t}$ is different from $\mu$, which
we call the \emph{non-vanishing gradient} property. Intuitively, it
means that the velocity field $\mathbf{v}_{t}$ unceasingly pushes
the current distribution $\mu_{t}$ closer to $\mu$, and the rate
of change, measured by $|\mathrm{d}\mathcal{F}_{\mathrm{KL}}(\mu_{t})/\mathrm{d}t|$,
would not be too small, unless $\mu_{t}$ is already close to $\mu$.
If the latter situation happens, then we have essentially achieved
the goal of sampling. As a consequence, for log-concave densities,
a small derivative in magnitude implies a small Hellinger distance
between $\mu_{t}$ and $\mu$.

However, when the target density contains multiple modes, the non-vanishing
gradient property may no longer hold. We consider a simple yet insightful
model to illustrate this issue. Let $h(x)=e^{-V(x)}$ be a base density
function, and define the target mixture density as $p(x)=\alpha h(x)+(1-\alpha)h(x-\mu)$,
where $\alpha\in(0,1)$ is the mixing probability, and $\Vert\mu\Vert$
quantifies the distance between the two components. Suppose that at
some time $t^{*}$, the sampler distribution $\mu_{t^{*}}$ also takes
the form of a mixture but with a different mixing probability $\gamma\neq\alpha$,
\emph{i.e.}, $p_{t^{*}}(x)=\gamma h(x)+(1-\gamma)h(x-\mu)$. Then
we will show that the gradient flow $\mu_{t}$ exhibits an almost
opposite property to the log-concave case, and the conclusion can
even be generalized to a much broader class of objective functionals.
Define a general $\phi$-divergence-based functional $\mathcal{F}_{\phi}(\cdot)=\mathcal{D}_{\phi}(\cdot\Vert\mu)$,
where $\mathcal{D}_{\phi}(\nu\Vert\mu)=\int\phi(\mathrm{d}\nu/\mathrm{d}\mu)\mathrm{d}\mu$,
and $\phi$ is a convex function with $\phi(1)=0$. It is easy to
show that KL divergence is a special case of $\phi$-divergence with
$\phi(x)=x\log(x)$. Then we have the following result.
\begin{thm}
\label{thm:f_divergence_grad}Let $X$ be a random vector following
$h(x)=e^{-V(x)}$, and define $Y_{\mu,1}=V(X-\mu)-V(X)$, $Y_{\mu,2}=V(X)-V(X+\mu)$.
Assume that (a) for any fixed $x\in\mathbb{R}^{d}$, $h(x-\mu)\rightarrow0$
as $\Vert\mu\Vert\rightarrow\infty$; (b) there exist constants $c_{1},C_{1},k\ge0$
such that $\Vert\nabla V(x)-\nabla V(y)\Vert\le C_{1}\Vert x-y\Vert^{k}$
for all $x,y\in\mathbb{R}^{d}$ satisfying $\Vert x-y\Vert\ge c_{1}$;
(c) there exists a constant $C_{2}>0$ such that $C_{1}\Vert\mu\Vert^{k}\cdot P(\Vert\mu\Vert^{-1}|Y_{\mu,i}|\le C_{2})\rightarrow0$
as $\Vert\mu\Vert\rightarrow\infty$, $i=1,2$. Then we have the following.
\end{thm}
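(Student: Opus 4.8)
The plan is to analyze the Wasserstein gradient flow of $\mathcal{F}_\phi$ exactly at the distinguished time $t^{*}$ where $p_{t^{*}}=\gamma h(\cdot)+(1-\gamma)h(\cdot-\mu)$ with $\gamma\neq\alpha$, and to prove two things: that the rate of decrease $|\mathrm{d}\mathcal{F}_\phi(\mu_t)/\mathrm{d}t|_{t=t^{*}}=\int p_{t^{*}}\Vert\mathbf{v}_{t^{*}}\Vert^{2}\mathrm{d}x$ tends to $0$ as the mode separation $\Vert\mu\Vert\to\infty$, and that the divergence $\mathcal{D}_\phi(\mu_{t^{*}}\Vert\mu)$ stays bounded away from $0$. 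Together these say the flow is essentially frozen while still far from the target --- the exact counterpart, for multimodal $p$, of the non-vanishing-gradient property in Theorem~\ref{thm:kl_decay}. The argument has three parts: (i) a closed form for $\mathbf{v}_{t^{*}}$; (ii) showing $\mathbf{v}_{t^{*}}$ is uniformly negligible once $\Vert\mu\Vert$ is large; and (iii) bounding $\mathcal{D}_\phi(\mu_{t^{*}}\Vert\mu)$ below.

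For (i), the first variation of $\mathcal{F}_\phi(\rho)=\int\phi(\rho/p)\,p\,\mathrm{d}x$ is $\phi'(\rho/p)$, so by \eqref{eq:velocity_field} $\mathbf{v}_t=-\nabla[\phi'(p_t/p)]=-\phi''(p_t/p)\,\nabla(p_t/p)$. Inserting the two mixtures, the $h(x)^{2}$ and $h(x-\mu)^{2}$ contributions in $\nabla(p_{t^{*}}/p)$ cancel, leaving, with $r(x)=p_{t^{*}}(x)/p(x)$,
\[
\mathbf{v}_{t^{*}}(x)=-\phi''(r(x))\,(\alpha-\gamma)\,\frac{h(x)\,h(x-\mu)}{p(x)^{2}}\,\bigl[\nabla V(x)-\nabla V(x-\mu)\bigr].
\]
Here $r(x)$ always lies in the fixed compact interval between $\gamma/\alpha$ and $(1-\gamma)/(1-\alpha)$, so $\phi''(r(x))$ is bounded uniformly in $x$ and $\mu$; moreover $p(x)\ge\min(\alpha,1-\alpha)\max\{h(x),h(x-\mu)\}$ while $h(x)h(x-\mu)/p(x)^{2}=t/(\alpha+(1-\alpha)t)^{2}\le\{4\alpha(1-\alpha)\}^{-1}$ with $t=h(x-\mu)/h(x)$. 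Using also $p_{t^{*}}\le\max\{\gamma/\alpha,(1-\gamma)/(1-\alpha)\}\,p$, these give the key bound
\[
p_{t^{*}}(x)\,\Vert\mathbf{v}_{t^{*}}(x)\Vert^{2}\le C\,\min\{h(x),h(x-\mu)\}\,\Vert\nabla V(x)-\nabla V(x-\mu)\Vert^{2},\qquad C=C(\phi,\alpha,\gamma).
\]
By \eqref{eq:kl_derivative} it therefore suffices, for the vanishing-gradient claim, to prove that $\int\min\{h(x),h(x-\mu)\}\,\Vert\nabla V(x)-\nabla V(x-\mu)\Vert^{2}\,\mathrm{d}x\to0$.

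For (ii), I split $\mathbb{R}^{d}$ along $\{V(x)\ge V(x-\mu)\}$, where $\min\{h(x),h(x-\mu)\}=h(x)$, and its complement, where the substitution $x\mapsto x+\mu$ converts the integral to $h$ weighted by $\mathbf{1}\{V(X)>V(X+\mu)\}$; the integral becomes $E_{X\sim h}[\mathbf{1}\{Y_{\mu,1}\le0\}\Vert\nabla V(X)-\nabla V(X-\mu)\Vert^{2}]+E_{X\sim h}[\mathbf{1}\{Y_{\mu,2}>0\}\Vert\nabla V(X)-\nabla V(X+\mu)\Vert^{2}]$. In each expectation I further partition according to whether $\Vert\mu\Vert^{-1}|Y_{\mu,i}|\le C_{2}$. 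On the ``tail'' events, a Markov bound on the likelihood ratio gives $P(Y_{\mu,1}<-C_{2}\Vert\mu\Vert)\le e^{-C_{2}\Vert\mu\Vert}$ and $P(Y_{\mu,2}>C_{2}\Vert\mu\Vert)\le e^{-C_{2}\Vert\mu\Vert}$, so these pieces are $O(\Vert\mu\Vert^{2k}e^{-C_{2}\Vert\mu\Vert})\to0$ by hypothesis~(b) (valid since $\Vert\mu\Vert\ge c_{1}$ eventually). On the ``transition'' events $\{|Y_{\mu,i}|\le C_{2}\Vert\mu\Vert\}$, hypothesis~(b) bounds $\Vert\nabla V(X)-\nabla V(X\mp\mu)\Vert$ by $C_{1}\Vert\mu\Vert^{k}$, reducing each piece to a constant times $C_{1}\Vert\mu\Vert^{k}\cdot P(\Vert\mu\Vert^{-1}|Y_{\mu,i}|\le C_{2})$ (up to a power of $\Vert\mu\Vert$), which vanishes by hypothesis~(c). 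For (iii), $\mathcal{D}_\phi(\mu_{t^{*}}\Vert\mu)=\int\phi(r(x))\,p(x)\,\mathrm{d}x$, and by hypothesis~(a) with dominated convergence $r(x)\to\gamma/\alpha$ near the first mode and $r(x)\to(1-\gamma)/(1-\alpha)$ near the second, so $\mathcal{D}_\phi(\mu_{t^{*}}\Vert\mu)\to\alpha\,\phi(\gamma/\alpha)+(1-\alpha)\,\phi((1-\gamma)/(1-\alpha))>0$ by strict convexity of $\phi$ and $\phi(1)=0$, the inequality being strict because $\gamma\neq\alpha$.

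The main obstacle is the transition region $\{|Y_{\mu,i}|\le C_{2}\Vert\mu\Vert\}$: there the density-ratio factor in $\mathbf{v}_{t^{*}}$ no longer supplies exponential smallness, and $\Vert\nabla V(x)-\nabla V(x-\mu)\Vert$ genuinely grows like $\Vert\mu\Vert^{k}$, so one must beat a polynomial-in-$\Vert\mu\Vert$ growth against the probability that $X$ falls into the narrow ``bridge'' between the modes --- precisely what hypothesis~(c) is engineered to deliver --- and matching the powers of $\Vert\mu\Vert$ with the exact form of (c) is the delicate point. Everything else --- the cancellation in the velocity formula, the scalar inequality for $t/(\alpha+(1-\alpha)t)^{2}$, the Markov/likelihood-ratio tail bounds, and the dominated-convergence limit of $\mathcal{D}_\phi$ --- is routine.
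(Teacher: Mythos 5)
Your approach is essentially identical to the paper's. Both derive $\mathbf{v}_{t^{*}}=-\phi''(q/p)\nabla(q/p)$ from the first variation, compute the same closed form $(\alpha-\gamma)\,h(x)h(x-\mu)p(x)^{-2}\,[\nabla V(x)-\nabla V(x-\mu)]$ for $\nabla(q/p)$, bound $\phi''$ and the density-ratio prefactors uniformly, and then reduce to a tail/transition split on $Y_{\mu,i}$ using the exponential Markov (likelihood-ratio) bound for the tail and hypothesis (c) for the transition; the paper's $e^{-|\delta(x,\mu)|}$ weight and your $\min\{h(x),h(x-\mu)\}$ encode the same thing, and your substitution $x\mapsto x+\mu$ is the change of variable the paper performs implicitly when writing $\int q\,e^{-|\delta|}=\gamma\mathbb{E}e^{-|Y_{\mu,1}|}+(1-\gamma)\mathbb{E}e^{-|Y_{\mu,2}|}$. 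Part~1 is verbatim the same dominated-convergence argument.

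The one thing worth flagging is the "delicate point" you raise yourself: after squaring the velocity bound, the transition-region contribution comes with $C_1^2\Vert\mu\Vert^{2k}$ (not $C_1\Vert\mu\Vert^{k}$) against $P(\Vert\mu\Vert^{-1}|Y_{\mu,i}|\le C_2)$, whereas hypothesis~(c) as stated only controls the first power; the paper's displayed chain $\int q\Vert\mathbf{v}_{t^{*}}\Vert^{2}\le\frac{C_1D_2|\alpha-\gamma|}{c^{2}}\Vert\mu\Vert^{k}\int q\,e^{-|\delta|}$ also quietly drops a factor of $\Vert\mu\Vert^{k}$ relative to what squaring its own bound on $\Vert\mathbf{v}_{t^{*}}\Vert$ would give. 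So you have identified a genuine thin spot, but it is a thin spot shared with the paper rather than a defect of your route: the clean fix (reading~(c) with the squared rate, or working with $\Vert\mathbf{v}\Vert$ via Cauchy--Schwarz against $\int q\,e^{-|\delta|}$) applies equally to both.
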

\begin{enumerate}
\item ${\displaystyle \lim_{\Vert\mu\Vert\rightarrow\infty}\mathcal{F}_{\phi}(\mu_{t^{*}})=\alpha\phi\left(\frac{\gamma}{\alpha}\right)+(1-\alpha)\phi\left(\frac{1-\gamma}{1-\alpha}\right)}>0$.
\item ${\displaystyle \lim_{\Vert\mu\Vert\rightarrow\infty}\left.\frac{\mathrm{d}\mathcal{F}_{\phi}(\mu_{t})}{\mathrm{d}t}\right|_{t=t^{*}}=0}$.
\end{enumerate}
Theorem \ref{thm:f_divergence_grad} indicates that when the two components
of $p(x)$ are distantly isolated, there exists a configuration of
$\mu_{t}$ such that it is different from the target distribution,
but meanwhile the gradient vanishes. Since the gradient characterizes
how fast the current objective functional is going to decrease at
an instant, the vanishing gradient implies an extremely slow convergence
speed. Theorem \ref{thm:f_divergence_grad} also explains the multimodal
example in Figure \ref{fig:demo_uni_multi_modal}. In the 100th iteration,
the sampler distribution concentrates on the first mode, leading to
a mixture distribution $p_{t^{*}}(x)$ with $\gamma\approx1$, and
hence Theorem \ref{thm:f_divergence_grad} holds. Furthermore, since
the result applies to a general $\phi$-divergence, the intrinsic
limitation of the neural transport sampler cannot be easily fixed
by simply changing the divergence measure.

In summary, Theorem \ref{thm:kl_decay} and Theorem \ref{thm:f_divergence_grad}
show that the shape of the target density greatly affects the convergence
speed of the neural transport sampler. Unfortunately, most target
densities in real-world problems are not log-concave, so the neural
transport sampler needs to be modified to adapt to more sophisticated
sampling problems.

\section{The TemperFlow Sampler}

\label{sec:temperflow}

\subsection{Overview}

\label{subsec:temperflow_overview}

The analysis in Section \ref{subsec:diagnosis} indicates that the
Wasserstein gradient flow $\mu_{t}$ with respect to the $\phi$-divergence
$\mathcal{D}_{\phi}(\cdot\Vert\mu)$ is inefficient to approximate
the target distribution $\mu$ when $p(x)$ is multimodal, due to
the phenomenon of vanishing gradients. Therefore, it is crucial to
find an alternative curve that quickly converges to $\mu$. Inspired
by the simulated annealing methods in optimization and MCMC, consider
the tempered density function of $\mu$, defined by
\begin{equation}
q_{\beta_{t}}(x)=\frac{1}{Z(\beta_{t})}e^{-\beta_{t}E(x)},\label{eq:tempering_curve}
\end{equation}
where $\beta_{t}$ is an increasing and continuous function of $t$
satisfying $\beta_{0}>0$ and $\lim_{t\rightarrow\infty}\beta_{t}=1$,
and $Z(\beta_{t})$ is a normalizing constant to make $q_{\beta_{t}}(x)$
a density function. In the simulated annealing literature, $\beta_{t}$
is often called the inverse temperature parameter. Let $\rho_{t}$
be the curve of probability measure corresponding to $q_{\beta_{t}}(x)$,
and then it is easy to find that $\rho_{t}$ converges to $\mu$ as
$t\rightarrow\infty$, since $q_{\beta_{t}}(x)$ reduces to $p(x)$
if $\beta_{t}=1$.

The tempering curve $\rho_{t}$ has several benign properties. First,
for any $t>0$, $q_{\beta_{t}}(x)$ has the same local maxima as $p(x)$,
so all modes of $p(x)$ are preserved along the curve $\rho_{t}$.
This is important since the loss of modes is a common issue of many
existing sampling methods, as will be demonstrated in numerical experiments.
Second, for $t<s$, $q_{\beta_{t}}(x)$ is a ``flatter'' distribution
than $q_{\beta_{s}}(x)$ with better connectivity between modes, so
it is typically much easier to sample from an intermediate distribution
$q_{\beta_{t}}(x)$ than from $p(x)$ directly. Third, similar to
the Wasserstein gradient flow, the tempering curve $\rho_{t}$ also
decreases the KL divergence relative to the target distribution, as
shown in Theorem \ref{thm:monotonicity_kl}.
\begin{thm}
\label{thm:monotonicity_kl}Let $\rho_{t}$ be the probability measure
corresponding to $q_{\beta_{t}}(x)$ as defined in (\ref{eq:tempering_curve}),
and then $\mathrm{d}\mathcal{F}_{\mathrm{KL}}(\rho_{t})/\mathrm{d}t\le0$.
The equal sign holds if and only if $\rho_{t}=\mu$.
\end{thm}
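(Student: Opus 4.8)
The plan is to reparametrize everything through the single scalar $\beta$ and reduce the claim to a one-dimensional calculus fact about the log-partition function. Since $E(x)=-\log p(x)+C$, I would introduce $m(\beta)\coloneqq\mathbb{E}_{q_\beta}[E(X)]=\int E(x)q_\beta(x)\,\mathrm{d}x$ and $Z(\beta)=\int e^{-\beta E(x)}\,\mathrm{d}x$, and substitute the tempered-density formula into $\mathcal{F}_{\mathrm{KL}}(\rho_t)=\int q_{\beta_t}\log(q_{\beta_t}/p)\,\mathrm{d}x$. This gives the closed form $\mathcal{F}_{\mathrm{KL}}(\rho_t)=g(\beta_t)$ with $g(\beta)\coloneqq(1-\beta)\,m(\beta)-\log Z(\beta)-C$; as a sanity check $g(1)=-\log Z(1)-C=0$, consistent with $q_1=p$. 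By the chain rule $\mathrm{d}\mathcal{F}_{\mathrm{KL}}(\rho_t)/\mathrm{d}t=g'(\beta_t)\,\beta_t'$, so it suffices to control $g'(\beta)$ on the range of $\beta_t$, which lies in $(0,1]$.

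The second step is the computation of $g'$, which rests on two standard exponential-family identities obtained by differentiating under the integral sign: $\frac{\mathrm{d}}{\mathrm{d}\beta}\log Z(\beta)=-m(\beta)$ and $m'(\beta)=-\mathrm{Var}_{q_\beta}[E(X)]\le 0$. Plugging these into $g'(\beta)=-m(\beta)+(1-\beta)m'(\beta)-\frac{\mathrm{d}}{\mathrm{d}\beta}\log Z(\beta)$, the two $m(\beta)$ terms cancel and I am left with
\[
g'(\beta)=(1-\beta)\,m'(\beta)=-(1-\beta)\,\mathrm{Var}_{q_\beta}[E(X)],
\]
so that $\mathrm{d}\mathcal{F}_{\mathrm{KL}}(\rho_t)/\mathrm{d}t=-(1-\beta_t)\,\mathrm{Var}_{q_{\beta_t}}[E(X)]\,\beta_t'$.

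The third step is to read off the sign and the equality case. Because $\beta_t$ is increasing with $\lim_{t\to\infty}\beta_t=1$, we have $0<\beta_t\le 1$ and $\beta_t'\ge 0$, and combined with nonnegativity of the variance this forces $\mathrm{d}\mathcal{F}_{\mathrm{KL}}(\rho_t)/\mathrm{d}t\le 0$. For the equality characterization I would note that $E$ cannot be $\lambda$-a.e.\ constant (else $e^{-\beta E}$ would not be integrable on $\mathbb{R}^d$) and $q_{\beta_t}$ is strictly positive everywhere, whence $\mathrm{Var}_{q_{\beta_t}}[E(X)]>0$ for every $t$; assuming, as is implicit in the statement, that $\beta_t$ is strictly increasing with $\beta_t'>0$, the derivative vanishes if and only if $1-\beta_t=0$, i.e.\ $\beta_t=1$, i.e.\ $q_{\beta_t}=p$, i.e.\ $\rho_t=\mu$.

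The only nontrivial point—and the one I expect to require the most care—is justifying the two differentiation-under-the-integral identities, i.e.\ that $\int |E(x)|^{j}e^{-\beta E(x)}\,\mathrm{d}x<\infty$ for $j=1,2$ and locally uniformly in $\beta$ on $(0,1]$; this is where a mild growth/tail assumption on $E$ enters, and it holds whenever the tempered densities $q_\beta$ are genuine probability densities with finite first and second moments of $E$, which covers essentially all target densities of practical interest. Everything else is a one-line calculus argument, so beyond stating this regularity cleanly I anticipate no real obstacle.
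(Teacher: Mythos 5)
Your proof is correct and follows essentially the same route as the paper's: both reduce $\mathcal{F}_{\mathrm{KL}}(\rho_t)$ to a scalar function of $\beta$, differentiate in $\beta$, and identify the derivative as $-(1-\beta)\operatorname{Var}_{q_\beta}[E]$, concluding by non-negativity of the variance and non-constancy of $E$. The only cosmetic differences are that you invoke the packaged exponential-family identities $(\log Z)'=-m$ and $m'=-\operatorname{Var}_{q_\beta}[E]$ rather than expanding $\partial q_\beta/\partial\beta$ by hand, and that you explicitly flag the differentiation-under-the-integral regularity and the implicit assumption $\beta_t'>0$ needed for the equality characterization, both of which the paper treats silently.
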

However, tracing the exact path of $\rho_{t}$ is intractable, so
a more realistic approach is to let the sampler learn a sequence of
distributions $\{r_{k}\}$ extracted from the tempering curve $\rho_{t}$.
This is done by discretizing $\beta_{t}$ into a sequence $\{\beta_{k}\}$,
and defining
\begin{equation}
r_{k}(x)=q_{\beta_{k}}(x),\ k=0,1,\ldots,K,\ \beta_{0}<\beta_{1}<\cdots<\beta_{K}=1.\label{eq:tempering_discretize}
\end{equation}
Then we seek transport maps to connect each $r_{k-1}$ with $r_{k}$.
We call the sequence $\{r_{k}\}$ a tempered distribution flow for
$p(x)$ to reflect such an idea, and name our proposed algorithm the
TemperFlow sampler for brevity. The core idea of TemperFlow is to
learn a sequence of transport maps $T^{(k)},k=0,1,\ldots,K$ such
that $T_{\sharp}^{(k)}\mu_{0}\approx r_{k}$, where each $T^{(k)}$
is optimized using $T^{(k-1)}$ as a warm start. The outline of the
TemperFlow sampler is given in Algorithm \ref{alg:temperflow}, in
which the subroutine \texttt{kl\_sampler} applies Algorithm \ref{alg:kl_sampler}
to the tempered energy function $\beta_{0}E(x)$, and \texttt{l2\_sampler}
and \texttt{estimate\_beta} will be introduced in subsequent sections.

\begin{algorithm}[h]
\caption{\label{alg:temperflow}The outline of the TemperFlow sampler.}


\begin{algorithmic}[1]

\REQUIRE Target distribution $\mu$ with energy function $E(x)$,
invertible neural network $T_{\theta}$ parameterized by $\theta$,
initial inverse temperature $\beta_{0}$

\ENSURE Estimated transport map $\hat{T}$ such that $\hat{T}_{\sharp}\mu_{0}\approx\mu$

\STATE Initialize $\theta\leftarrow\theta^{(-1)}$ such that $T^{(-1)}\coloneqq T_{\theta^{(-1)}}\approx I$,
the identity map

\STATE Compute $\theta^{(0)}\leftarrow\mathtt{kl\_sampler}(\beta_{0}E,\theta^{(-1)})$\hfill{}(Algorithm
\ref{alg:kl_sampler})

\FOR{ $k=1,2,\ldots$ }

\STATE Compute $\beta_{k}\leftarrow\min\{1,\mathtt{estimate\_beta}(\beta_{k-1},\theta^{(k-1)})\}$\hfill{}(Algorithm
\ref{alg:estimate_beta})

\STATE Compute $\theta^{(k)}\leftarrow\mathtt{l2\_sampler}(\beta_{k}E,\theta^{(k-1)})$\hfill{}(Algorithm
\ref{alg:l2_flow_sampler})

\IF{ $\beta_{k}=1$ }

\RETURN $\hat{T}=T_{\theta^{(k)}}$

\ENDIF

\ENDFOR

\end{algorithmic}

\end{algorithm}

\subsection{Transport between temperatures}

\label{subsec:transport_temperatures}

The problem of smoothly changing $T^{(k-1)}$ to $T^{(k)}$ is the
central part of TemperFlow, and can be viewed as finding a curve connecting
$r_{k-1}$ to $r_{k}$. Theoretically, the existing KL sampler in
Algorithm \ref{alg:kl_sampler} can be used, but the analysis in Section
\ref{subsec:diagnosis} implies that it may suffer from the vanishing
gradient problem for multimodal distributions. Instead, we propose
a novel method called $L^{2}$ sampler, which profoundly improves
the sampling quality.

The idea behind the $L^{2}$ sampler is simple. Instead of minimizing
the KL divergence between the sampler distribution and the target
distribution as in Algorithm \ref{alg:kl_sampler}, the $L^{2}$ sampler
uses the squared $L^{2}$-distance between the two densities as the
objective function, defined as $\mathcal{L}(g)=\int(g-f)^{2}\mathrm{d}\lambda$,
where $f$ is the target density, and $g$ stands for the sampler
density. Given a parameterized transport map $T_{\theta}$, typically
an invertible neural network, the $L^{2}$ sampler minimizes $\mathcal{L}(p_{\theta})$
using stochastic gradient descent, where $p_{\theta}$ is the density
function of $T_{\theta\sharp}\mu_{0}$. We defer the theoretical analysis
of the $L^{2}$ sampler to Section \ref{sec:convergence_properties},
and here focus on the computational aspect.

The main challenge of computing $\mathcal{L}(g)$ is evaluating the
density function $f$, since it is typically given in the unnormalized
form $f(x)=e^{-E(x)}/U$, where $U$ is the unknown normalizing constant.
Fortunately, $U$ can be efficiently estimated by importance sampling
\[
U=\int\exp\{-E(x)\}\mathrm{d}x=\mathbb{E}_{X\sim h(x)}\exp\{-E(X)-\log h(X)\},
\]
where $h\approx f$ is a proposal distribution close to the target,
and the exact expectation can be approximated by Monte Carlo samples.
In TemperFlow, we take $f=r_{k}$ and $h=r_{k-1}$, so the estimator
for $U$ does not suffer from a high variance as long as the adjacent
distributions in the tempered distribution flow $\{r_{k}\}$ are close
to each other. Then we have
\begin{align*}
\mathcal{L}(g) & =\int[g(x)-f(x)]^{2}\mathrm{d}x=\int[g(x)]^{2}\mathrm{d}x-2\int f(x)g(x)\mathrm{d}x+\int[f(x)]^{2}\mathrm{d}x\\
 & =\mathbb{E}_{X\sim g(x)}\left[g(X)-2f(X)\right]+\int[f(x)]^{2}\mathrm{d}x,
\end{align*}
where the last term does not depend on $g$. The full algorithm is
given in Algorithm \ref{alg:l2_flow_sampler}.

\begin{algorithm}[h]
\caption{\label{alg:l2_flow_sampler}The $L^{2}$ sampler.}


\begin{algorithmic}[1]

\REQUIRE Target distribution $f$ with energy function $E(x)$, invertible
neural network $T_{\theta}$ with initial parameter value $\theta^{(0)}$,
batch size $M$, step sizes $\{\alpha_{k}\}$

\ENSURE Neural network parameters $\hat{\theta}$ such that $T_{\hat{\theta}\sharp}\mu_{0}\approx f$

\STATE Let $h(x)$ be the density function of $T_{\theta^{(0)}\sharp}\mu_{0}$

\STATE Generate $Z_{1},\ldots,Z_{M}\overset{iid}{\sim}\mu_{0}$ and
set $X_{i}\leftarrow T_{\theta^{(0)}}(Z_{i})$, $i=1,\ldots,M$

\STATE Set $\hat{U}\leftarrow M^{-1}\sum_{i=1}^{M}\exp\{-E(X_{i})-\log h(X_{i})\}$

\FOR{ $k=1,2,\ldots$ }

\STATE Let $g_{\theta}(x)$ be the density function of $T_{\theta\sharp}\mu_{0}$

\STATE Generate $Z_{1},\ldots,Z_{M}\overset{iid}{\sim}\mu_{0}$ and
define $X_{i}=T_{\theta}(Z_{i})$, $i=1,\ldots,M$

\STATE Define $L(\theta)=M^{-1}\sum_{i=1}^{M}\left[g_{\theta}(X_{i})-2\exp\{-E(X_{i})\}/\hat{U}\right]$

\STATE Set $\theta^{(k)}\leftarrow\theta^{(k-1)}-\alpha_{k}\nabla_{\theta}L(\theta)|_{\theta=\theta^{(k-1)}}$

\IF{ $L(\theta)$ converges }

\RETURN $\hat{\theta}=\theta^{(k)}$

\ENDIF

\ENDFOR

\end{algorithmic}

\end{algorithm}

\begin{rem*}
In practical implementation, it is helpful to optimize the logarithm
of $\mathcal{L}(g)$ to enhance numerical stability, especially when
the dimension of $f$ is high. In this case, we can apply importance
sampling again to obtain
\begin{align*}
\log[\mathcal{L}(g)] & =\log\left[\mathbb{E}_{X\sim h(x)}\frac{\left[g(X)-U^{-1}\exp\{-E(X)\}\right]^{2}}{h(X)}\right]\approx\log\left[M^{-1}\sum_{i=1}^{M}\exp\{W_{i}\}\right],\\
W_{i} & =2\log\left|g(X_{i})-U^{-1}\exp\{-E(X_{i})\}\right|-\log h(X_{i})\\
 & =2\log g(X_{i})-\log h(X_{i})+2\log\left|1-U^{-1}\exp\{-E(X_{i})-\log g(X_{i})\}\right|,
\end{align*}
where $X_{i}\overset{iid}{\sim}h(x)$.
\end{rem*}

\subsection{Adaptive temperature selection}

\label{subsec:temperature_selection}

There remains the problem of how to choose the $\beta_{k}$ sequence
in (\ref{eq:tempering_discretize}). Obviously, using a dense sequence
of $\beta_{k}$ makes it easier to transport from $r_{k-1}$ to $r_{k}$,
but it also increases the computing time. On the other hand, a coarse
grid of $\beta_{k}$ points reduces the number of transitions, but
results in harder transports. 

For parallel tempering MCMC, there were extensive discussions on the
choice of temperatures for parallel chains, and one widely-used criterion
is to make the acceptance probability of swapping uniform across different
chains \citep{kofke2002acceptance,kone2005selection,earl2005parallel}.
This leads to a geometrically spaced ladder based on calculations
on normal distributions, and there are also adaptive schemes to dynamically
adjust temperatures \citep{vousden2016dynamic}. These heuristics,
however, are not directly applicable to TemperFlow, since there is
no acceptance probability in our framework.

For TemperFlow, we develop an adaptive scheme to determine $\beta_{k+1}$
given the current $\beta_{k}$. As a starting point, Theorem \ref{thm:monotonicity_kl}
indicates that the tempering curve decreases the KL divergence over
time, so a natural way to stabilize the optimization process is to
select $\beta_{k}$'s that reduce the KL divergence smoothly. Let
$\ell(\beta)=\mathrm{KL}(q_{\beta}\Vert\mu)$, and Theorem \ref{thm:monotonicity_kl}
shows that $\ell(\beta)$ is a decreasing function of $\beta$. Suppose
that we are given $r_{k}$ and $\beta_{k}$, and a natural selection
of $\beta_{k+1}$ is such that $\ell(\beta_{k+1})=\alpha\ell(\beta_{k})$,
where $0<\alpha<1$ is a predefined discounting factor, \emph{e.g.},
0.9. Of course, $\ell(\beta_{k+1})$ is unknown, but by Taylor expansion
we have
\[
\ell(\beta_{k+1})-\ell(\beta_{k})\approx\left.\frac{\mathrm{d}\ell(\beta)}{\mathrm{d}(\log\beta)}\right|_{\beta=\beta_{k}}\left(\log\beta_{k+1}-\log\beta_{k}\right).
\]
With the reparameterization $\gamma=\log(\beta)$ and the rearrangement
of terms, we get
\begin{equation}
\gamma_{k+1}\approx\gamma_{k}-(1-\alpha)\left.\left\{ \frac{\mathrm{d}[\log\ell(e^{\gamma})]}{\mathrm{d}\gamma}\right\} ^{-1}\right|_{\gamma=\gamma_{k}}.\label{eq:log_beta_update}
\end{equation}
In the supplementary material, we show that
\begin{equation}
\frac{\mathrm{d}[\log\ell(e^{\gamma})]}{\mathrm{d}\gamma}=\frac{-\beta(1-\beta)\left\{ \int q_{\beta}(x)E^{2}(x)\mathrm{d}x-\left[\int q_{\beta}(x)E(x)\mathrm{d}x\right]^{2}\right\} }{\int q_{\beta}(x)\left[\log q_{\beta}(x)+E(x)\right]\mathrm{d}x+\log\int q_{\beta}(x)\exp\left[-E(x)-\log q_{\beta}(x)\right]\mathrm{d}x},\label{eq:delta_gamma}
\end{equation}
so (\ref{eq:log_beta_update}) can be estimated based on $r_{k}$
and $\beta_{k}$. The overall method of computing $\beta$ is given
in Algorithm \ref{alg:estimate_beta}.

\begin{algorithm}[h]
\caption{\label{alg:estimate_beta}Adaptive selection of $\beta$ parameter
in the TemperFlow sampler.}


\begin{algorithmic}[1]

\REQUIRE Current $\beta_{k}$ and neural network parameter value
$\theta^{(k)}$, batch size $M$, discounting factor $\alpha$

\ENSURE Next parameter $\beta_{k+1}$

\STATE Generate $Z_{1},\ldots,Z_{M}\overset{iid}{\sim}\mu_{0}$ and
let $X_{i}=T_{\theta^{(k)}}(Z_{i})$, $i=1,\ldots,M$

\STATE Set $c_{1}\leftarrow M^{-1}\sum_{i=1}^{M}E(X_{i})$, $c_{2}\leftarrow M^{-1}\sum_{i=1}^{M}[E(X_{i})]^{2}$

\STATE Compute $U_{i}=\log p_{\theta}(X_{i})+E(X_{i})$, $i=1,\ldots,M$

\STATE Set $c_{3}\leftarrow M^{-1}\sum_{i=1}^{M}U_{i}$, $c_{4}\leftarrow\log\left[M^{-1}\sum_{i=1}^{M}e^{-U_{i}}\right]$

\STATE Set $\gamma^{(k+1)}\leftarrow\log\beta_{k}+(1-\alpha)\beta_{k}^{-1}(1-\beta_{k})^{-1}(c_{3}+c_{4})/(c_{2}-c_{1}^{2})$

\RETURN $\beta_{k+1}=\exp\{\gamma^{(k+1)}\}$

\end{algorithmic}

\end{algorithm}

\section{Convergence Properties of TemperFlow}

\label{sec:convergence_properties}

As is introduced in Section \ref{subsec:temperflow_overview}, the
tempered distribution flow $\{r_{k}\}$ can be viewed as a discretized
version of the tempering curve $\rho_{t}$, whose convergence property
is given by Theorem \ref{thm:monotonicity_kl}. To connect the adjacent
distributions in the sequence $\{r_{k}\}$, we have proposed the $L^{2}$
sampler in Section \ref{subsec:transport_temperatures}, which is
the key to studying the theoretical properties of TemperFlow. The
Wasserstein gradient flow with respect to the squared $L^{2}$-distance
has been used to estimate generative models from data \citep{gao2022deep},
and below we show that it can also be used to analyze the $L^{2}$
sampler. In particular, we theoretically show that this Wasserstein
gradient flow does not suffer from the vanishing gradient problem.

Let $g_{t}$ be the Wasserstein gradient flow with respect to $\mathcal{L}$.
Assume that the target probability measure $f$ and every $g_{t}$
are absolutely continuous with respect to the Lebesgue measure $\lambda$,
so they admit densities. For simplicity, we use the same symbol $f$
or $g_{t}$ to represent both the probability measure and the density
function. Assume that $f$ and $g_{t}$ are differentiable, supported
on $\mathbb{R}^{d}$, and belong to the $L^{2}(\mathbb{R}^{d})$ space,
so that $\int f^{2}\mathrm{d}\lambda<\infty$ and $\int g_{t}^{2}\mathrm{d}\lambda<\infty$.
Denote by $\mathcal{Q}$ the set of balls in $\mathbb{R}^{d}$, and
for a measurable set $A\subset\mathbb{R}^{d}$, let $|A|=\lambda(A)$.
To analyze the property of $g_{t}$, we first make the following assumptions:
\begin{assumption}
\label{assu:balls} For every $\varepsilon\in(0,\varepsilon_{0}]$,
where $\varepsilon_{0}<1$ is a fixed constant, there exist mutually
exclusive balls $Q_{1}^{\varepsilon},\ldots,Q_{K}^{\varepsilon}\in\mathcal{Q}$,
possibly depending on $\varepsilon$, such that for all $i=1,\ldots,K$:
\end{assumption}
\begin{enumerate}
\item[(a)] $\inf_{Q_{i}^{\varepsilon}}f\le\varepsilon^{1/4}$ and $\int_{Q^{\varepsilon}}f\mathrm{d}\lambda\ge1-\varepsilon^{1/4}$,
where $Q^{\varepsilon}=\bigcup_{i=1}^{K}Q_{i}^{\varepsilon}$.
\item[(b)] $\sup\,_{\begin{subarray}{c}
Q\subset Q_{i}^{\varepsilon}\\
Q\in\mathcal{Q}
\end{subarray}}\,|Q|^{2/d}\left[\frac{1}{|Q|}\int_{Q}f^{-2}\mathrm{d}\lambda\right]^{1/2}\le M_{1}\varepsilon^{-1/4}[\log(1/\varepsilon)]^{c}$ holds for some $c,M_{1}>0$.
\item[(c)] $\int_{Q_{i}^{\varepsilon}}\left[\varepsilon^{1/4}-f(x)\right]_{+}^{2}\mathrm{d}x\ge M_{2}\varepsilon^{3/4}[\log(1/\varepsilon)]^{c}$
holds for sufficiently large $M_{2}>0$.
\end{enumerate}
Intuitively, Assumption \ref{assu:balls}(a) means that we can use
$K$ balls to cover the modes of the target distribution, with only
minimal mass outside the balls. Assumptions \ref{assu:balls}(b) and
(c) are technical conditions for the tail behavior of each mode. As
a starting point, we first show that the standard normal distribution
satisfies Assumption \ref{assu:balls}.
\begin{prop}
\label{prop:assump1_normal}Let $f(x)=(2\pi)^{-1/2}e^{-x^{2}/2}$
be the standard normal distribution. Then for some $\varepsilon_{0}<1$
and every $\varepsilon\in(0,\varepsilon_{0}]$, $Q^{\varepsilon}=[-x_{\varepsilon}-M\varepsilon^{1/12},x_{\varepsilon}+M\varepsilon^{1/12}]$
satisfies Assumption \ref{assu:balls}, where $x_{\varepsilon}=\sqrt{-(\log\varepsilon)/2-\log(2\pi)}$
and $M$ is a sufficiently large constant.
\end{prop}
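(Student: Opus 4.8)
The plan is to handle the one-dimensional case directly: here $d=1$, the balls in $\mathcal{Q}$ are intervals, and it is enough to take $K=1$ with $Q_{1}^{\varepsilon}=Q^{\varepsilon}$ a single symmetric interval. First I would record the identity behind the construction. Since $f(x)=(2\pi)^{-1/2}e^{-x^{2}/2}$ is even and strictly decreasing in $|x|$, the equation $f(x)=\varepsilon^{1/4}$ has a unique positive root, and solving $-\frac{1}{2}\log(2\pi)-x^{2}/2=\frac{1}{4}\log\varepsilon$ gives precisely $x=x_{\varepsilon}=\sqrt{-(\log\varepsilon)/2-\log(2\pi)}$; moreover $x_{\varepsilon}\to\infty$ and $x_{\varepsilon}^{2}\asymp\log(1/\varepsilon)$ as $\varepsilon\to0$. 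Thus $Q^{\varepsilon}$ is the interval $[-x_{\varepsilon},x_{\varepsilon}]$ enlarged by the padding $M\varepsilon^{1/12}$ on each side, and I would verify the three parts of Assumption~\ref{assu:balls} separately, all with the single exponent $c=1$ and with $\varepsilon_{0}$ chosen small enough to make every asymptotic estimate below valid.

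For part (a), monotonicity gives $\inf_{Q^{\varepsilon}}f=f(x_{\varepsilon}+M\varepsilon^{1/12})<f(x_{\varepsilon})=\varepsilon^{1/4}$, which is the first inequality. For the mass bound, $\int_{\mathbb{R}\setminus Q^{\varepsilon}}f\,\mathrm{d}\lambda=2(1-\Phi(x_{\varepsilon}+M\varepsilon^{1/12}))$ where $\Phi$ is the standard normal distribution function, and the Mills-ratio inequality $1-\Phi(a)\le f(a)/a$ for $a>0$ gives $\int_{\mathbb{R}\setminus Q^{\varepsilon}}f\,\mathrm{d}\lambda\le 2f(x_{\varepsilon})/x_{\varepsilon}=2\varepsilon^{1/4}/x_{\varepsilon}\le\varepsilon^{1/4}$ once $x_{\varepsilon}\ge2$; the padding only shrinks the left-hand side, so it is harmless here. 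For part (b), $f^{-2}(x)=2\pi e^{x^{2}}$ is increasing in $|x|$, so writing $b=x_{\varepsilon}+M\varepsilon^{1/12}$ I bound, for every sub-interval $Q\subset Q^{\varepsilon}$, $|Q|^{2}[\,|Q|^{-1}\int_{Q}f^{-2}\,\mathrm{d}\lambda\,]^{1/2}\le|Q|^{2}(\sup_{Q^{\varepsilon}}f^{-2})^{1/2}\le(2b)^{2}\sqrt{2\pi}\,e^{b^{2}/2}$. Since $\varepsilon^{1/12}x_{\varepsilon}\to0$ we have $b^{2}=x_{\varepsilon}^{2}+o(1)$, hence $e^{b^{2}/2}\le e\cdot e^{x_{\varepsilon}^{2}/2}=e(2\pi)^{-1/2}\varepsilon^{-1/4}$ and $b^{2}\le\log(1/\varepsilon)$ for $\varepsilon$ small, so the displayed quantity is at most a constant multiple of $\varepsilon^{-1/4}\log(1/\varepsilon)$, which is (b) with $c=1$.

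Part (c) is the crux and is what pins down the exponent $1/12$. The integrand $[\varepsilon^{1/4}-f(x)]_{+}^{2}$ is nonzero only on $\{x_{\varepsilon}<|x|<x_{\varepsilon}+M\varepsilon^{1/12}\}$, so I parametrize $x=\pm(x_{\varepsilon}+s)$ with $s\in(0,M\varepsilon^{1/12}]$ and use $f(x_{\varepsilon}+s)=\varepsilon^{1/4}e^{-(x_{\varepsilon}s+s^{2}/2)}$. The exponent satisfies $0\le x_{\varepsilon}s+s^{2}/2\le M\varepsilon^{1/12}x_{\varepsilon}+M^{2}\varepsilon^{1/6}/2\to0$, so the elementary bound $1-e^{-u}\ge u/2$ for $u\in[0,1]$ yields $[\varepsilon^{1/4}-f(x_{\varepsilon}+s)]_{+}\ge\frac{1}{2}\varepsilon^{1/4}x_{\varepsilon}s$. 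Squaring and integrating over both tails, $\int_{Q^{\varepsilon}}[\varepsilon^{1/4}-f]_{+}^{2}\,\mathrm{d}\lambda\ge2\int_{0}^{M\varepsilon^{1/12}}\frac{1}{4}\varepsilon^{1/2}x_{\varepsilon}^{2}s^{2}\,\mathrm{d}s=\frac{M^{3}}{6}\varepsilon^{1/2}x_{\varepsilon}^{2}\cdot\varepsilon^{1/4}=\frac{M^{3}}{6}\varepsilon^{3/4}x_{\varepsilon}^{2}$, and since $x_{\varepsilon}^{2}\ge\frac{1}{4}\log(1/\varepsilon)$ for small $\varepsilon$ this is at least $\frac{M^{3}}{24}\varepsilon^{3/4}\log(1/\varepsilon)$. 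Hence (c) holds with $c=1$ and constant $M^{3}/24$, which can be made as large as desired by enlarging the padding constant $M$; this also explains the calibration $\varepsilon^{1/12}$, because $\varepsilon^{1/2}\cdot(\text{padding})^{3}$ must be of order $\varepsilon^{3/4}$ while the padding itself stays $o(1)$ so as not to spoil (a) and (b).

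I expect the main obstacle to be exactly this balancing in part (c): the padding exponent must be calibrated so that the cubic contribution $\varepsilon^{1/2}x_{\varepsilon}^{2}(M\varepsilon^{1/12})^{3}$ matches $\varepsilon^{3/4}[\log(1/\varepsilon)]^{1}$, while the same power $[\log(1/\varepsilon)]^{1}$ is what part (b) independently produces, and part (a) must continue to hold. Everything else is bookkeeping: fixing one $\varepsilon_{0}$ small enough that $x_{\varepsilon}\ge2$, that $M\varepsilon^{1/12}x_{\varepsilon}+M^{2}\varepsilon^{1/6}/2\le1$, that $\frac{1}{4}\log(1/\varepsilon)\le x_{\varepsilon}^{2}$ and $b^{2}\le\log(1/\varepsilon)$, and then collecting the constants.
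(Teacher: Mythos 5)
Your proof is correct and verifies all three parts of Assumption~\ref{assu:balls}, but it routes through slightly different estimates than the paper. Part~(a) is identical. For part~(b), you bound $\int_{Q}f^{-2}\le|Q|\sup_{Q^{\varepsilon}}f^{-2}$ and thus pick up an extra factor of $|Q^{\varepsilon}|\asymp\sqrt{\log(1/\varepsilon)}$, landing on $c=1$; the paper instead invokes the Dawson-function bound $e^{-b^{2}}\int_{0}^{b}e^{y^{2}}\mathrm{d}y<0.6$ to get $\int_{Q^{\varepsilon}}f^{-2}\lesssim\sup_{Q^{\varepsilon}}f^{-2}$ without the extra length factor, yielding the sharper $c=3/4$. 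Either constant is acceptable, since the assumption requires a single $c$ consistent across (b) and (c), and your version of (c) also holds with $c=1$ (indeed, both proofs actually produce a $\log(1/\varepsilon)$ on the right of (c), which exceeds either power). For part~(c), the paper Taylor-expands $g(x)=[\varepsilon^{1/4}-f(x)]^{2}$ about $x_{\varepsilon}$ with a tacit $O(\delta^{4})$ remainder; your argument via the exact factorization $f(x_{\varepsilon}+s)=\varepsilon^{1/4}e^{-(x_{\varepsilon}s+s^{2}/2)}$ and the elementary bound $1-e^{-u}\ge u/2$ on $[0,1]$ is cleaner and avoids any implicit remainder control. Your constant $M^{3}/24$ for (c), like the paper's $M^{3}/9$, grows with the padding constant $M$, which is exactly what is needed since Assumption~\ref{assu:balls}(c) must hold for a sufficiently large $M_{2}$ (in the application it must beat $\alpha^{-1}C(d)M_{1}/4$). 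The closing remark identifying $\varepsilon^{1/12}$ as the exponent balancing $\varepsilon^{1/2}\delta^{3}\asymp\varepsilon^{3/4}$ while keeping $\delta=o(1)$ is a genuinely useful observation not made explicit in the paper.
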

Proposition \ref{prop:assump1_normal} can be easily extended to more
complicated distributions. In fact, Assumption \ref{assu:balls} is
mostly concerned with the tail behavior of the distribution, so a
density function can be modified within a compact set without violating
the assumptions. Moreover, for multimodal distributions, each mode
can be covered by a ball with suitable size as long as the balls are
disjoint, or multiple modes are covered by a single ball, depending
on the shape of the density function.
\begin{assumption}
\label{assu:lower_bound}There exists a constant $\alpha>0$ such
that $g_{t}(x)\ge\alpha f(x)$ for all $x\in Q^{\varepsilon}$ and
$t>0$.
\end{assumption}
Assumption \ref{assu:lower_bound} essentially requires that the sampler
density has the same support as the target density. This assumption
is natural in a tempered distribution flow, since in that case $g_{t}$
is typically a flatter distribution than $f$. Under these conditions,
we show that the gradient of $\mathcal{L}(g_{t})$ does not vanish.
\begin{thm}
\label{thm:grad_l2}Suppose that Assumptions \ref{assu:balls} and
\ref{assu:lower_bound} hold. Then there exists a constant $C>0$
such that whenever $|\mathrm{d}\mathcal{L}(g_{t})/\mathrm{d}t|\coloneqq\varepsilon_{t}\le\varepsilon_{0}$,
we have
\begin{equation}
\left|\frac{\mathrm{d}\mathcal{L}(g_{t})}{\mathrm{d}t}\right|=\varepsilon_{t}\ge C\cdot\max_{i=1,\ldots,K}\left\{ \left[\int_{Q_{i}^{\varepsilon_{t}}}(g_{t}-f)^{2}\mathrm{d}\lambda\right]^{2}\cdot\min(|Q_{i}^{\varepsilon_{t}}|,|Q_{i}^{\varepsilon_{t}}|^{-1})\right\} .\label{eq:l2_gradient}
\end{equation}
\end{thm}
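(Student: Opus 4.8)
The plan is to turn the dissipation of $\mathcal{L}$ along its Wasserstein gradient flow into a weighted Dirichlet integral, localize it to the balls supplied by Assumption~\ref{assu:balls}, and then bound that integral from below, on each ball, by the squared $L^{2}$-mass of $g_{t}-f$ through a two-weight Poincar\'e inequality whose constant is exactly what Assumption~\ref{assu:balls} controls.

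\emph{Step 1 (dissipation and reduction).} For $\mathcal{L}(g)=\int(g-f)^{2}\mathrm{d}\lambda$ the first variation is $\delta\mathcal{L}/\delta\nu=2(g-f)$, so by (\ref{eq:velocity_field}) the velocity field of the flow is $\mathbf{v}_{t}=-2\nabla(g_{t}-f)$, and the $L^{2}$ analogue of the dissipation identity (\ref{eq:kl_derivative}) reads
\[
\varepsilon_{t}=\bigl|\tfrac{\mathrm{d}}{\mathrm{d}t}\mathcal{L}(g_{t})\bigr|=\int g_{t}\,\|\mathbf{v}_{t}\|^{2}\mathrm{d}\lambda=4\int g_{t}\,\|\nabla(g_{t}-f)\|^{2}\mathrm{d}\lambda .
\]
Write $h=g_{t}-f$. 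Discarding the integral outside $Q^{\varepsilon_{t}}=\bigcup_{j}Q_{j}^{\varepsilon_{t}}$, using $g_{t}\ge\alpha f$ there (Assumption~\ref{assu:lower_bound}), and using that the balls are disjoint, we get, for each $i$, $\varepsilon_{t}\ge 4\alpha\int_{Q_{i}^{\varepsilon_{t}}}f\,\|\nabla h\|^{2}\mathrm{d}\lambda$. It therefore suffices to prove that, with $\varepsilon:=\varepsilon_{t}\le\varepsilon_{0}$ and $Q:=Q_{i}^{\varepsilon}$ fixed,
\[
\int_{Q}f\,\|\nabla h\|^{2}\mathrm{d}\lambda\ \ge\ c_{0}\Bigl(\int_{Q}h^{2}\mathrm{d}\lambda\Bigr)^{2}\min\bigl(|Q|,|Q|^{-1}\bigr)
\]
for a $t$-independent $c_{0}>0$; maximizing over $i$ then gives (\ref{eq:l2_gradient}).

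\emph{Step 2 (two-weight Poincar\'e on a ball).} On the ball $Q$ I would use an inequality of the form
\[
\int_{Q}\bigl(h-\bar h_{Q}\bigr)^{2}\mathrm{d}\lambda\ \le\ \Lambda_{Q}\int_{Q}f\,\|\nabla h\|^{2}\mathrm{d}\lambda,\qquad \bar h_{Q}=\tfrac{1}{|Q|}\textstyle\int_{Q}h\,\mathrm{d}\lambda ,
\]
where Lebesgue measure appears on the left but the $f$-weighted Dirichlet form on the right. Such a two-weight Poincar\'e inequality on a ball holds with $\Lambda_{Q}$ comparable to the Muckenhoupt-type quantity $\sup_{Q'\subset Q,\,Q'\in\mathcal{Q}}|Q'|^{2/d}\bigl(|Q'|^{-1}\int_{Q'}f^{-2}\mathrm{d}\lambda\bigr)^{1/2}$, which is precisely the object that Assumption~\ref{assu:balls}(b) bounds by $M_{1}\varepsilon^{-1/4}[\log(1/\varepsilon)]^{c}$; the near-degenerate part $\{f\le\varepsilon^{1/4}\}\cap Q$, where $f^{-2}$ would otherwise blow up, is exactly the region that Assumption~\ref{assu:balls}(c) keeps in check by forcing it to carry Lebesgue mass $\gtrsim\varepsilon^{1/4}[\log(1/\varepsilon)]^{c}$, giving the comparison enough room. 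Plugging in the dissipation bound $\int_{Q}f\|\nabla h\|^{2}\le\varepsilon_{t}/(4\alpha)$ from Step~1 (recall $\varepsilon=\varepsilon_{t}$) yields $\int_{Q}(h-\bar h_{Q})^{2}\lesssim\varepsilon_{t}^{3/4}[\log(1/\varepsilon_{t})]^{c}$.

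\emph{Step 3 (the mean term, assembly, and the main obstacle).} It remains to control $\bar h_{Q}^{2}|Q|=(\int_{Q}h)^{2}/|Q|$, the low-frequency part of $h=g_{t}-f$ on $Q$. Since $f$ and $g_{t}$ are probability densities, $\int_{\mathbb{R}^{d}}h=0$; by Assumption~\ref{assu:balls}(a) the mass of $f$ outside $Q^{\varepsilon_{t}}$ is at most $\varepsilon_{t}^{1/4}$, so the total signed excess $\sum_{j}\int_{Q_{j}^{\varepsilon_{t}}}h$ is $O(\varepsilon_{t}^{1/4})$, and together with $g_{t}\ge\alpha f$ on $Q^{\varepsilon_{t}}$ this confines $|\int_{Q}h|$, hence $\bar h_{Q}^{2}|Q|$, to the same order as the Poincar\'e term. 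Combining the two contributions gives $\int_{Q}h^{2}\lesssim\varepsilon_{t}^{3/4}[\log(1/\varepsilon_{t})]^{c}$, so $\bigl(\int_{Q}h^{2}\bigr)^{2}\lesssim\varepsilon_{t}\cdot\varepsilon_{t}^{1/2}[\log(1/\varepsilon_{t})]^{2c}\le\varepsilon_{t}$ once $\varepsilon_{0}$ is chosen small enough that $\varepsilon_{t}^{1/2}[\log(1/\varepsilon_{t})]^{2c}\le1$; since $\min(|Q|,|Q|^{-1})\le1$, this is the inequality of Step~1 (the $\min$ factor merely absorbs any residual ball-size dependence in $\Lambda_{Q}$). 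The main obstacle is Step~2: establishing the two-weight Poincar\'e inequality with $\Lambda_{Q}$ of exactly the order dictated by Assumption~\ref{assu:balls}(b)--(c), uniformly in the dimension $d$ and in the ball radius, so that the $\varepsilon^{-1/4}$ loss from dropping the weight $f$ is precisely compensated by the $\varepsilon^{3/4}$ and $[\log(1/\varepsilon)]^{c}$ factors supplied by the assumptions. A secondary delicate point is making the per-ball excess mass genuinely $O(\sqrt{\varepsilon_{t}})$ rather than merely bounded, which requires combining $\int h=0$ and Assumption~\ref{assu:balls}(a) carefully with the one-sided bound $g_{t}\ge\alpha f$, together with the behavior of $h$ near $\partial Q_{i}^{\varepsilon_{t}}$.
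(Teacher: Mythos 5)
Your Steps~1--2 match the paper closely. The dissipation identity $\varepsilon_t=4\int g_t\|\nabla h\|^2\mathrm{d}\lambda$, the restriction to the balls $Q_i^{\varepsilon_t}$, and the two-weight Poincar\'e inequality on a ball (this is exactly the paper's Lemma~\ref{lem:weighted_sobolev_density}, the $p=q=r=2$ case of \citealp{chua1993weighted}) are all there, and your choice of moving the $g_t\ge\alpha f$ bound onto the Dirichlet form rather than into $g^{-2}$ inside the Poincar\'e constant is algebraically equivalent to what the paper does. So far, both routes deliver $\int_{Q_i}(h-h_{Q_i})^2\mathrm{d}\lambda\lesssim\varepsilon_t^{3/4}[\log(1/\varepsilon_t)]^c$.

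The gap is in Step~3, where you need to control the mean term $h_{Q_i}$, and it is tied to a misreading of Assumption~\ref{assu:balls}(c). You treat (c) as a technical condition that tames the blow-up of $f^{-2}$ near the boundary of $Q_i$; it isn't --- that is handled entirely by (b). Instead, (c) is a lower bound on $\int_{Q_i}[\varepsilon^{1/4}-f]_+^2$, which is used only to control the mean. Your argument for $h_{Q_i}$ uses two facts: (i) mass conservation plus Assumption~\ref{assu:balls}(a) gives $\sum_j h_{Q_j}|Q_j|\le\varepsilon_t^{1/4}$, which is a one-sided (upper) bound; and (ii) $g_t\ge\alpha f$ gives $h\ge-(1-\alpha)f$, hence $h_{Q_i}|Q_i|\ge-(1-\alpha)$, a lower bound that is only $O(1)$, not $O(\varepsilon_t^{1/4})$. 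Neither is strong enough to confine $|h_{Q_i}|$ to the order you need. The ingredient you are missing is the paper's Lemma~\ref{lem:inf_bound}: since $\inf_{Q_i}f\le\varepsilon^{1/4}$ by (a), and the oscillation $e_i=h-h_{Q_i}$ satisfies $\int_{Q_i}e_i^2\le M_2\varepsilon^{3/4}[\log(1/\varepsilon)]^c\le\int_{Q_i}[\varepsilon^{1/4}-f]_+^2$ by (c), a pointwise perturbation argument gives $\inf_{Q_i}(f+e_i)\le\varepsilon^{1/4}$, and evaluating $h=g-f$ at the realizing point $x^*$ (where $g(x^*)\ge0$) yields $h_{Q_i}\ge -\varepsilon^{1/4}$. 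Combining this sharp lower bound with (i) finally gives $|h_{Q_i}|\lesssim\max(1,|Q_i|^{-1})\varepsilon^{1/4}$, which is the piece your sketch leaves unproved. Without recognizing that (c) feeds into an infimum argument for the mean --- rather than into the Poincar\'e constant --- your Step~3 cannot close.
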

Proposition \ref{prop:assump1_normal} implies that the $|Q_{i}^{\varepsilon_{t}}|$
term in (\ref{eq:l2_gradient}) is typically at the order of $O(\sqrt{\log(1/\varepsilon_{t})})$,
so in most cases it can be viewed as a constant. In addition, since
the union of $Q_{i}^{\varepsilon_{t}}$'s captures almost all the
mass of $f$ by Assumption \ref{assu:balls}, the right hand side
of (\ref{eq:l2_gradient}) can be viewed as the squared $L^{2}$-distance
between $f$ and $g_{t}$ up to some constant. As a whole, Theorem
\ref{thm:grad_l2} implies that the $L^{2}$ Wasserstein gradient
flow enjoys the non-vanishing gradient property.

To visually demonstrate the difference between the KL-based and $L^{2}$-based
samples, we use both methods to sample from the bimodal distribution
$p_{m}$ in Figure \ref{fig:demo_uni_multi_modal}, with the initial
sampler distribution set to $0.9\cdot N(1,1)+0.1\cdot N(8,0.25)$.
As can be seen from Figure \ref{fig:demo_l2}, the behavior of the
KL sampler is predicted by Theorem \ref{thm:f_divergence_grad}, with
very little progress even after 100 iterations. On the contrary, the
$L^{2}$ sampler successfully estimates the target distribution after
50 steps.

\begin{figure}[h]
\begin{centering}
\includegraphics[width=0.85\textwidth]{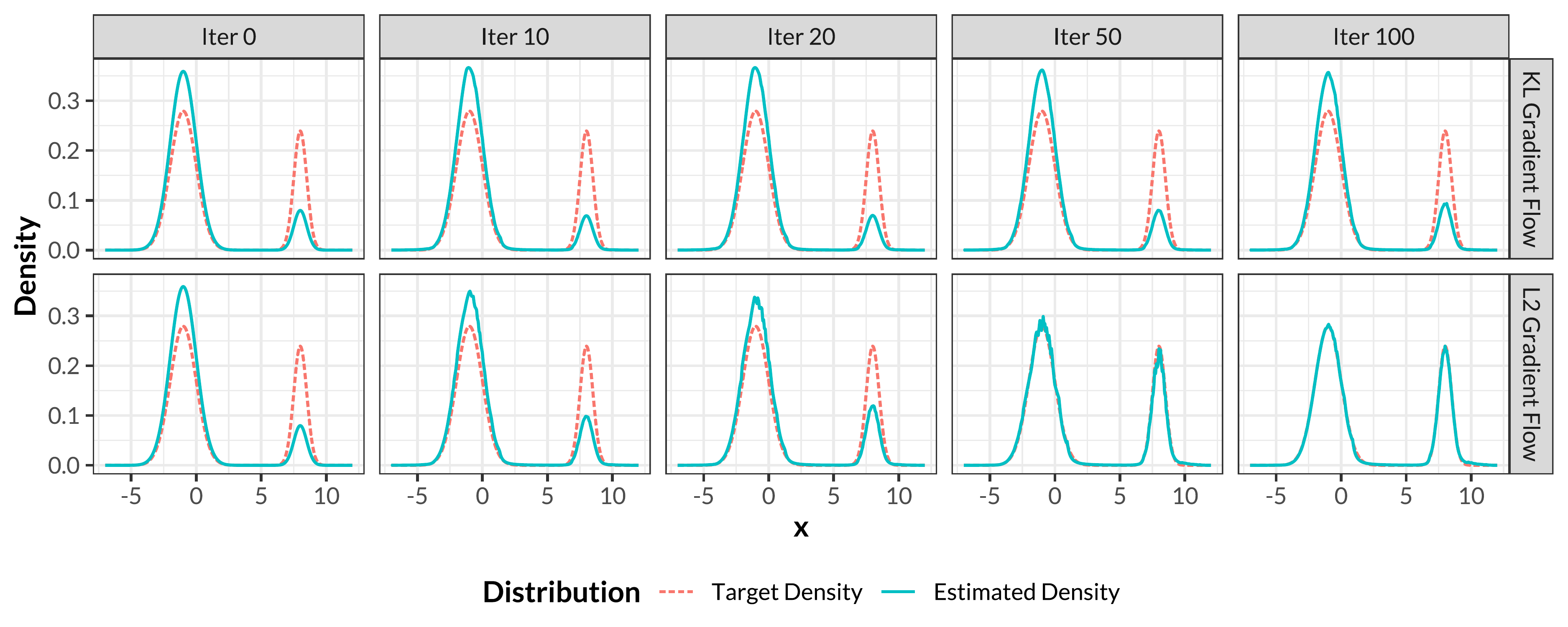}
\par\end{centering}
\caption{\label{fig:demo_l2}A comparison of the KL-based and $L^{2}$-based
samplers. The target density is $p_{m}(x)$ in Section \ref{subsec:challenges},
and the initial sampler distribution has the same components as $p_{m}(x)$
but with a different mixing probability. Each column stands for one
iteration in the optimization process.}
\end{figure}

\section{Simulation Study}

\label{sec:simulation}

In this section, we use simulation experiments to compare TemperFlow
with widely-used MCMC samplers, including the Metropolis--Hastings
algorithm \citep{metropolis1953equation,hastings1970monte}, Hamiltonian
Monte Carlo \citep{brooks2011handbook}, and parallel tempering. We
also consider a variant of the TemperFlow algorithm, which post-processes
the TemperFlow samples by a rejection sampling refinement. Since our
focus is on multimodal distributions, we first consider a number of
bivariate Gaussian mixture models, in which the modes can be easily
visualized, and then study multivariate distributions that have arbitrary
dimensions based on copulas. Details on the hyperparameter setting
of algorithms are given in Section \ref{subsec:hyperparameter} of
the supplementary material. Additional experiments to validate the
proposed method on other aspects are given in Section \ref{sec:additional_exp}.

\subsection{Gaussian mixture models}

\label{subsec:gmm}

We study three Gaussian mixture models shown in the last column of
Figure \ref{fig:experiments_2d}. For each method, we compute the
sampling errors based on the 1-Wasserstein distance and the maximum
mean discrepancy (MMD, \citealp{gretton2006kernel}) between the generated
points and the true sample. The precise definitions of the metrics,
which we call the adjusted Wasserstein distance and the adjusted MMD,
are given in Section \ref{subsec:def_metrics} of the supplementary
material. These two metrics can take negative values, and the principle
is that smaller values indicate higher sampling quality. Figure \ref{fig:experiments_2d}
shows the result of one simulation run.

\begin{figure}[h]
\begin{centering}
\includegraphics[width=0.83\textwidth]{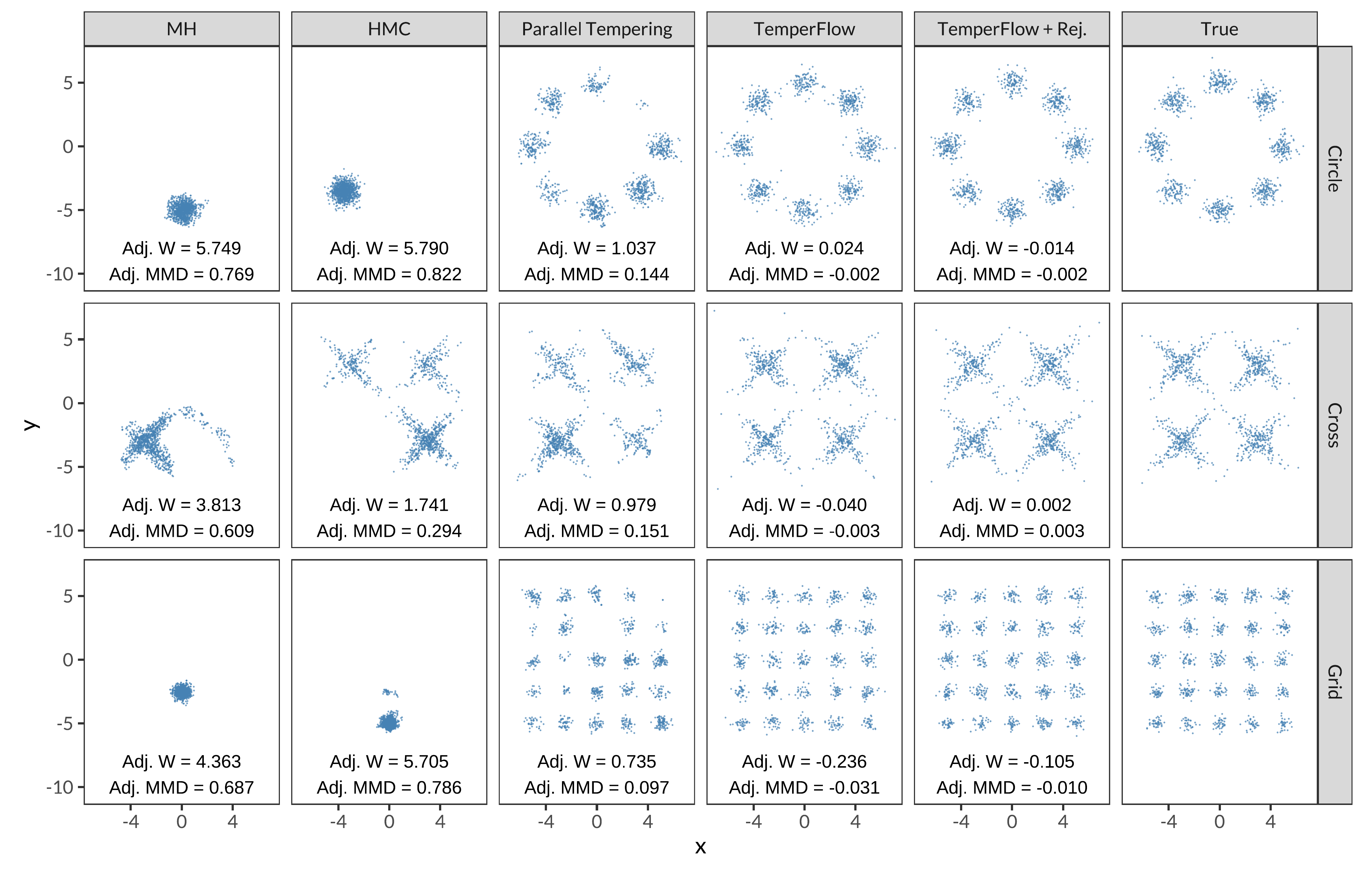}
\par\end{centering}
\caption{\label{fig:experiments_2d}A comparison of different sampling methods
on three Gaussian mixture models. MH stands for the Metropolis--Hastings
MCMC method, HMC is Hamiltonian Monte Carlo, and \textquotedblleft TemperFlow
+ Rej.\textquotedblright{} means post-processing the TemperFlow samples
by rejection sampling. The numbers under each panel are the adjusted
Wasserstein distance and adjusted MMD between the generated points
and the true sample, respectively.}
\end{figure}

It is clear that for basic MCMC methods such as Metropolis--Hastings
and Hamiltonian Monte Carlo, the sampling results severely lose most
of the modes. Parallel tempering greatly improves the quality of the
samples, but it also has difficulty in achieving the correct proportion
for each mode. In contrast, TemperFlow successfully captures all modes,
and the sample quality can be further improved by a rejection sampling
refinement. To account for randomness in sampling, we repeat the experiment
100 times for each method and each target distribution, and summarize
their sampling errors in Figure \ref{fig:experiments_2d_errors}.
The results are consistent with those in Figure \ref{fig:experiments_2d}:
among all the methods, TemperFlow and its refined version give uniformly
smaller errors than other methods.

\begin{figure}[h]
\begin{centering}
\includegraphics[width=0.9\textwidth]{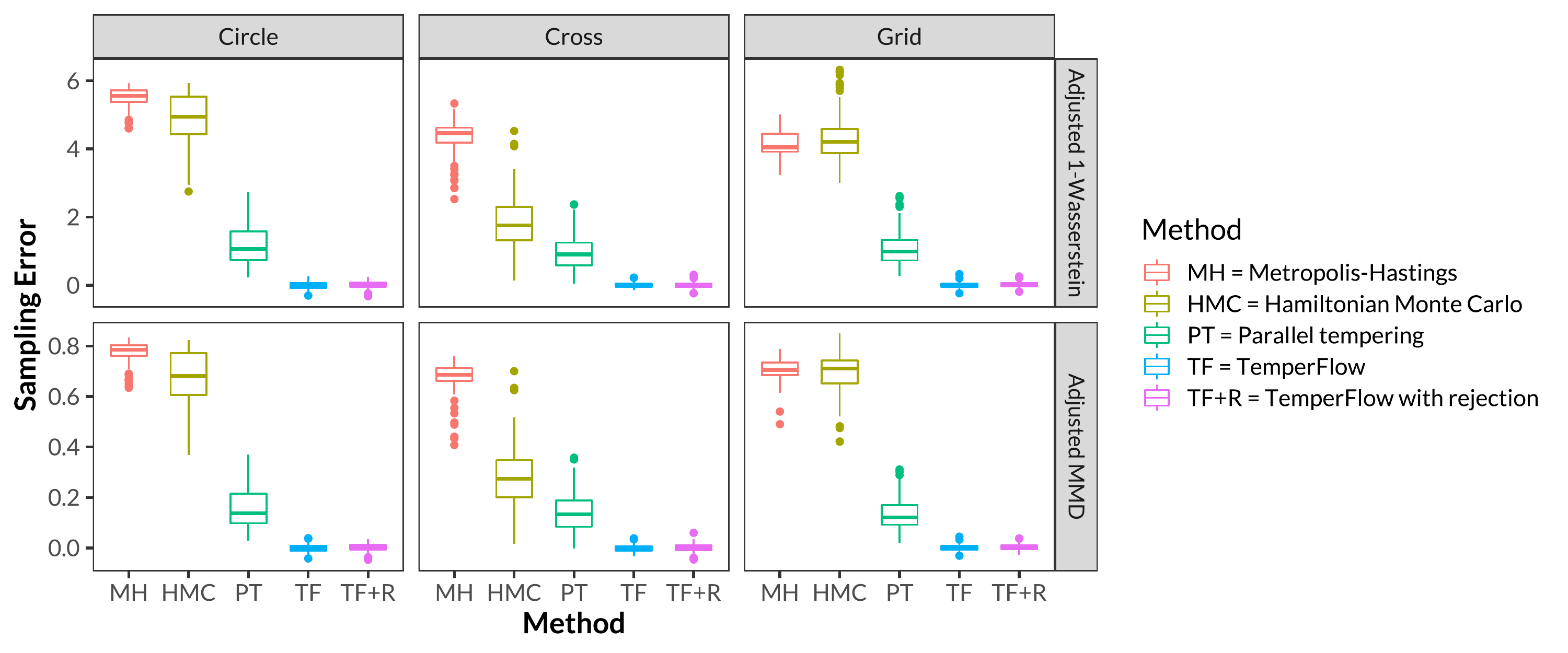}
\par\end{centering}
\caption{\label{fig:experiments_2d_errors}A comparison of sampling errors
of different methods on three normal mixture distributions. The boxplots
are drawn based on 100 simulation runs.}
\end{figure}

\subsection{Copula-generated distributions}

\label{subsec:copula}

To study more general multimodal distributions, we use copulas to
define multivariate densities that have arbitrary dimensions. See
\citet{nelsen2006introduction} for an introduction to copula modeling.
Specifically, we define the target distribution function as $F(x_{1},\ldots,x_{d})=C(F_{1}(x_{1}),\ldots,F_{d}(x_{d}))$,
where $F_{i}(x)$ is the marginal distribution function of each component
$X_{i}$, and $C(u_{1},\ldots,u_{d})$ is the copula function. In
our experiment, we take the first $s$ marginals to be a mixture of
normal distributions, $F_{i}\sim0.7\cdot N(-1,0.2^{2})+0.3\cdot N(1,0.2^{2})$,
$i=1,\ldots,s$, and the remaining to be a normal distribution $N(0,0.25)$.
The function $C(u_{1},\ldots,u_{d})=(u_{1}^{-\theta}+\cdots+u_{d}^{-\theta}-d+1)^{-1/\theta}$
is a Clayton copula with $\theta=2$. Figure \ref{fig:distribution_clayton}
shows the scatterplot and density plot of $(X_{i},X_{j})$, $i,j\le s$,
$i\neq j$.

\begin{figure}[h]
\begin{centering}
\includegraphics[width=0.27\textwidth]{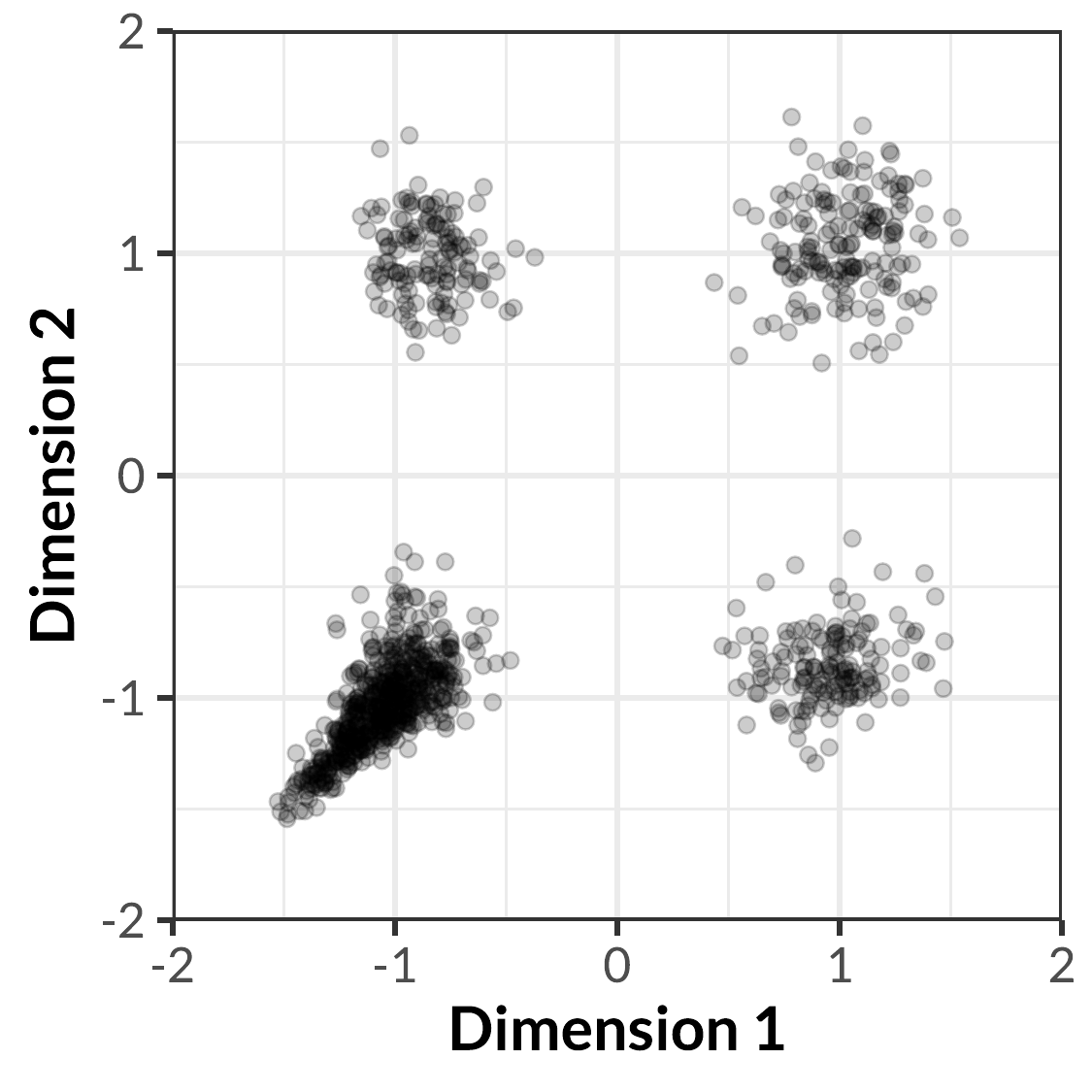}\includegraphics[width=0.36\textwidth]{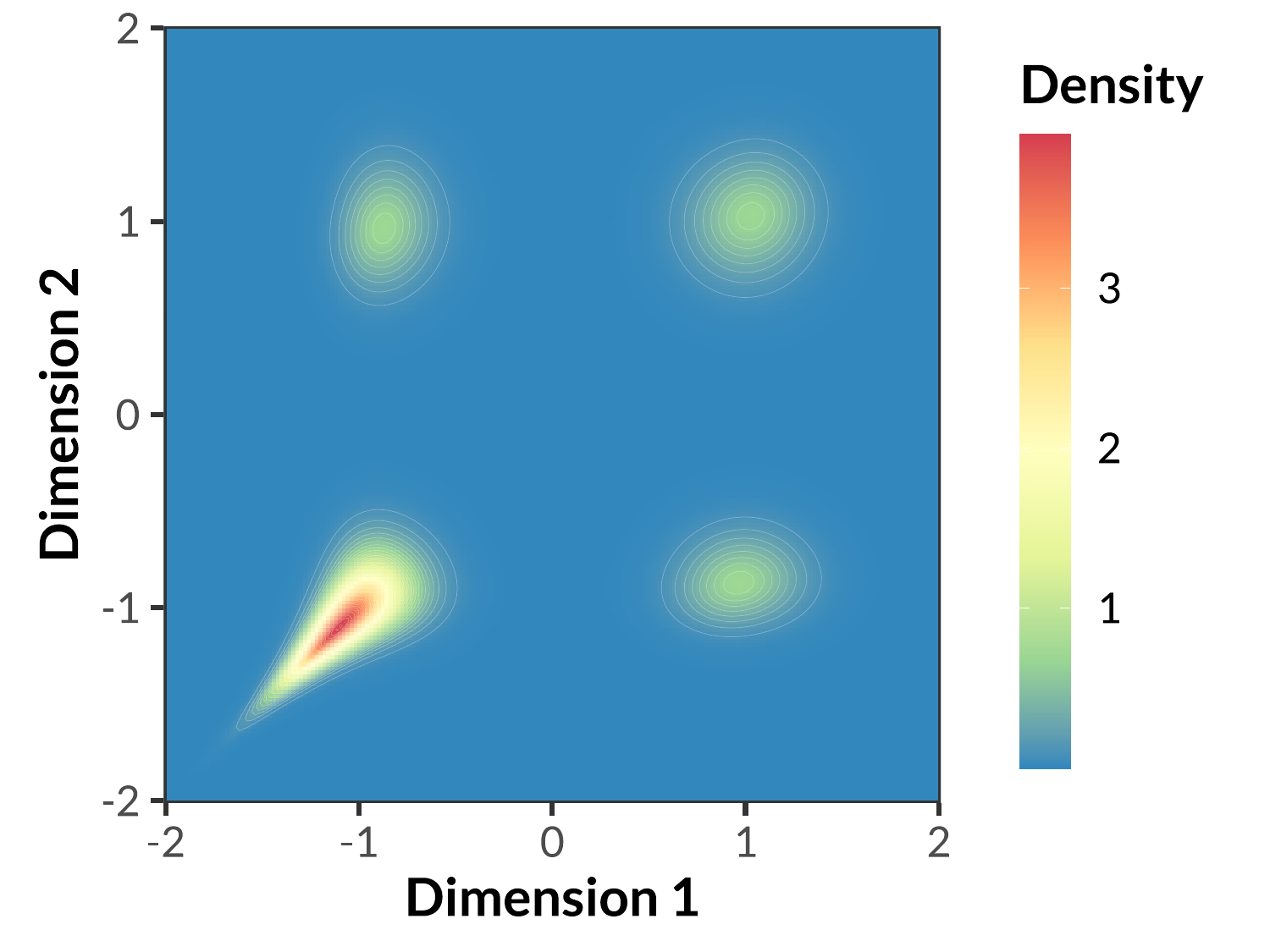}
\par\end{centering}
\caption{\label{fig:distribution_clayton}Normal mixture marginals combined
with a Clayton copula.}
\end{figure}

\begin{figure}[h]
\begin{centering}
\includegraphics[width=0.9\textwidth]{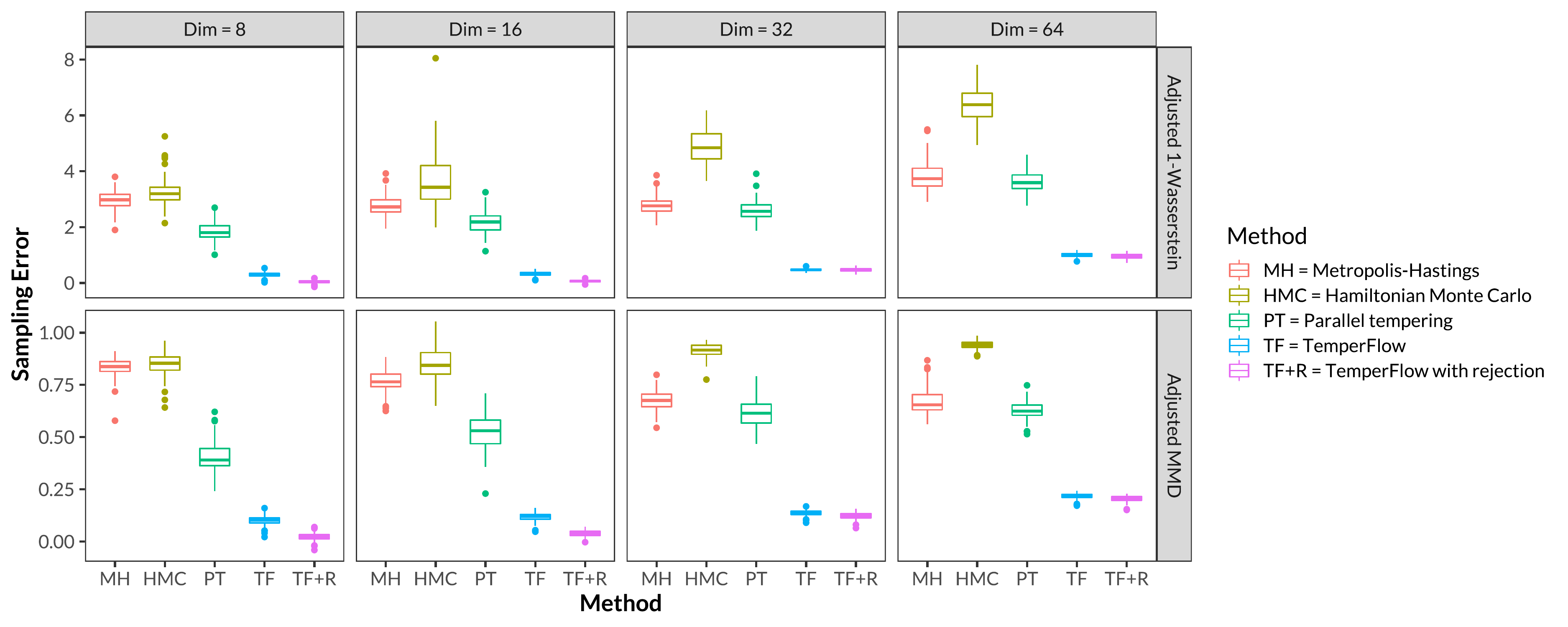}
\par\end{centering}
\caption{\label{fig:experiments_copula_error}Sampling errors of different
methods for the copula-generated distribution.}
\end{figure}

Under such a design, the target distribution $F(x_{1},\ldots,x_{d})$
has $2^{s}$ modes, which grows exponentially with $s$, and every
pair $(X_{i},X_{j})$ is correlated. Therefore, it is a very challenging
distribution to sample from even with moderate $s$ and $d$. We first
derive the density function $f(x_{1},\ldots,x_{d})$ by computing
the partial derivatives of $F$, and then use different methods to
sample from $f$ with $s=8$ and $d=8,16,32,64$. That is, each distribution
has $256$ modes in total, with varying dimensions.

Similar to the bivariate experiments, we compute the adjusted Wasserstein
distances and adjusted MMD between the true and generated samples,
and summarize the comparison results in Figure \ref{fig:experiments_copula_error}
based on 100 repetitions. Figure \ref{fig:experiments_copula} in
the supplementary material also shows the pairwise scatterplots and
density contour plots of the generated samples by parallel tempering
and TemperFlow, respectively. As expected, TemperFlow samples have
desirable quality close to the ground truth, whereas other methods
encounter great issues given the huge number of modes and high dimensions.

Of course, both TemperFlow and MCMC are iterative algorithms, so their
sampling errors would be affected by the number of iterations, which
further impact the overall computing time. We report more comparison
results in Section \ref{subsec:mcmc_more_iters} of the supplementary
material, and also show the computing time of different algorithms
in Table \ref{tab:benchmark}. We remark that MCMC only has generation
costs, whereas TemperFlow also has a training cost. The main difference
is that the generation cost of MCMC is proportional to the number
of Markov chain iterations, but for TemperFlow it is nearly fixed
and extremely small. Instead, the training cost of TemperFlow scales
linearly with the number of optimization iterations.

\section{Application: Deep Generative Models}

\label{sec:application}

Modern deep learning techniques have attracted enormous attentions
from statistical researchers and practitioners, among which deep generative
models are a class of important unsupervised learning methods \citep{salakhutdinov2015learning,bond2021deep}.
Deep generative models attempt to model the statistical distribution
of high-dimensional data using DNNs, with wide applications in image
synthesis, text generation, etc.

One general class of deep generative models has the form $X=G(Z)$,
where $X\in\mathbb{R}^{p}$ is the high-dimensional data point, for
example, an image, $Z\in\mathbb{R}^{d}$ is a latent random vector
with $d\ll p$, and $G:\mathbb{R}^{d}\rightarrow\mathbb{R}^{p}$ is
a DNN, typically called the generator. The distribution of $Z$ is
characterized by an energy function $E:\mathbb{R}^{d}\rightarrow\mathbb{R}$,
which is also a DNN. In this sense, the pair $(E_{d},G_{d,p})$ defines
a deep generative model, where the subscripts $d$ and $p$ indicate
the dimensions. There is a great deal of literature discussing the
modeling and estimation of $(E_{d},G_{d,p})$; see \citet{che2020your,pang2020learning}
for more details. In this section, we treat the functions $E$ and
$G$ as known, and focus on the sampling of $p(z)\propto\exp\{-E(z)\}$,
as it is the key to generating new data points of $X$. 

We first consider generative models for the Fashion-MNIST data set
\citep{xiao2017fashion}. The Fashion-MNIST data contain a training
set of 60,000 images and a testing set of 10,000 images, each consisting
of $28\times28$ grey-scale pixels. The plot on the left of Figure
\ref{fig:fmnist_2d} shows 60 images from the data set. We then build
a deep generative model $(E_{2},G_{2,784})$ with $d=2$ based on
existing literature \citep{che2020your}, and the generated images
are shown in the right plot of Figure \ref{fig:fmnist_2d}.

\begin{figure}[h]
\begin{centering}
\includegraphics[width=0.49\textwidth]{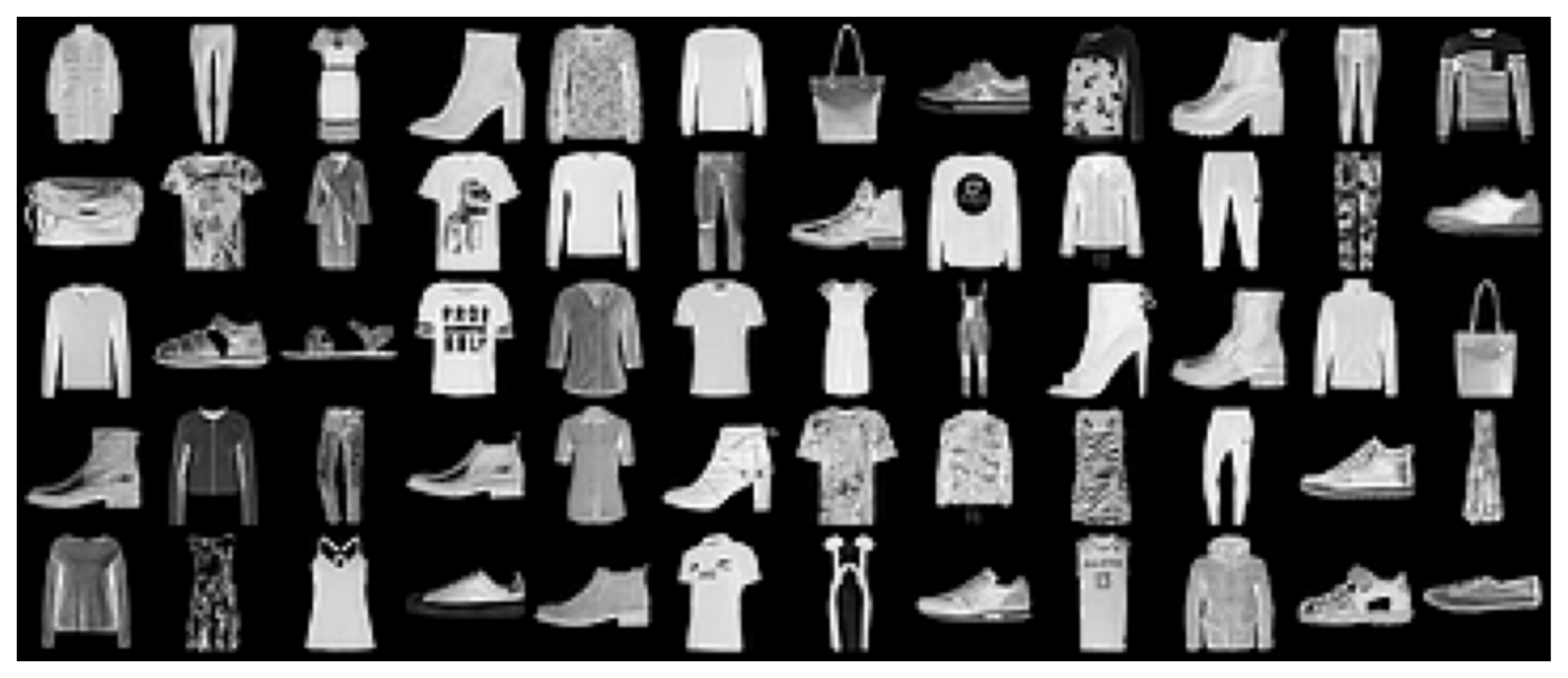} \includegraphics[width=0.49\textwidth]{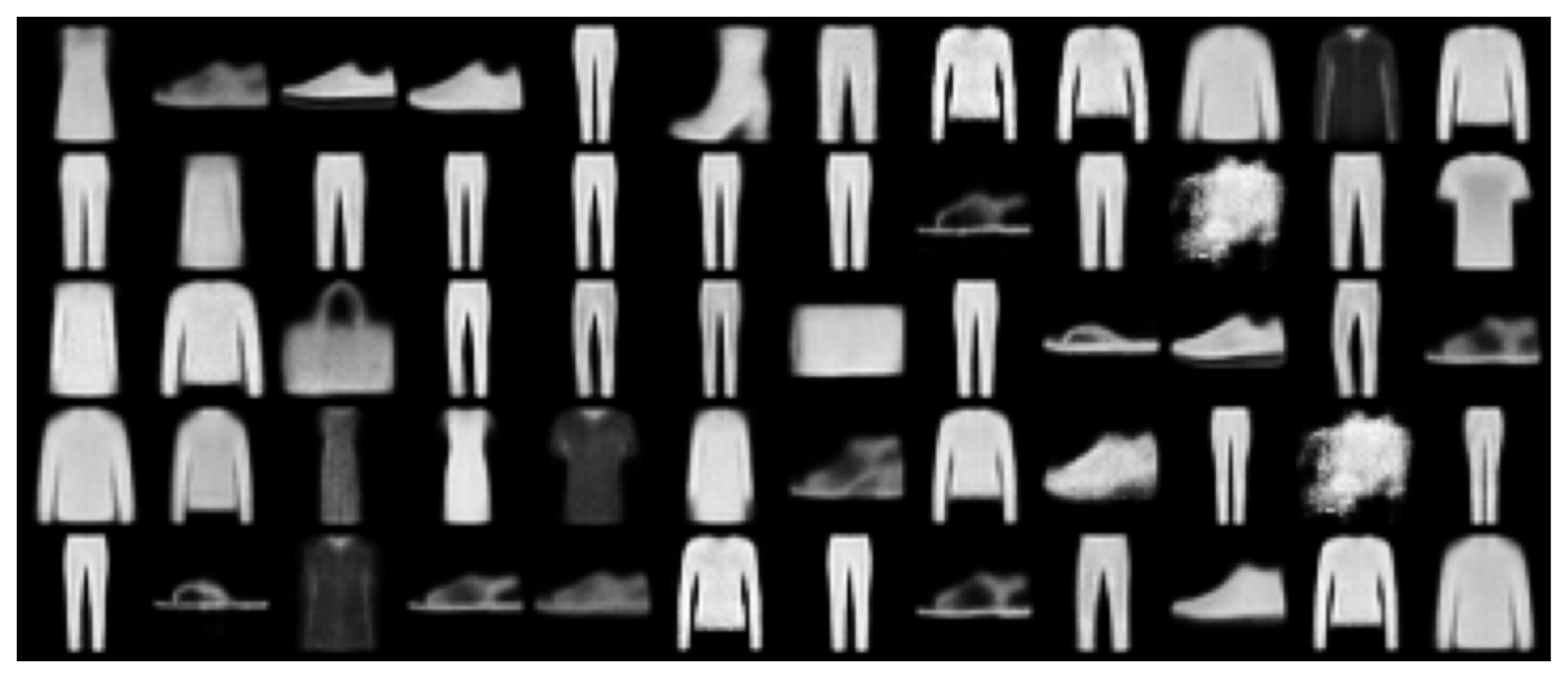}
\par\end{centering}
\caption{\label{fig:fmnist_2d}Left: true images from the Fashion-MNIST data
set. Right: images generated by a deep generative model with latent
dimension $d=2$.}
\end{figure}

\begin{figure}[h]
\begin{centering}
\includegraphics[width=0.85\textwidth]{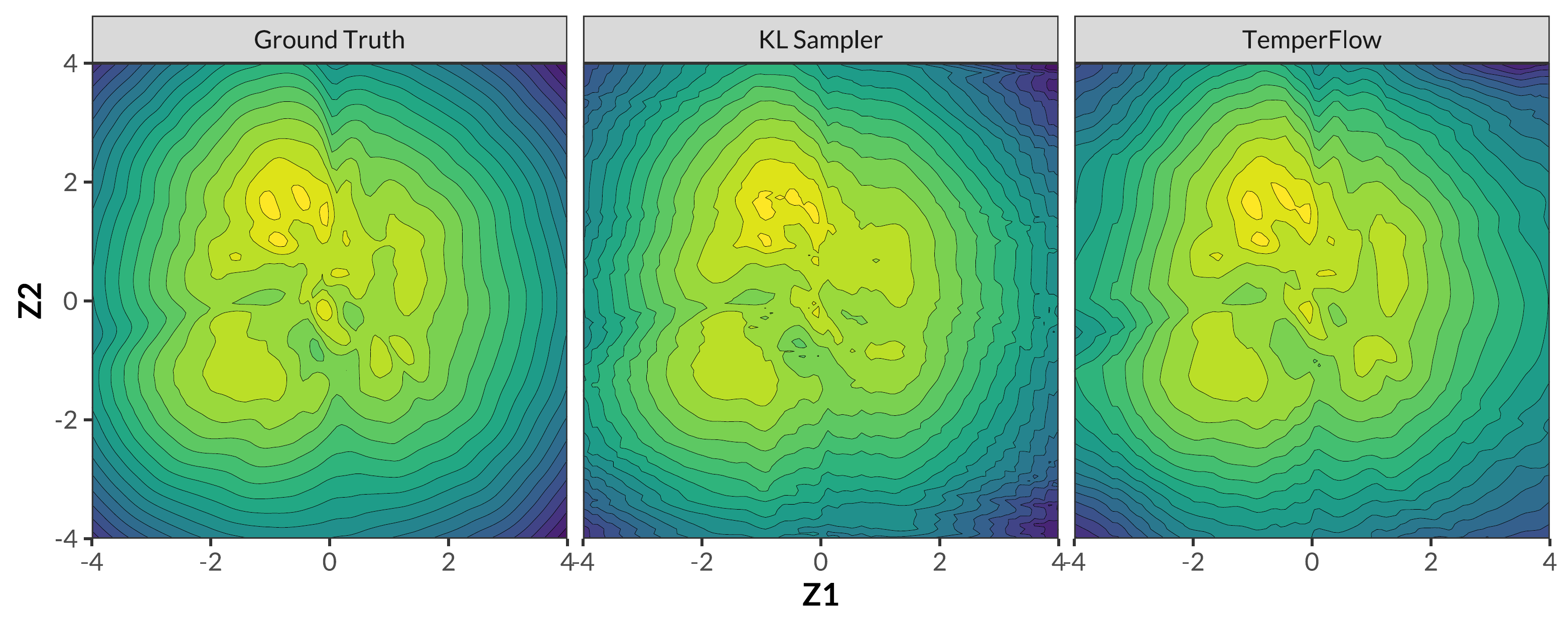}
\par\end{centering}
\caption{\label{fig:fmnist_2d_logp}Contour plots of the true log-density function
(left), the function learned by the KL sampler (middle), and the function
learned by TemperFlow (right).}
\end{figure}

\begin{figure}
\begin{centering}
\includegraphics[width=0.3\textwidth]{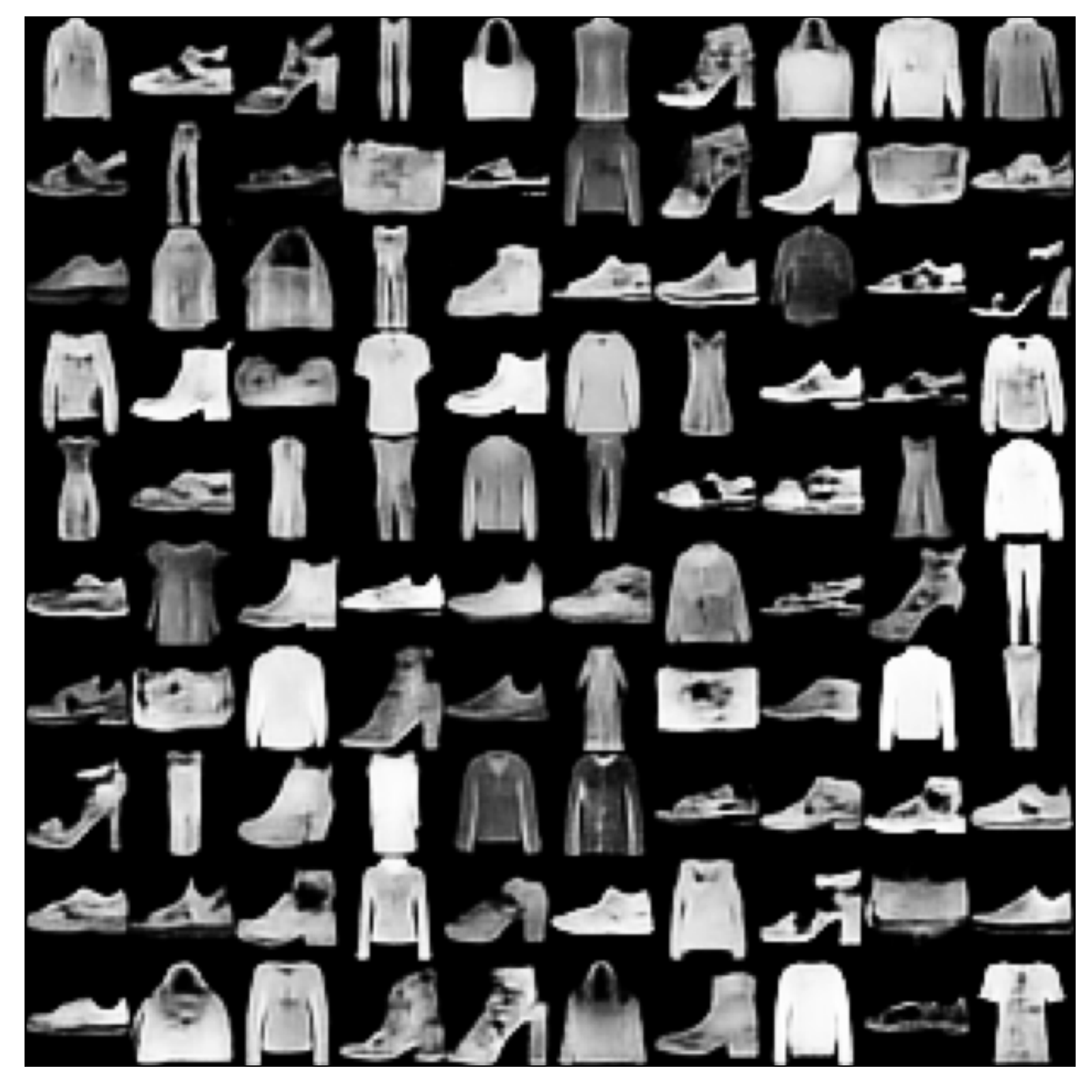} \includegraphics[width=0.3\textwidth]{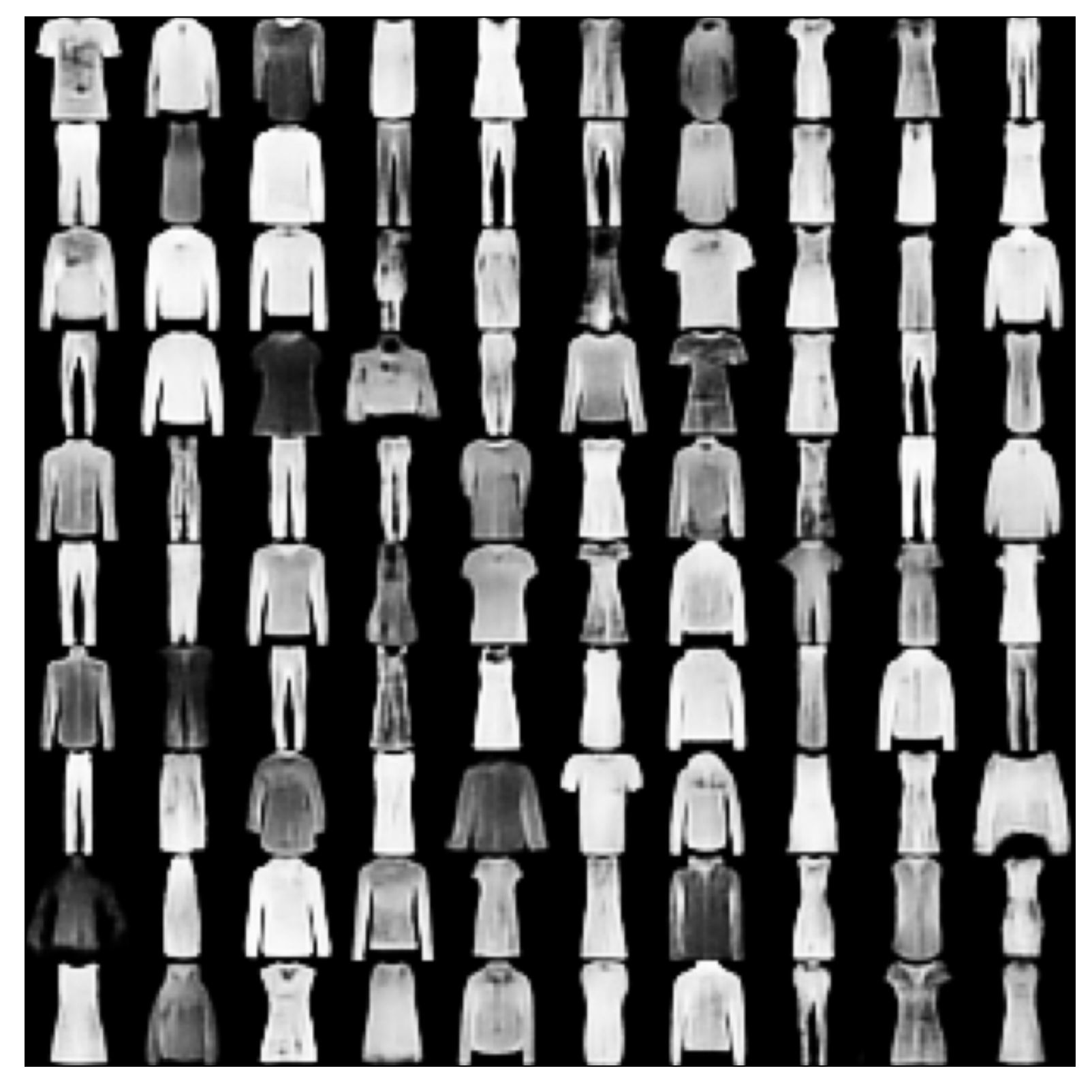}
\includegraphics[width=0.3\textwidth]{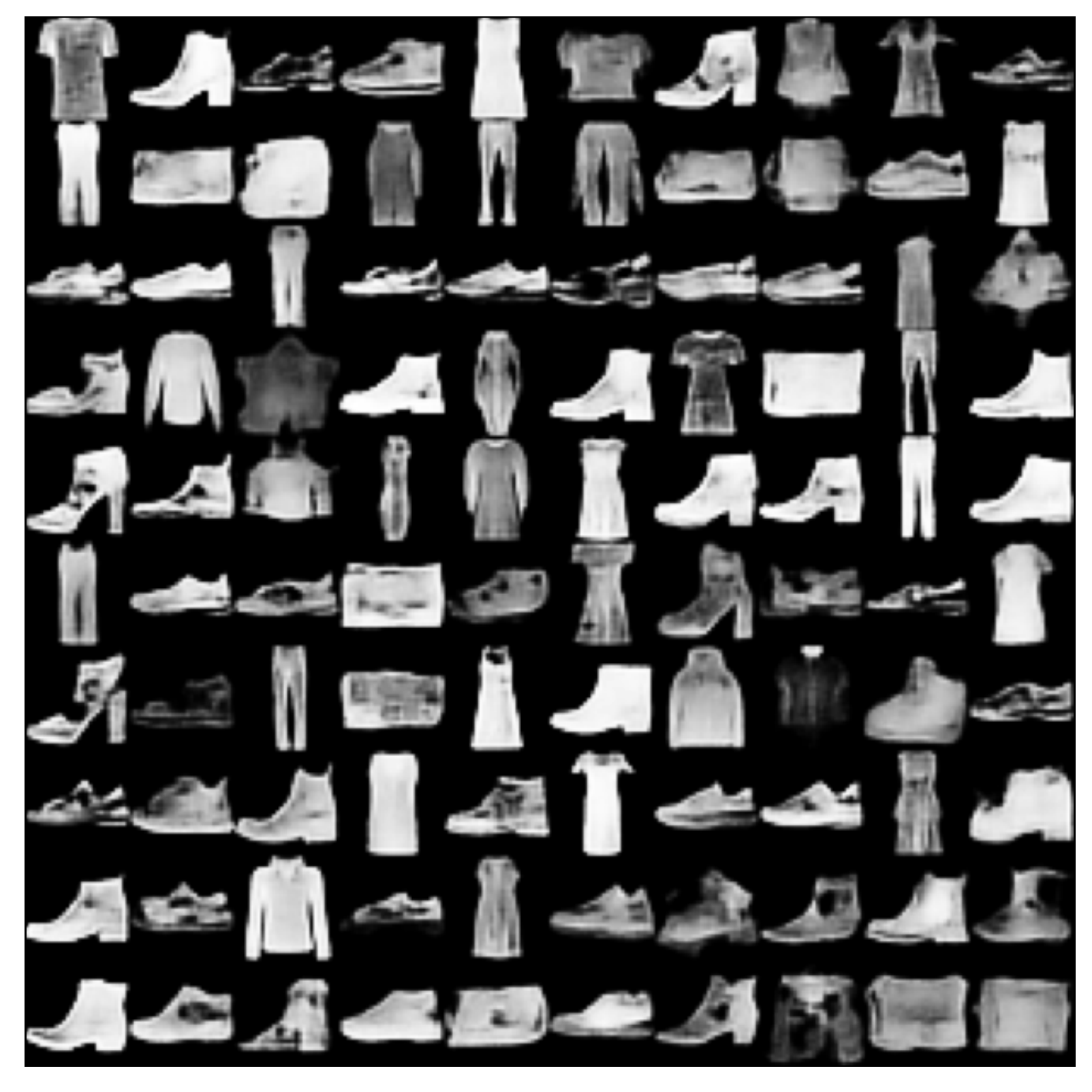}
\par\end{centering}
\caption{\label{fig:fmnist_rd}Left: images generated by a deep generative
model with latent dimension $d=32$. Middle: images sampled from the
model using the KL sampler. Right: images sampled using the TemperFlow
sampler.}
\end{figure}

Since generating images requires independent samples, the measure
transport sampling method is especially useful for this task. To this
end, we compare the existing KL sampler and our proposed TemperFlow
in sampling from $p(z)$. After both samplers are trained, we visualize
their estimated log-density functions $\log p(z)$ via contour plots
in Figure \ref{fig:fmnist_2d_logp}, and also compare them with the
ground truth.

It is clear that the true log-density function $\log p(z)$ contains
many isolated modes, but the KL sampler only captures a few of them.
In contrast, the distribution given by TemperFlow is very close to
the ground truth. The difference becomes more evident when the latent
dimension $d$ increases. We fit another generative model $(E_{32},G_{32,784})$
with a larger latent dimension $d=32$, and show the generated images
in the left plot of Figure \ref{fig:fmnist_rd}. Similar to the $d=2$
case, we train both the KL sampler and the TemperFlow sampler from
the energy function $E(z)$. In the higher-dimensional case, we are
no longer able to visualize the density function directly. Instead,
we simulate latent variables $Z_{1},\ldots,Z_{100}\in\mathbb{R}^{32}$
from both samplers, and pass them to the generator $G$ to form images
$X_{1}=G(Z_{1}),\ldots,X_{100}=G(Z_{100})\in\mathbb{R}^{784}$. The
middle and right plots of Figure \ref{fig:fmnist_rd} show the generated
images by KL sampler and TemperFlow, respectively. It is obvious that
the KL sampler only generates coat-like and trousers-like images,
indicating that it loses a lot of modes of the true distribution.
On the other hand, TemperFlow successfully preserves major modes of
$p(z)$, further validating the superior performance of the proposed
sampler.

Finally, we illustrate a much larger generative model trained from
the CelebA data set \citep{liu2015deep}. Each data point in CelebA
is a color human face image of $64\times64$ size, and we take a subset
of the original data set to form the following four categories: females
wearing glasses (2677 images), females without glasses (20000 images),
males wearing glasses (10346 images), and males without glasses (20000
images). A deep generative model $(E_{100},G_{100,12288})$ is trained
on this subset, and Figure \ref{fig:face}(a) shows images generated
by the model. Naturally, the latent energy function $E$ would reflect
these four major modes, and we test the performance of the KL sampler,
the MCMC samplers, and TemperFlow by looking at the images generated
by them.

\begin{figure}[h]
\begin{centering}
\subfloat[True model sample]{\begin{centering}
\includegraphics[width=0.33\textwidth]{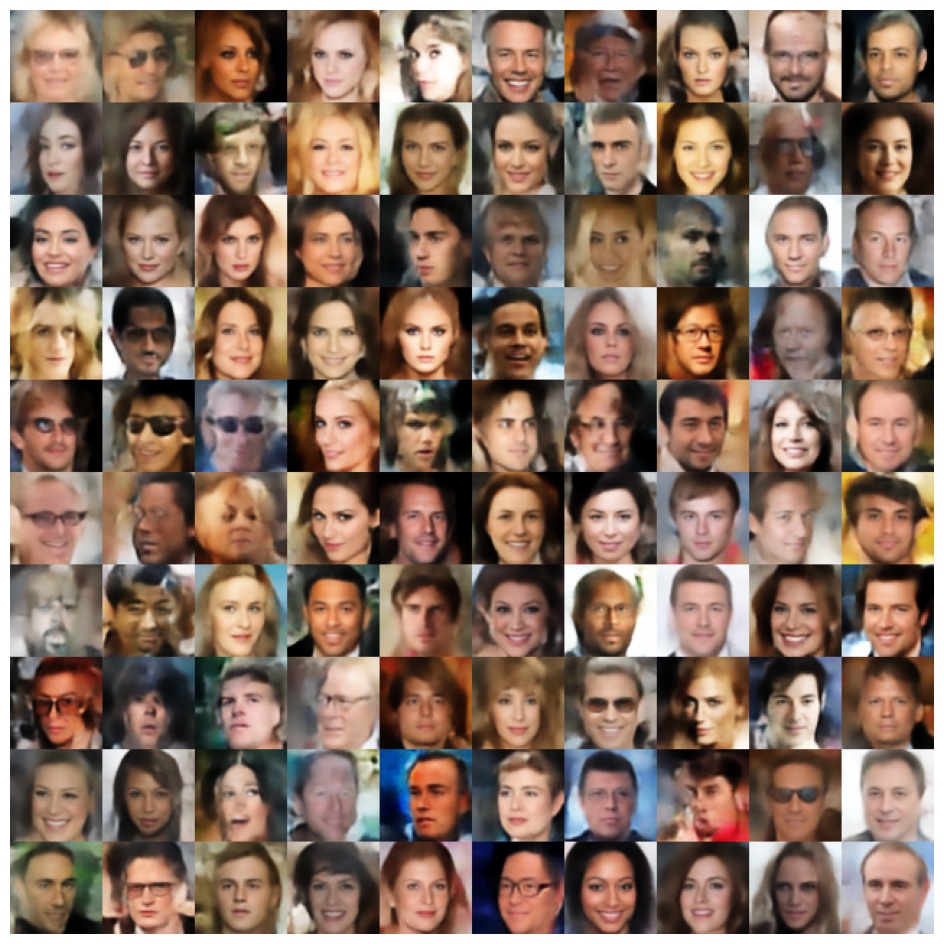}
\par\end{centering}
} \subfloat[KL sampler]{\begin{centering}
\includegraphics[width=0.33\textwidth]{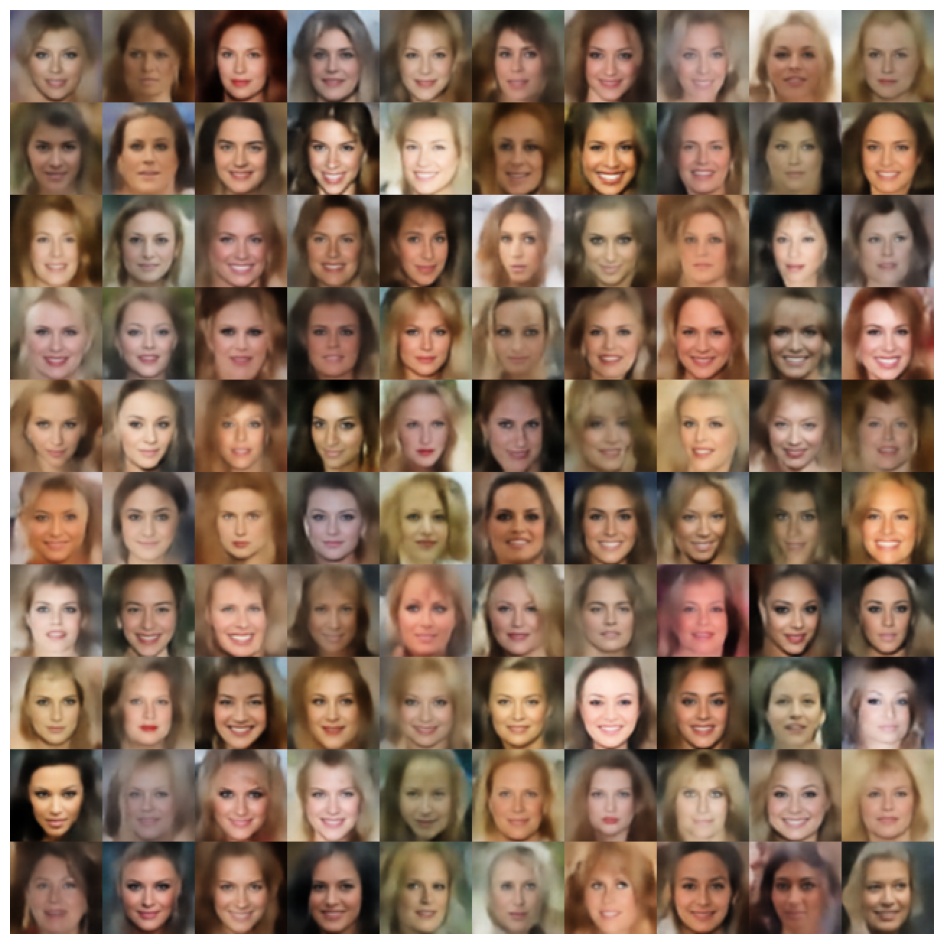}
\par\end{centering}
} \subfloat[TemperFlow]{\begin{centering}
\includegraphics[width=0.33\textwidth]{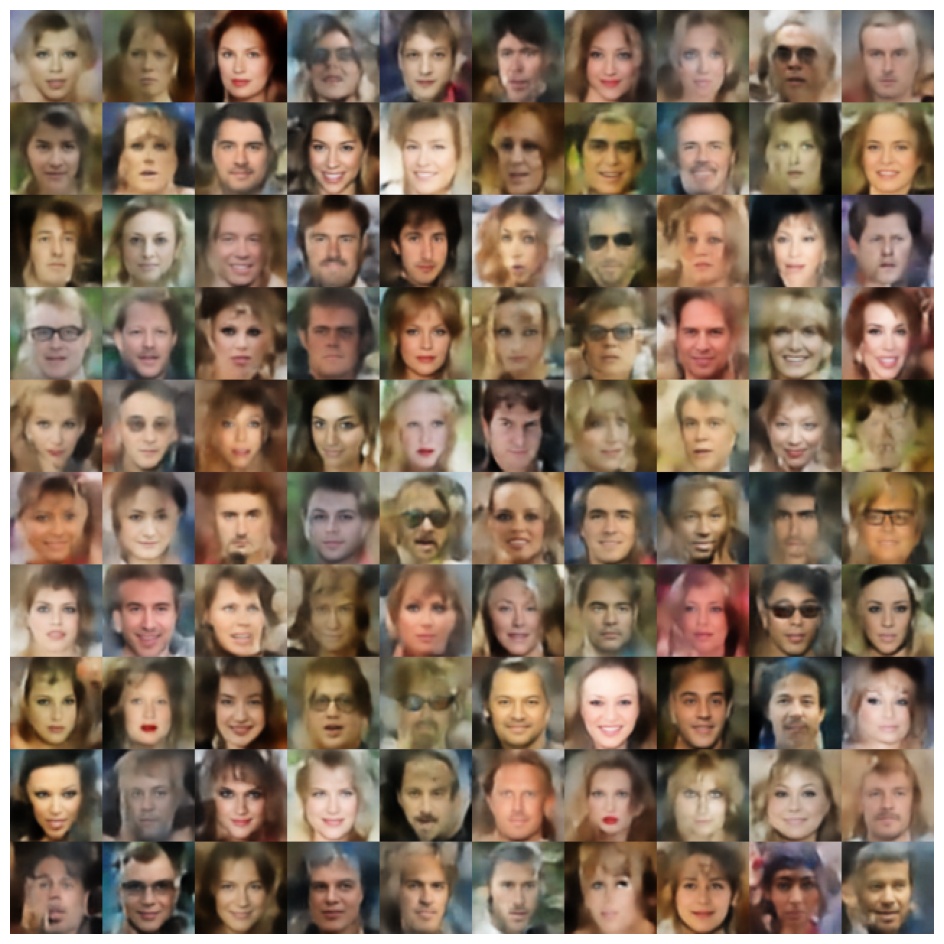}
\par\end{centering}
}
\par\end{centering}
\begin{centering}
\subfloat[Metropolis--Hastings]{\begin{centering}
\includegraphics[width=0.33\textwidth]{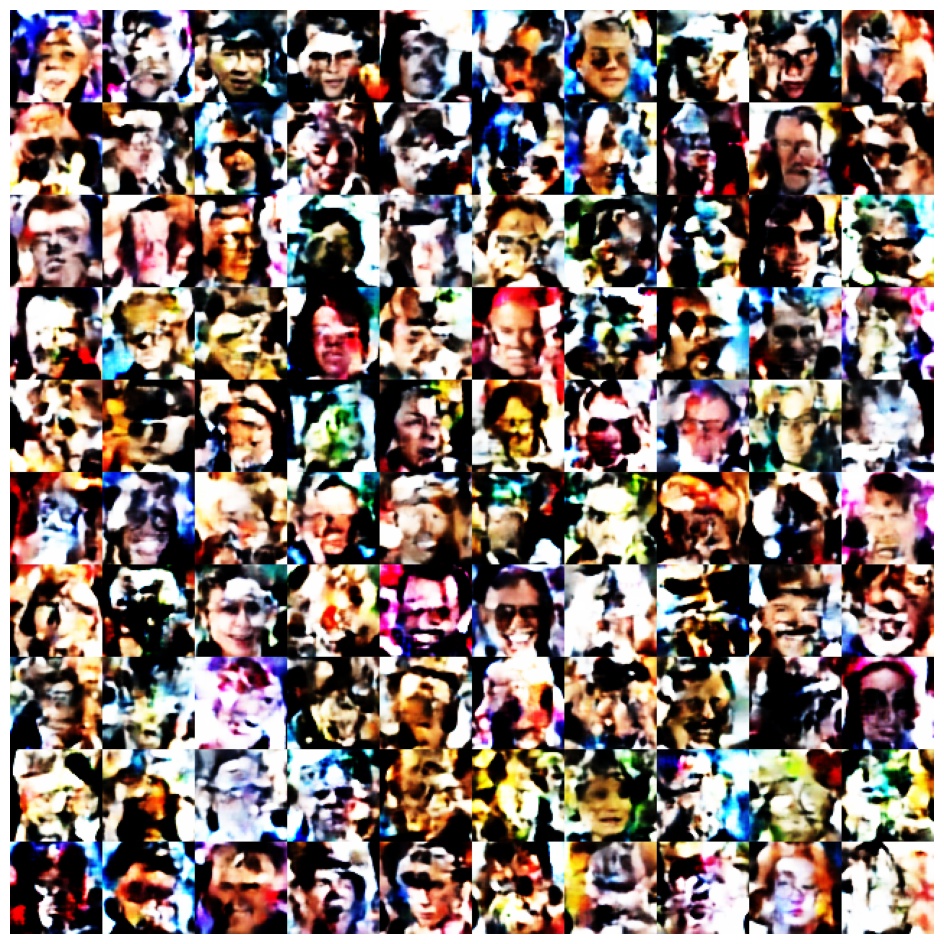}
\par\end{centering}
} \subfloat[Hamiltonian Monte Carlo]{\begin{centering}
\includegraphics[width=0.33\textwidth]{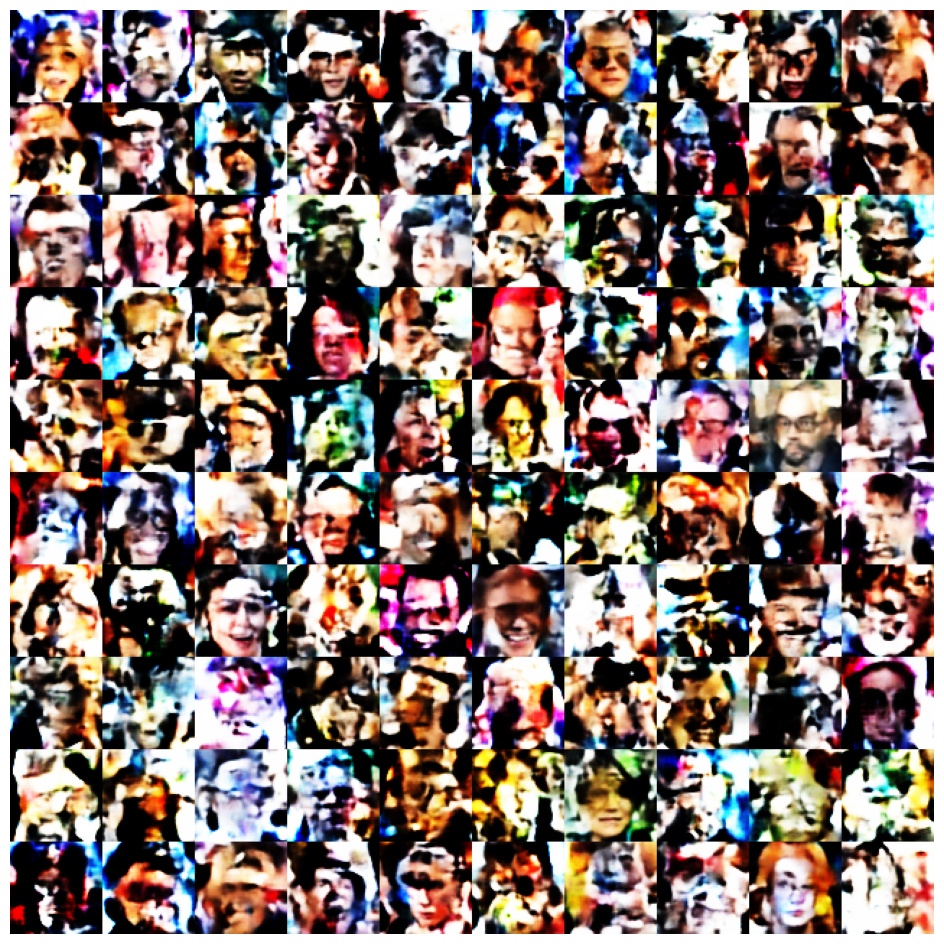}
\par\end{centering}
} \subfloat[Parallel tempering]{\begin{centering}
\includegraphics[width=0.33\textwidth]{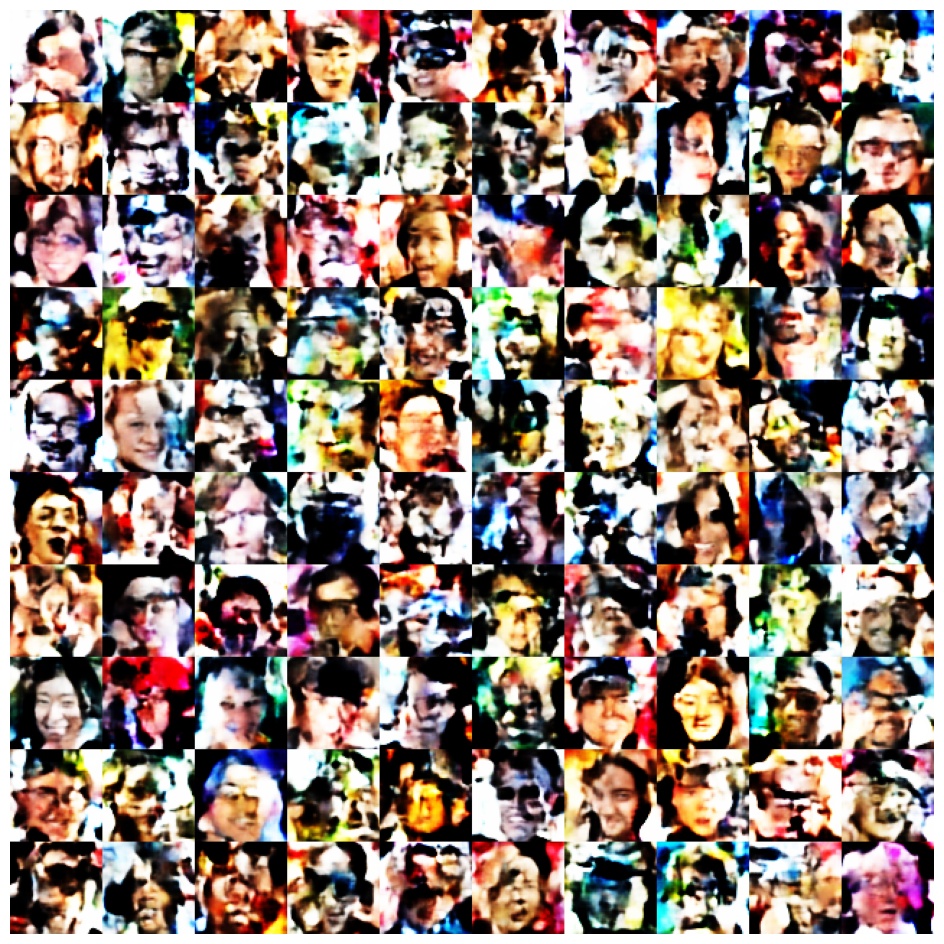}
\par\end{centering}
}
\par\end{centering}
\caption{\label{fig:face}(a) Images generated by a deep generative model with
latent dimension $d=100$, trained from the CelebA data set. (b)-(f):
Images sampled from the model using different samplers.}
\end{figure}

Figures \ref{fig:face}(b)-(f) show the result. Clearly, the KL sampler
almost only outputs one class, females without glasses, whereas TemperFlow
captures all four classes. On the other hand, MCMC samples hardly
generate visible human faces. In summary, all of the examples presented
above demonstrate the accuracy and scalability of the TemperFlow sampler.

\section{Conclusion and Discussion}

\label{sec:conclusion}

In this article, we develop a general-purpose sampler, TemperFlow,
to sample from challenging statistical distributions. TemperFlow is
inspired by the measure transport framework introduced in \citet{marzouk2016sampling},
but overcomes its critical weaknesses when the target distribution
is multimodal. Their fundamental differences are rigorously analyzed
using the Wasserstein gradient flow theory, and are also clearly illustrated
by various numerical experiments.

One interesting characteristic of TemperFlow is that it transforms
the sampling problem into a series of optimization problems, which
has the following implications. First, many modern optimization techniques
can be exploited directly to accelerate the training of transport
maps. Second, it is typically easier to diagnose the convergence of
an optimization problem than that of a Markov chain. Third, as we
primarily rely on gradient-based optimization methods, the computation
of TemperFlow is highly parallelizable, and can greatly benefit from
modern computing hardware such as GPUs. All these aspects reflect
the huge advantages of TemperFlow in computational efficiency.

As a general-purpose sampling method, TemperFlow can be compared with
MCMC on many aspects, but one of the most visible differences is that
TemperFlow has a training stage, whereas MCMC directly generates random
variates. This difference serves as a guide on which method to use
when sampling is needed. Specifically, TemperFlow is especially useful
when a large number of independent random variates are requested,
and MCMC, on the other hand, may be more suitable in computing expectations
for distributions that are continually changing, for example in the
Monte Carlo EM algorithm \citep{wei1990monte,levine2001implementations}.
To this end, we do not position TemperFlow as a substitute for MCMC.
Instead, it is viewed as a complement to MCMC and other particle-based
methods. We anticipate that these methods can be combined to develop
new efficient samplers, which we leave as a future research direction.

Another promising direction is evaluating the goodness-of-fit of models
over observed data. Let $X_{1},\ldots,X_{n}\stackrel{iid}{\sim}q(x)$
be a random sample from the TemperFlow sampler. Recall that the target
density is $p(x)\propto e^{-E(x)}$. We can perform a one-sample goodness-of-fit
test $H_{0}:q=p$. The main difficulties are due to complex and high-dimensional
models and the unknown normalizing constant for $p(x)$. Traditional
goodness-of-fit tests such as the $\chi^{2}$-test and the Kolmogorov--Smirnov
test can hardly be applied. The nonparametric one-sample test based
on the Stein discrepancy was studied in \citet{chwialkowski2016kernel}.
Note that the goodness-of-fit test for the TemperFlow sampler has
some special features that are different from the previous problems.
In particular, the density $q$ has an explicit formula given the
normalizing flow model. This feature allows us to propose a more powerful
test, such as the test based on Fisher divergence. The Fisher divergence
is stronger than many other divergences, such as total variation,
Hellinger distance, and Wasserstein distance \citep{ley2013stein}.\newpage{}

\appendix
\setcounter{figure}{0}
\setcounter{table}{0}
\renewcommand\thefigure{S\arabic{figure}}
\renewcommand\thetable{S\arabic{table}}

\section{Appendix}

\subsection{Invertible Neural Networks}

\label{sec:inn}

Invertible neural networks (INNs) are a class of neural network models
that are invertible with respect to its inputs. An INN can be viewed
as a mapping $T:\mathbb{R}^{d}\rightarrow\mathbb{R}^{d}$ such that
$T^{-1}$ exists and can be efficiently evaluated, and $T$ is typically
a composition of simpler mappings, $T=T_{K}\circ T_{K-1}\circ\cdots\circ T_{1}$,
where each $T_{i}$ is invertible. INNs are mainly used to implement
normalizing flow models \citep{tabak2010density,tabak2013family,rezende2015variational},
which can be described as transformations of a probability density
through a sequence of invertible mappings.

Normalizing flows were originally developed as nonparametric density
estimators \citep{tabak2010density,tabak2013family}, and the mappings
used there were simple functions with limited expressive powers. After
normalizing flows were introduced to the deep learning community,
many powerful INN-based models were developed, including affine coupling
flows (\citealp{dinh2014nice}; \citealp{dinh2016density}), masked
autoregressive flows \citep{papamakarios2017masked}, inverse autoregressive
flows \citep{kingma2016improving}, neural spline flows \citep{durkan2019neural},
and linear rational spline flows \citep{dolatabadi2020invertible},
among many others. It is worth mentioning that the term ``flow''
in normalizing flows has a conceptual gap with that in gradient flows,
where the latter is the focus of this article. Therefore, to avoid
ambiguity, in this article we use invertible neural networks to refer
to normalizing flow models, although other forms of normalizing flows
also exist, such as the continuous normalizing flows based on ordinary
differential equations \citep{chen2018neural}, and its extension
using free-form continuous dynamics \citep{grathwohl2019scalable}.

Take the inverse autoregressive flow as an example. It is the composition
of a sequence of invertible mappings, $T=T_{K}\circ T_{K-1}\circ\cdots\circ T_{1}$.
For each $i=1,\ldots,K$, let $x=(x_{1},\ldots,x_{d})'$ be the input
vector, and denote by $y=(y_{1},\ldots,y_{d})'=T_{i}(x)$ the output
vector. Then $y$ has the following form:
\begin{align*}
y_{1} & =\mu_{i1}+\sigma_{i1}x_{1},\\
y_{j} & =\mu_{ij}(x_{1:j-1})+\sigma_{ij}(x_{1:j-1})x_{j},\quad j=2,\ldots,d,
\end{align*}
where $x_{1:r}$ means $(x_{1},\ldots,x_{r})'$, $\mu_{i1}$, $\sigma_{i1}$
are scalars, and $\mu_{ij},\sigma_{ij}:\mathbb{R}^{j-1}\rightarrow\mathbb{R}$
are neural networks for $j\ge2$. It can be easily verified that $T_{i}$
is invertible, since
\begin{align*}
x_{1} & =(y_{1}-\mu_{i1})/\sigma_{i1},\\
x_{2} & =(y_{2}-\mu_{i2}(x_{1}))/\sigma_{i2}(x_{1})=(y_{2}-\mu_{i2}((y_{1}-\mu_{i1})/\sigma_{i1}))/\sigma_{i2}((y_{1}-\mu_{i1})/\sigma_{i1}),\\
x_{j} & =(y_{j}-\mu_{ij}(x_{1:j-1}))/\sigma_{ij}(x_{1:j-1}),\quad j=3,\ldots,d,
\end{align*}
where each $x_{j}$ is a function of $y_{j}$ and $x_{1:j-1}$, and
$x_{1:j-1}$ can be recursively reduced to functions of $y_{1:j-1}$.
Furthermore, if the neural networks $\mu_{ij}(\cdot)$ and $\sigma_{ij}(\cdot)$
are differentiable, which can be easily achieved by using smooth activation
functions, then each $T_{i}$ and the whole $T$ mapping are also
differentiable. In this sense, INNs are diffeomorphisms by design
under very mild conditions.

In addition, most INNs have the desirable property that the Jacobian
matrix $\nabla T$ is a triangular matrix, so its determinant $\det\nabla T$
is simply the product of diagonal elements. Again using the example
above, we can find that
\[
\nabla T=\left(\begin{array}{cccc}
\frac{\partial y_{1}}{\partial x_{1}} & \frac{\partial y_{1}}{\partial x_{2}} & \cdots & \frac{\partial y_{1}}{\partial x_{d}}\\
\frac{\partial y_{2}}{\partial x_{1}} & \frac{\partial y_{2}}{\partial x_{2}} & \cdots & \frac{\partial y_{2}}{\partial x_{d}}\\
\vdots & \vdots & \ddots & \vdots\\
\frac{\partial y_{d}}{\partial x_{1}} & \frac{\partial y_{d}}{\partial x_{2}} & \cdots & \frac{\partial y_{d}}{\partial x_{d}}
\end{array}\right)=\left(\begin{array}{cccc}
\sigma_{i1} & 0 & \cdots & 0\\
* & \sigma_{i2}(x_{1}) & \cdots & 0\\
\vdots & \vdots & \ddots & \vdots\\
* & * & \cdots & \sigma_{id}(x_{1:d-1})
\end{array}\right),
\]
so $\det\nabla T=\sigma_{i1}\cdot\sigma_{i2}(x_{1})\cdots\sigma_{id}(x_{1:d-1})$,
and $\det\nabla T>0$ automatically holds if $\sigma_{ij}>0$.

In practice, a permutation operator $P_{\pi}(x_{1},\ldots,x_{d})=(x_{\pi(1)},\ldots,x_{\pi(d)})'$
is inserted between each pair of $T_{k}$ and $T_{k-1}$, where $(\pi(1),\ldots,\pi(d))$
is a permutation of $(1,\ldots,d)$. This is because $T_{ij}$ is
an affine mapping of $x_{j}$ conditional on $x_{1:j-1}$, and the
expressive power of $T$ would be limited if the variables do not
change the order.

\subsection{Details of Numerical Experiments}

\subsubsection{Definition of metrics}

\label{subsec:def_metrics}

Given data points $X=(X_{1},\ldots,X_{n})$ and $Y=(Y_{1},\ldots,Y_{m})$,
define the discrete 1-Wasserstein distance between $X$ and $Y$ as
\[
W(X,Y)=\min_{P\in\Pi}\ \langle P,C\rangle,\quad\Pi=\{P\in\mathbb{R}_{+}^{n\times m}:P\mathbf{1}_{m}=n^{-1}\mathbf{1}_{n},P'\mathbf{1}_{n}=m^{-1}\mathbf{1}_{m}\},
\]
where $C=(C_{ij})$ is an $n\times m$ matrix with $C_{ij}=\Vert X_{i}-Y_{j}\Vert_{1}$.
In addition, define the empirical MMD between $X$ and $Y$ as
\[
\mathrm{MMD}(X,Y)=\frac{1}{n(n-1)}\sum_{i=1}^{n}\sum_{j\neq i}K(X_{i},X_{j})+\frac{1}{m(m-1)}\sum_{i=1}^{m}\sum_{j\neq i}K(Y_{i},Y_{j})-\frac{2}{nm}\sum_{i=1}^{n}\sum_{j=1}^{m}K(X_{i},Y_{j}),
\]
where $K(\cdot,\cdot):\mathbb{R}^{d}\times\mathbb{R}^{d}\rightarrow\mathbb{R}$
is a positive definite kernel function.

Then given the target distribution $p(x)$ and points $X=(X_{1},\ldots,X_{n})$
generated by some sampling algorithm, define the adjusted 1-Wasserstein
distance and adjusted MMD between $X$ and $p$ as
\begin{align*}
\tilde{W}(X,p) & =W(X,Y)-W(Y,\tilde{Y}),\\
\tilde{\mathrm{MMD}}(X,p) & =\mathrm{MMD}(X,Y)-\mathrm{MMD}(Y,\tilde{Y}),
\end{align*}
respectively, where $Y,\tilde{Y}\sim p(x)$ are two independent random
samples of size $n$ coming from the true $p(x)$ distribution.

\subsubsection{Hyperparameter setting}

\label{subsec:hyperparameter}

\paragraph{Gaussian mixture models}

For all MCMC samplers, the initial states are generated using standard
normal distribution, the first 200 steps are dropped as burn-in, and
the subsequent 1000 points are collected as the sample. Metropolis--Hastings
uses a normal distribution with standard deviation $\sigma_{\mathrm{MH}}=0.2$
for random walks. Hamiltonian Monte Carlo uses a step size of $\varepsilon=0.2$
and five leapfrog steps in each iteration. The parallel tempering
method uses five parallel chains, with inverse temperature parameters
equally spaced in the logarithmic scale from 0.1 to 1. At each inverse
temperature $\beta$, parallel tempering uses a normal distribution
with standard deviation $\text{\ensuremath{\sigma_{\mathrm{PT},\beta}}}=0.2/\sqrt{\beta}$
for random walks.

TemperFlow is initialized by a standard normal distribution and $\beta_{0}=0.1$.
The discounting factor for adaptive $\beta$ selection is set to $\alpha=0.5$.
For each $\beta_{k}$, the $L^{2}$ sampler is run for 2000 iterations
if $\beta_{k}<0.5$, and for 1000 iterations otherwise.

\paragraph{Copula-generated distributions}

The hyperparameter setting is similar to that of the Gaussian mixture
models, except that $\sigma_{\mathrm{MH}}=\varepsilon=0.1$, $\text{\ensuremath{\sigma_{\mathrm{PT},\beta}}}=0.1/\sqrt{\beta}$,
and $\alpha=0.7$.

For the computing time benchmark in Table \ref{tab:benchmark}, each
MCMC sampler is run for 10200 iterations, with the first 200 steps
dropped as burn-in.

\subsection{Additional Experiments}

\label{sec:additional_exp}

\subsubsection{KL sampler for normal mixture distributions}

In this experiment we extend the example in Section \ref{subsec:challenges},
and study the impact of the gap between modes on the sampling quality.
The first row of Figure \ref{fig:kl_sampler_mix} is the same as the
original example, which applies the KL sampler to the mixture distribution
$p_{m}(x)\sim0.7\cdot N(1,1)+0.3\cdot N(\mu,0.25)$ with $\mu=8$.
The second and third rows of Figure \ref{fig:kl_sampler_mix} reduce
the mean gap $\mu$ to 6 and 4, respectively. It can be observed that
unless the two modes are sufficiently close to each other, the KL
sampler would only capture one mode. This finding also validates the
use of tempered distributions in the proposed framework, as tempering
has similar effects to make the modes more connected.

\begin{figure}[h]
\begin{centering}
\includegraphics[width=0.99\textwidth]{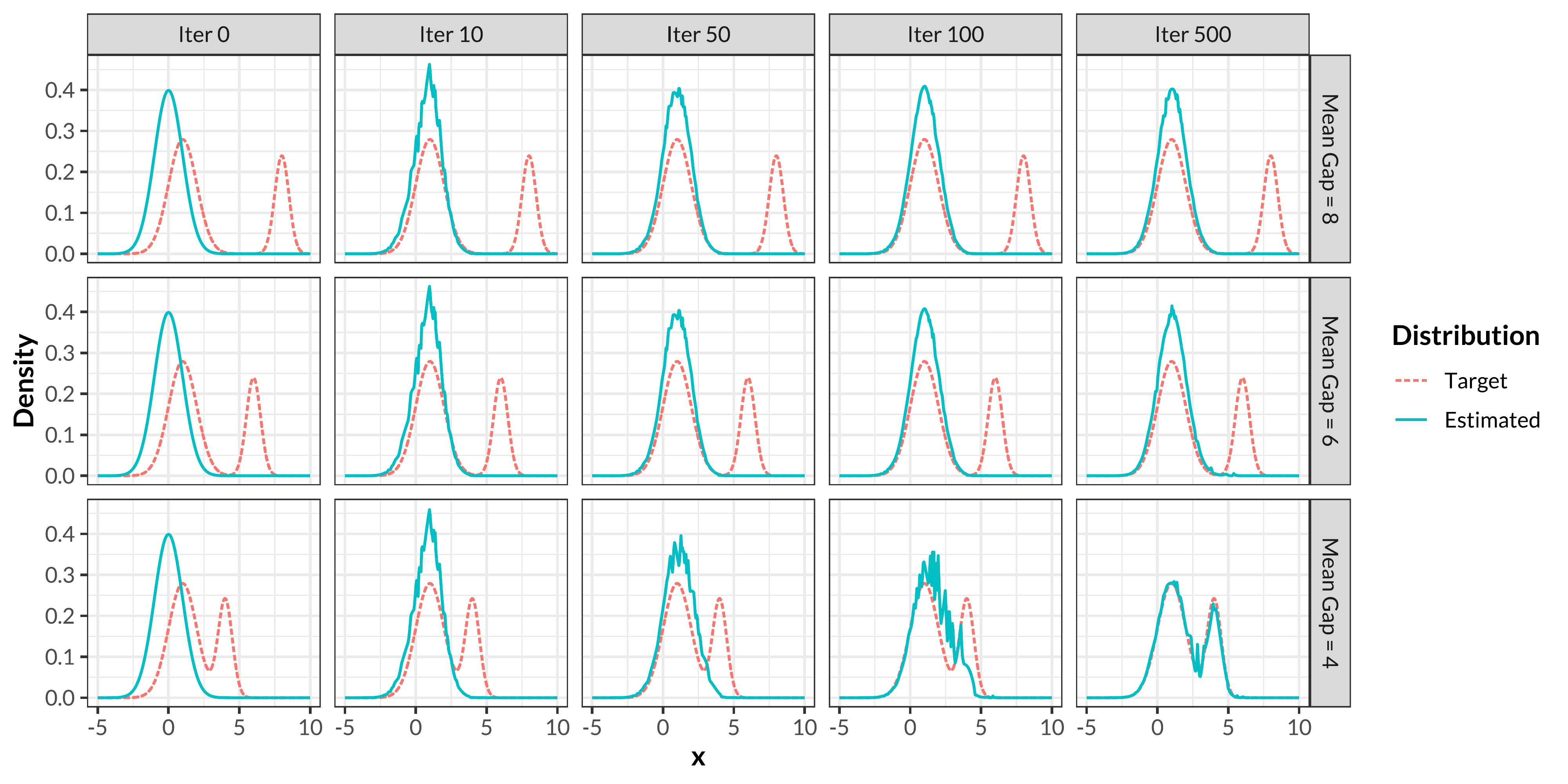}
\par\end{centering}
\caption{\label{fig:kl_sampler_mix}Applying the KL sampler to normal mixture
distributions with varying mean gaps. Each column stands for one iteration
in the optimization process.}
\end{figure}

\subsubsection{Replacing $L^{2}$ sampler with KL sampler in TemperFlow}

In this experiment, we test whether combining the tempering method
and KL sampler could achieve the same performance as TemperFlow. This
can be viewed as an ablation experiment that replaces the $L^{2}$
sampler with the KL sampler in TemperFlow. We revisit the Gaussian
mixture models in Section \ref{subsec:gmm}, and learn the samplers
using TemperFlow and its modified version, respectively. We intentionally
let both methods use the same $\beta$-sequence for factor control,
and also include the results of the KL sampler without tempering for
reference. The results are shown in Figures \ref{fig:ablation_circle},
\ref{fig:ablation_cross}, and \ref{fig:ablation_grid}.

\begin{figure}[h]
\begin{centering}
\subfloat[]{\begin{centering}
\includegraphics[width=0.25\textwidth]{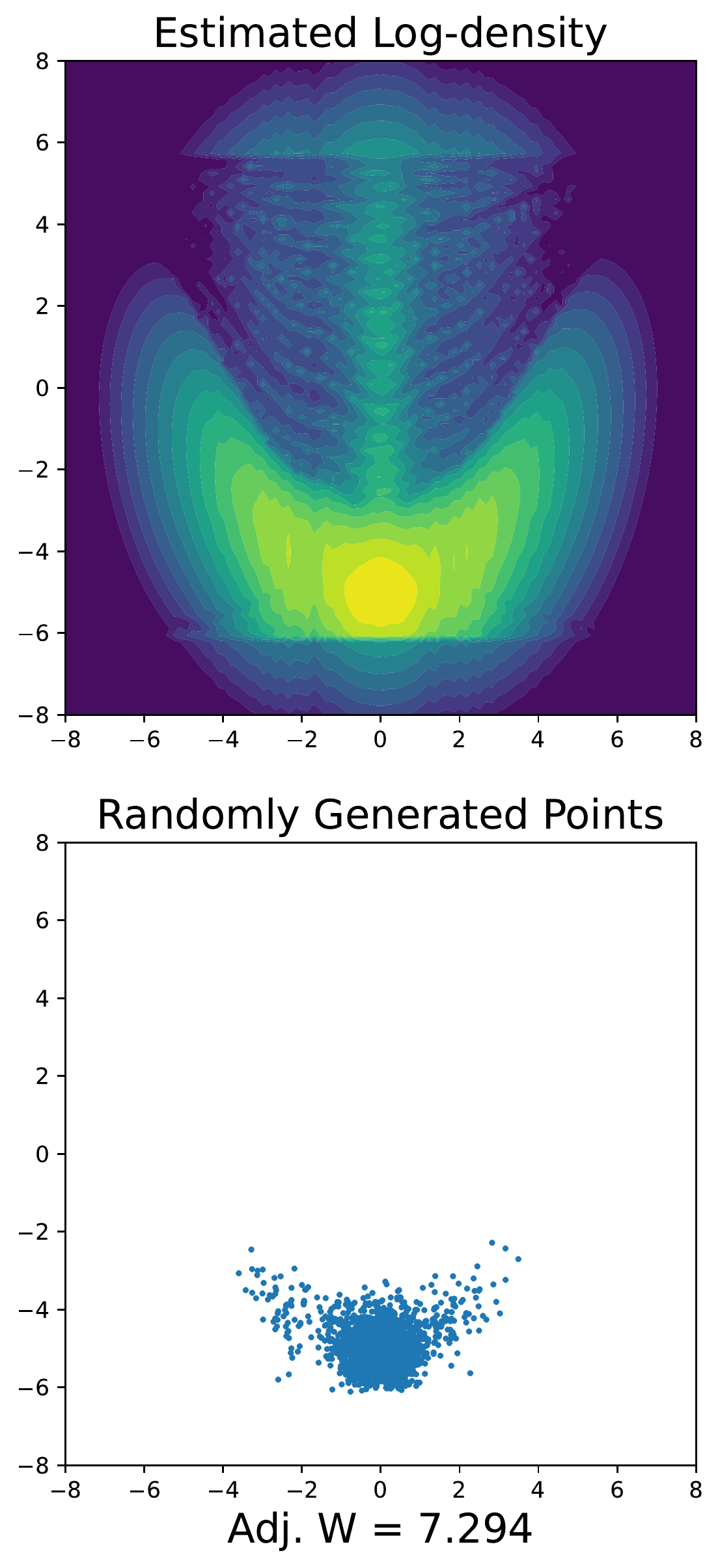}
\par\end{centering}
} \subfloat[]{\begin{centering}
\includegraphics[width=0.25\textwidth]{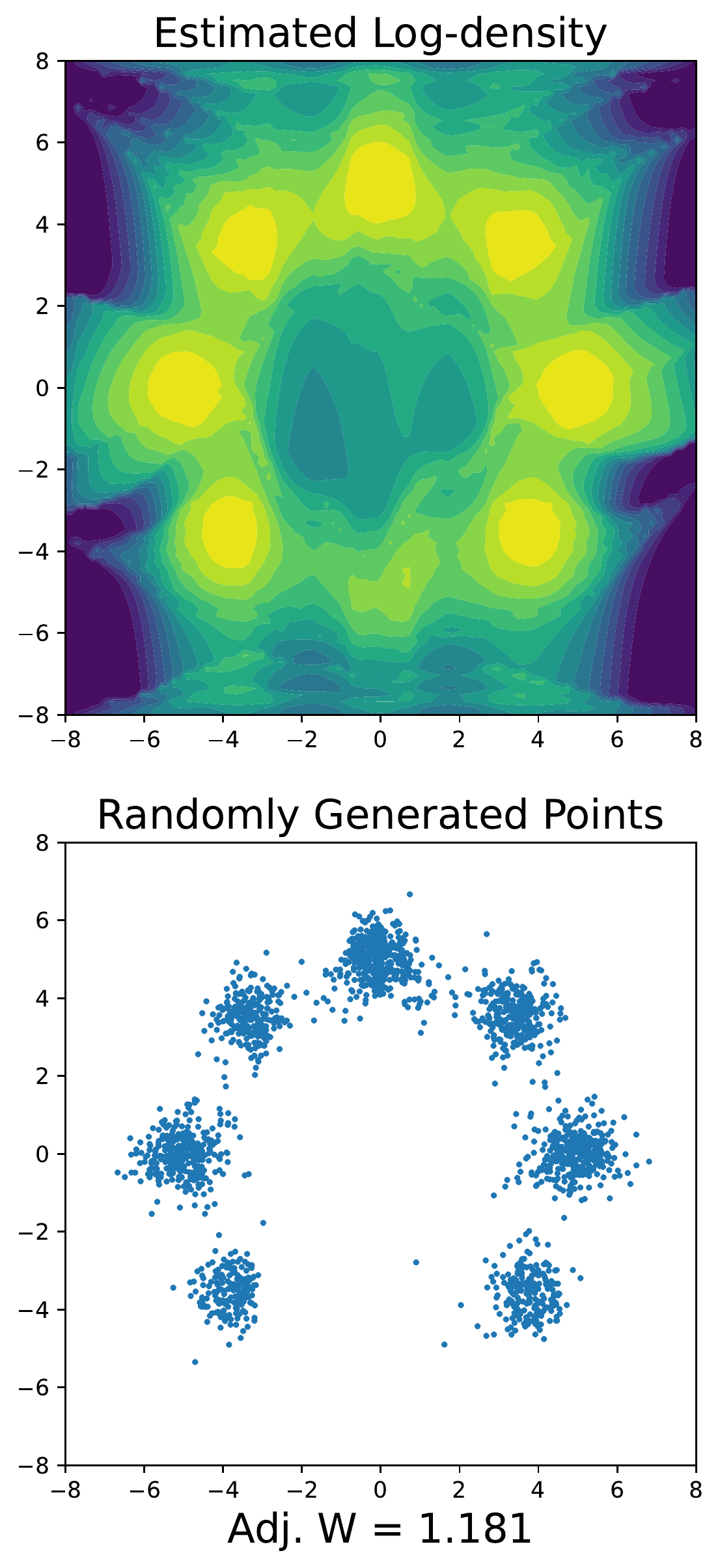}
\par\end{centering}
} \subfloat[]{\begin{centering}
\includegraphics[width=0.25\textwidth]{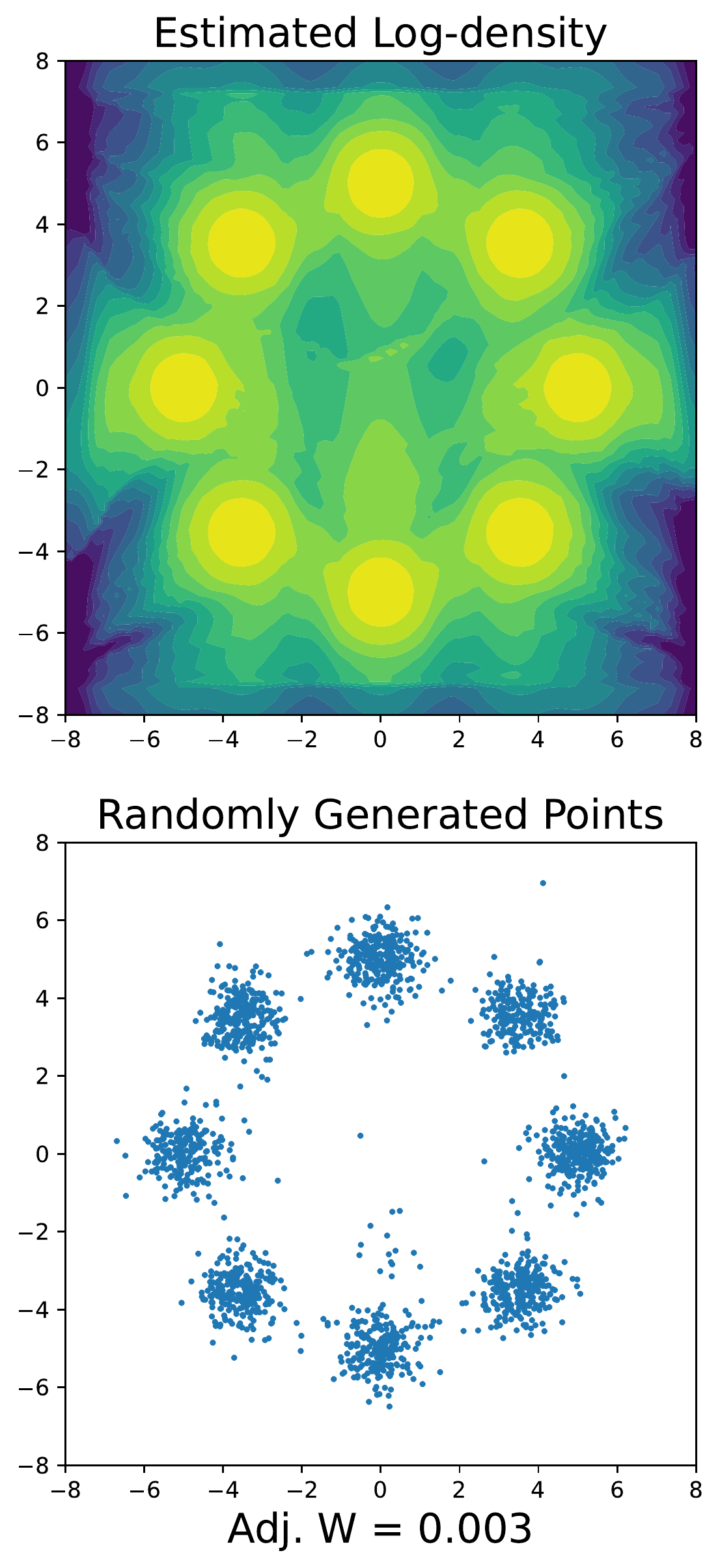}
\par\end{centering}
}
\par\end{centering}
\caption{\label{fig:ablation_circle}Sampling from the \textquotedblleft Circle\textquotedblright{}
distribution using (a) KL sampler, (b) KL sampler with tempering,
and (c) TemperFlow.}
\end{figure}

\begin{figure}
\begin{centering}
\subfloat[]{\begin{centering}
\includegraphics[width=0.25\textwidth]{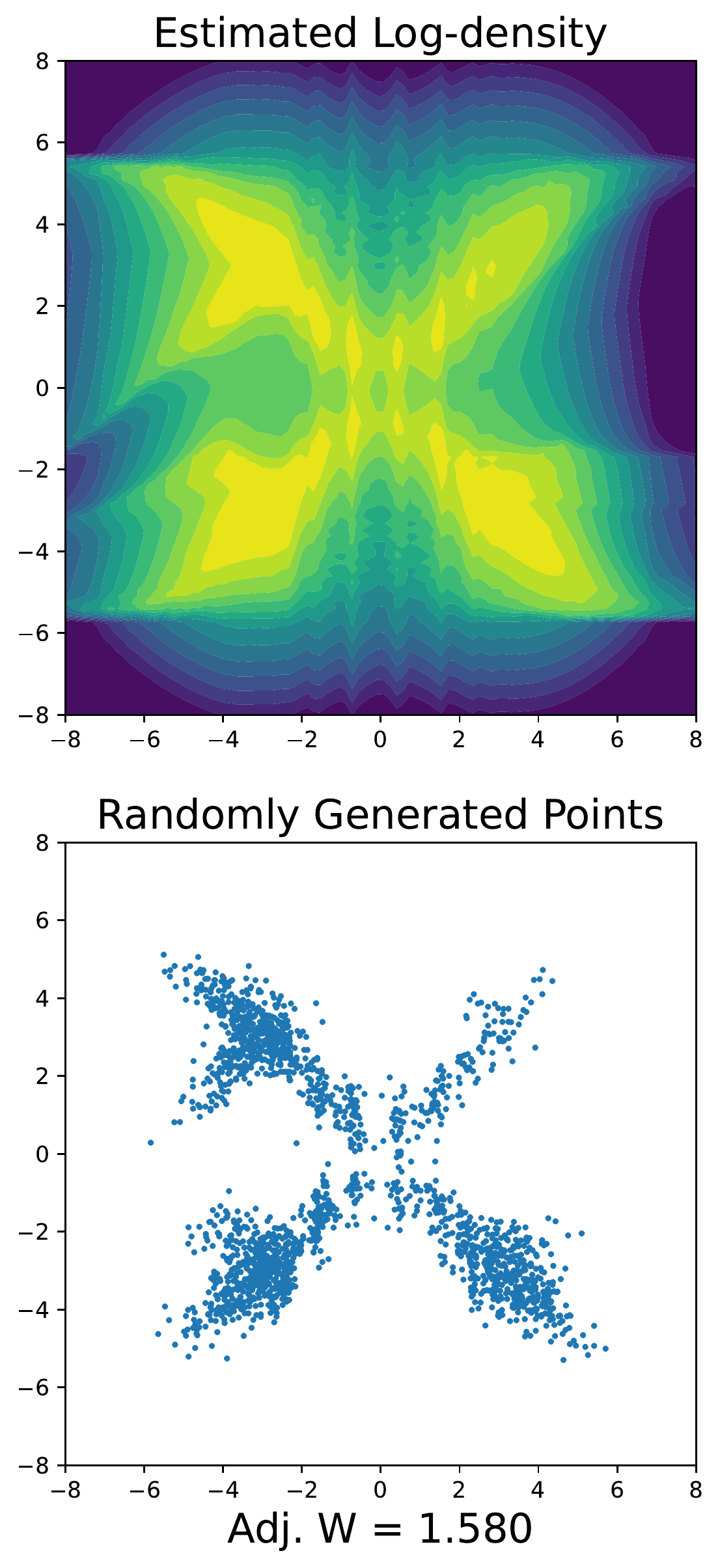}
\par\end{centering}
} \subfloat[]{\begin{centering}
\includegraphics[width=0.25\textwidth]{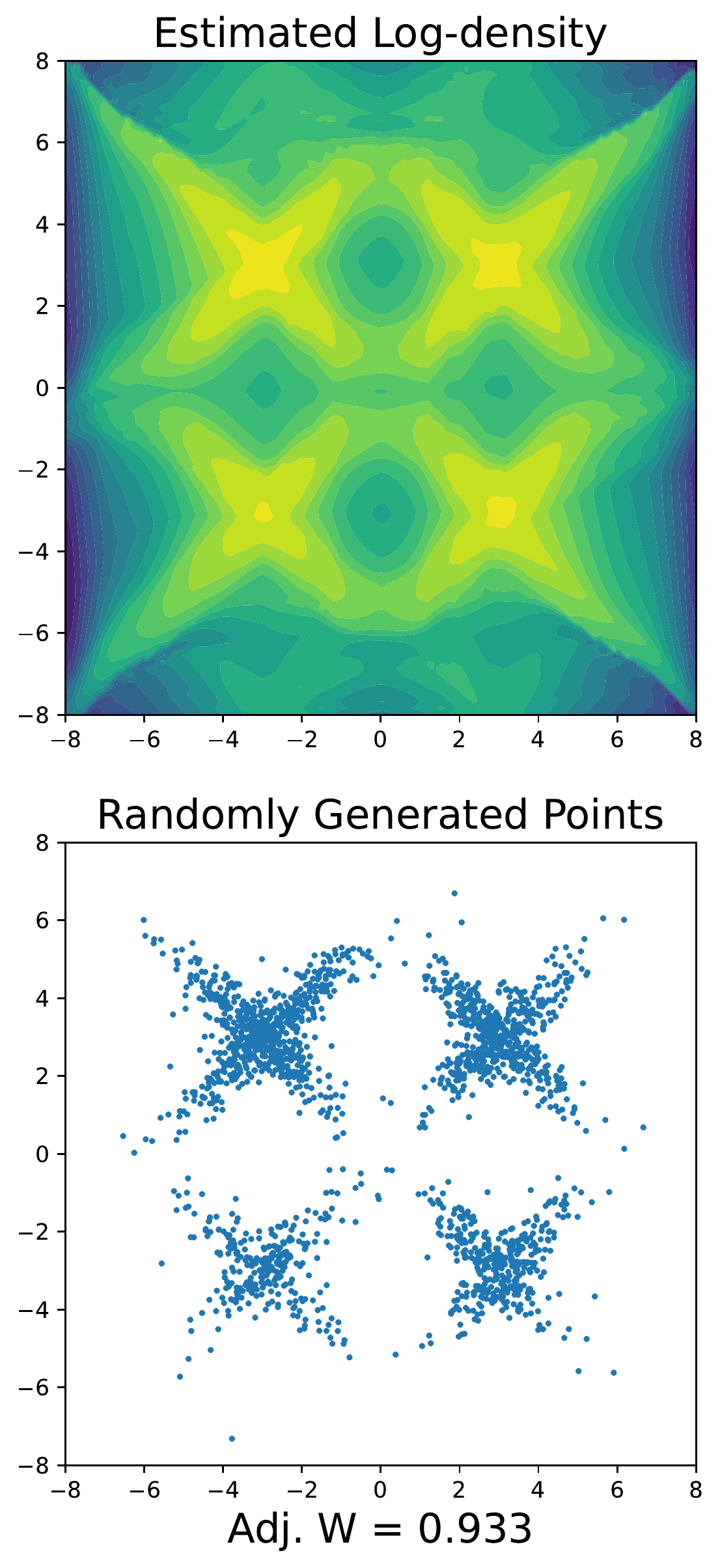}
\par\end{centering}
} \subfloat[]{\begin{centering}
\includegraphics[width=0.25\textwidth]{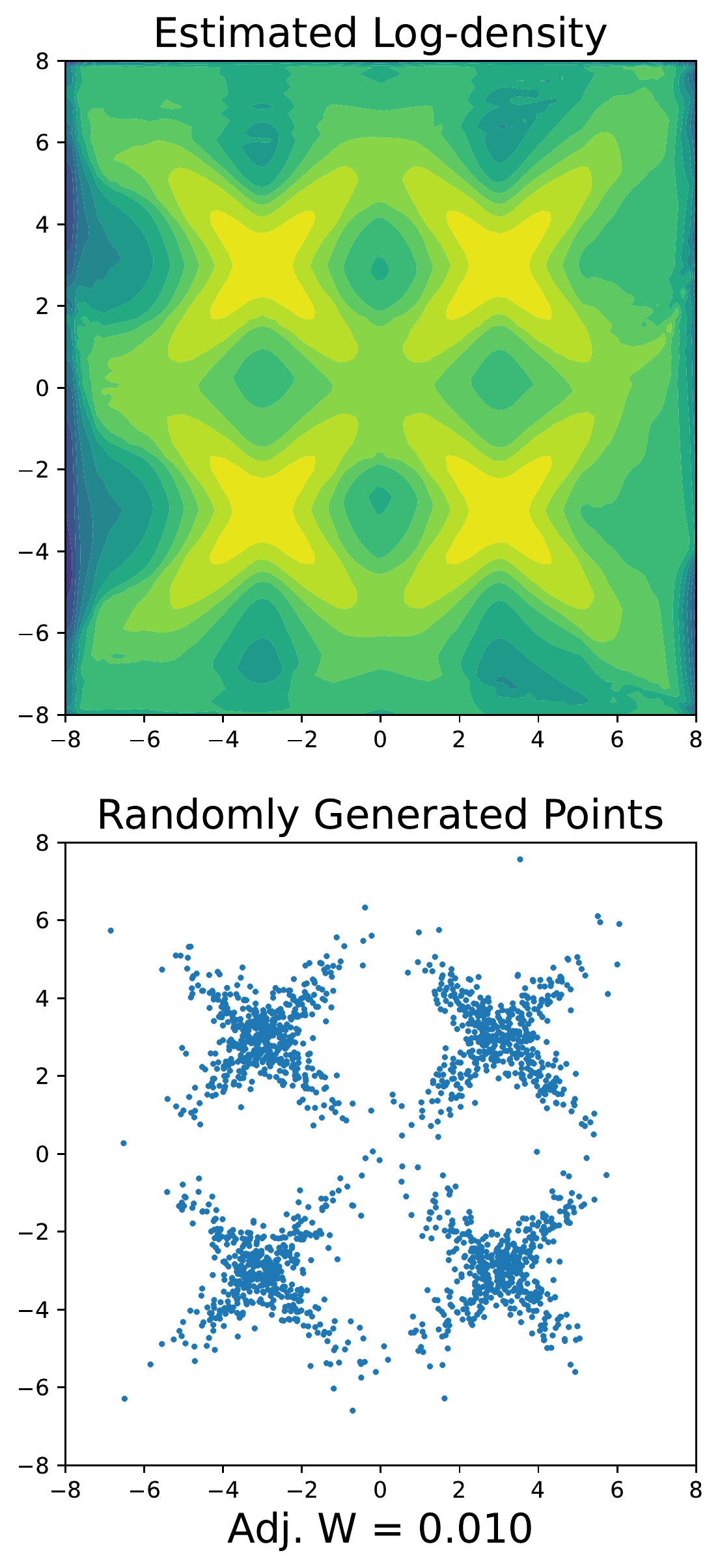}
\par\end{centering}
}
\par\end{centering}
\caption{\label{fig:ablation_cross}Sampling from the \textquotedblleft Cross\textquotedblright{}
distribution using (a) KL sampler, (b) KL sampler with tempering,
and (c) TemperFlow.}
\end{figure}

\begin{figure}
\begin{centering}
\subfloat[]{\begin{centering}
\includegraphics[width=0.25\textwidth]{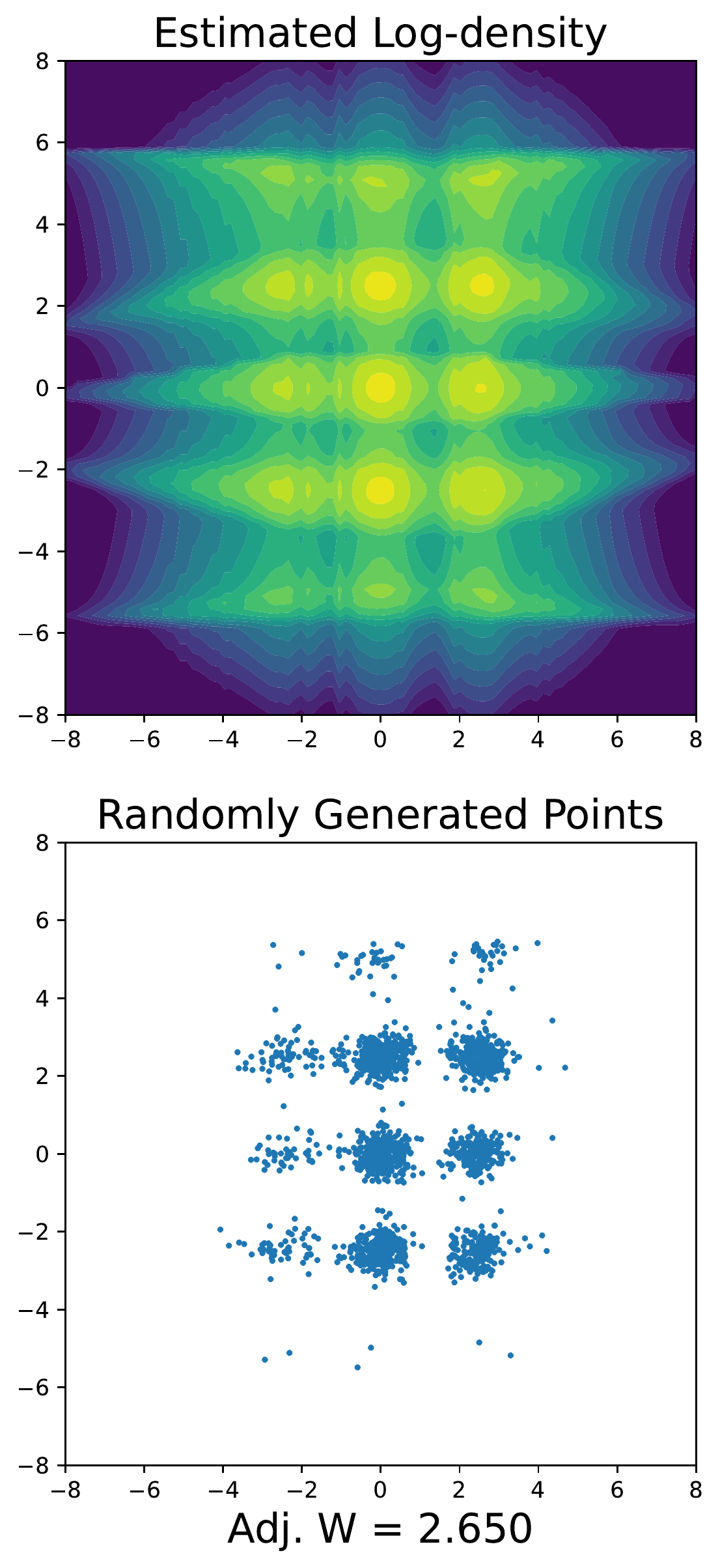}
\par\end{centering}
} \subfloat[]{\begin{centering}
\includegraphics[width=0.25\textwidth]{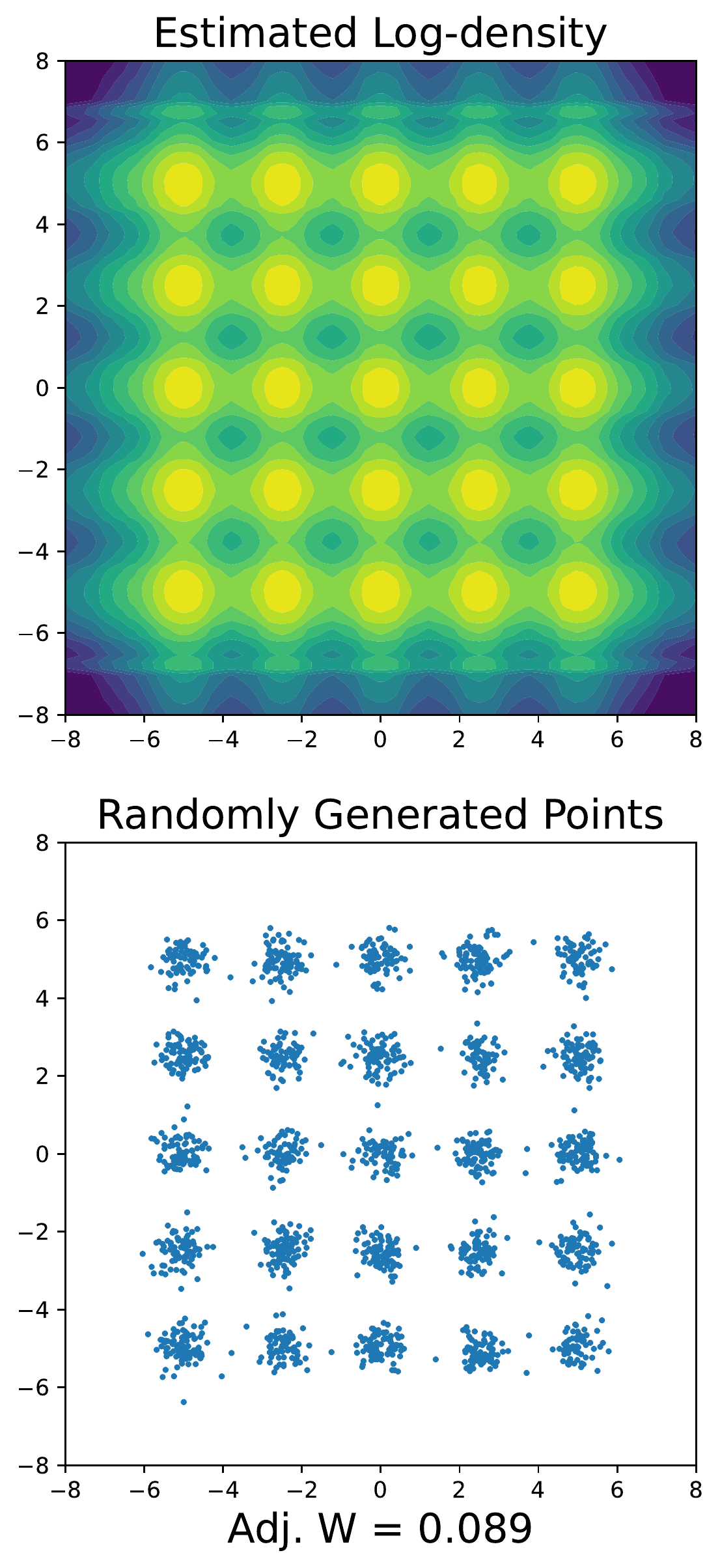}
\par\end{centering}
} \subfloat[]{\begin{centering}
\includegraphics[width=0.25\textwidth]{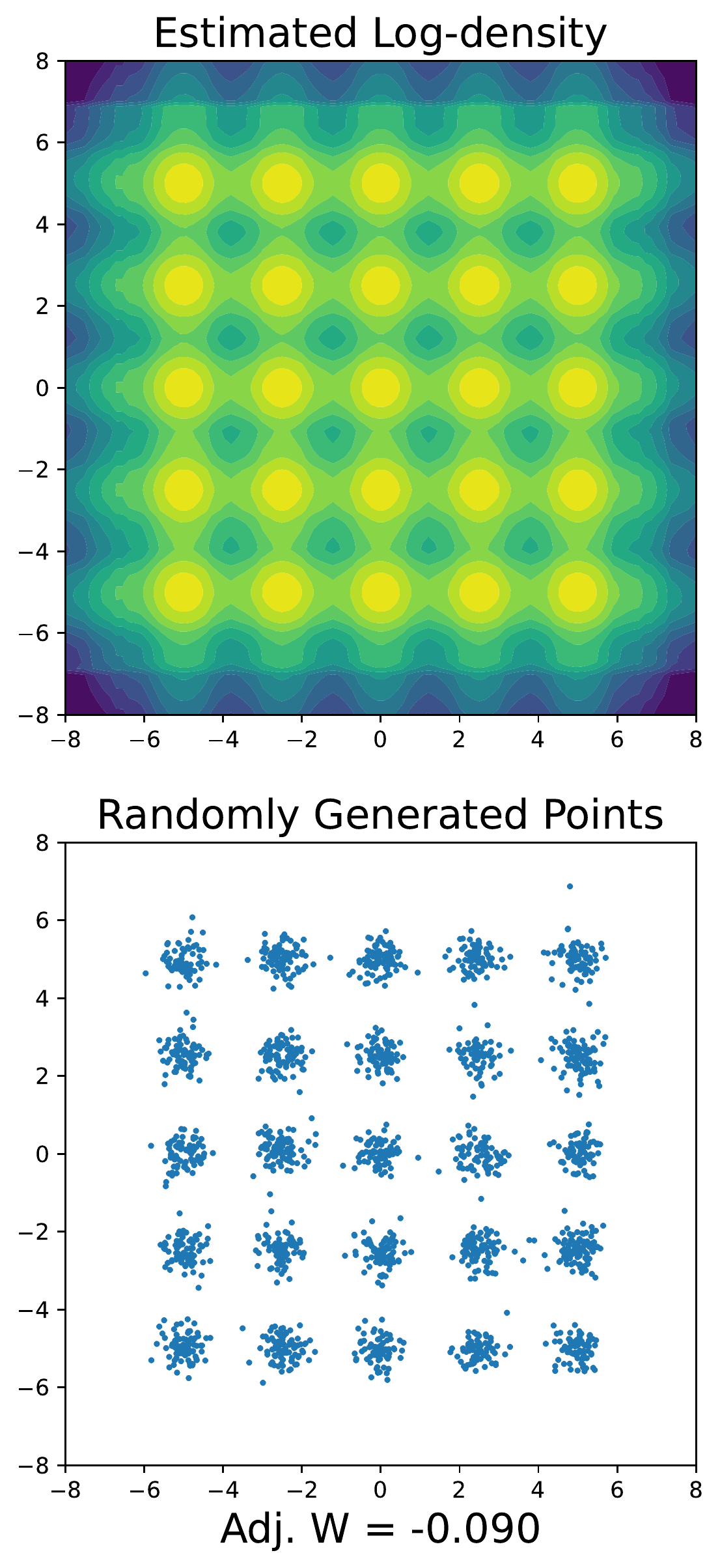}
\par\end{centering}
}
\par\end{centering}
\caption{\label{fig:ablation_grid}Sampling from the \textquotedblleft Grid\textquotedblright{}
distribution using (a) KL sampler, (b) KL sampler with tempering,
and (c) TemperFlow.}
\end{figure}

From the plots we can find that, in general, tempering greatly improves
the quality of KL samplers. However, even with tempering, KL samplers
tend to incorrectly estimate the probability mass of each mode, which
is consistent with the finding in Figure \ref{fig:demo_l2}. This
suggests that the $L^{2}$ sampler plays an important role in TemperFlow
that cannot be simply replaced by the KL sampler.

\subsubsection{Random samples of copula-generated distribution}

Figure \ref{fig:experiments_copula} shows the pairwise scatterplots
and density contour plots of the generated samples by parallel tempering
and TemperFlow, respectively, for the experiment in Section \ref{subsec:copula}
with $s=d=8$.

\begin{figure}
\begin{centering}
\includegraphics[width=0.65\textwidth]{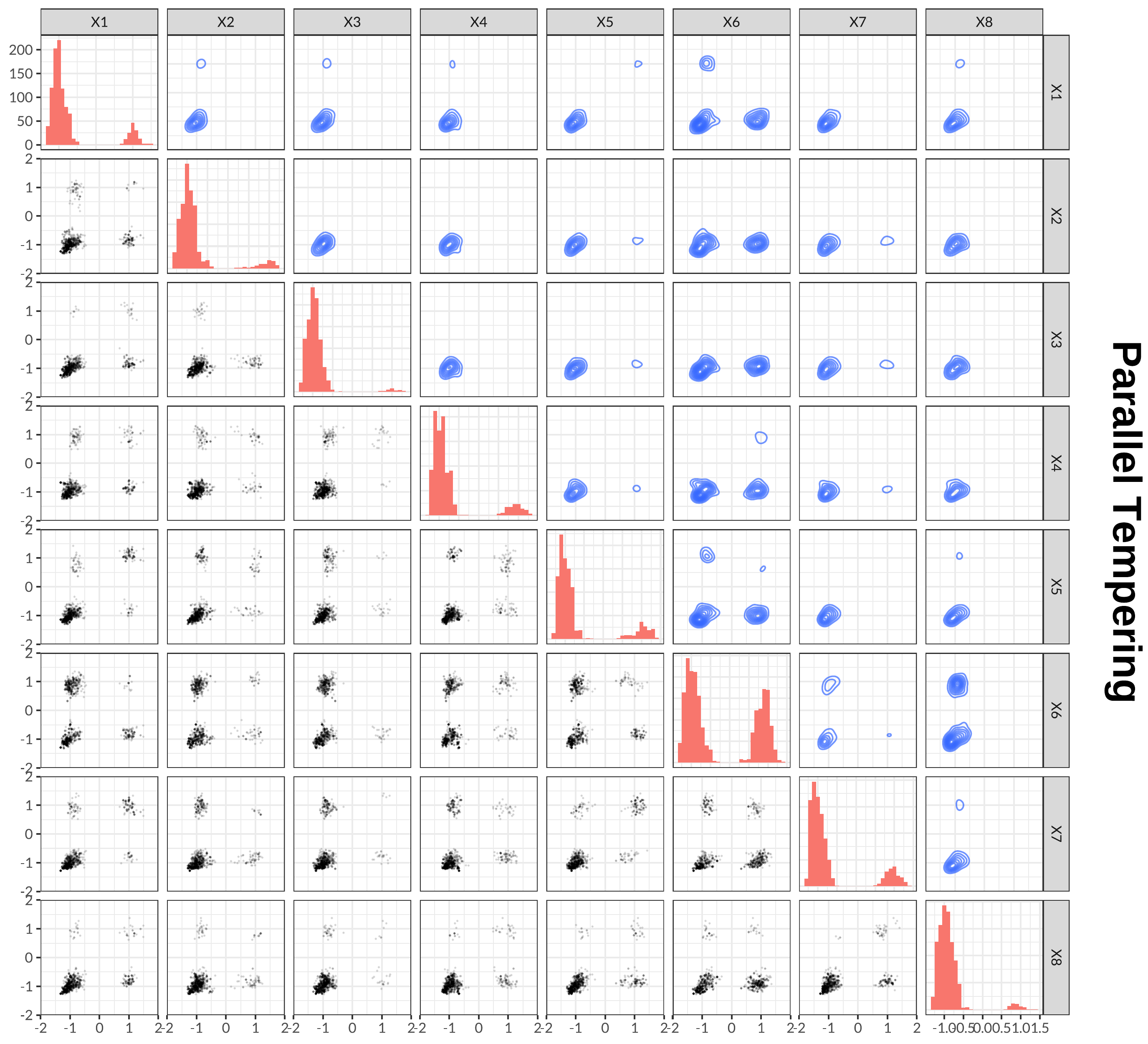}
\par\end{centering}
\begin{centering}
\includegraphics[width=0.65\textwidth]{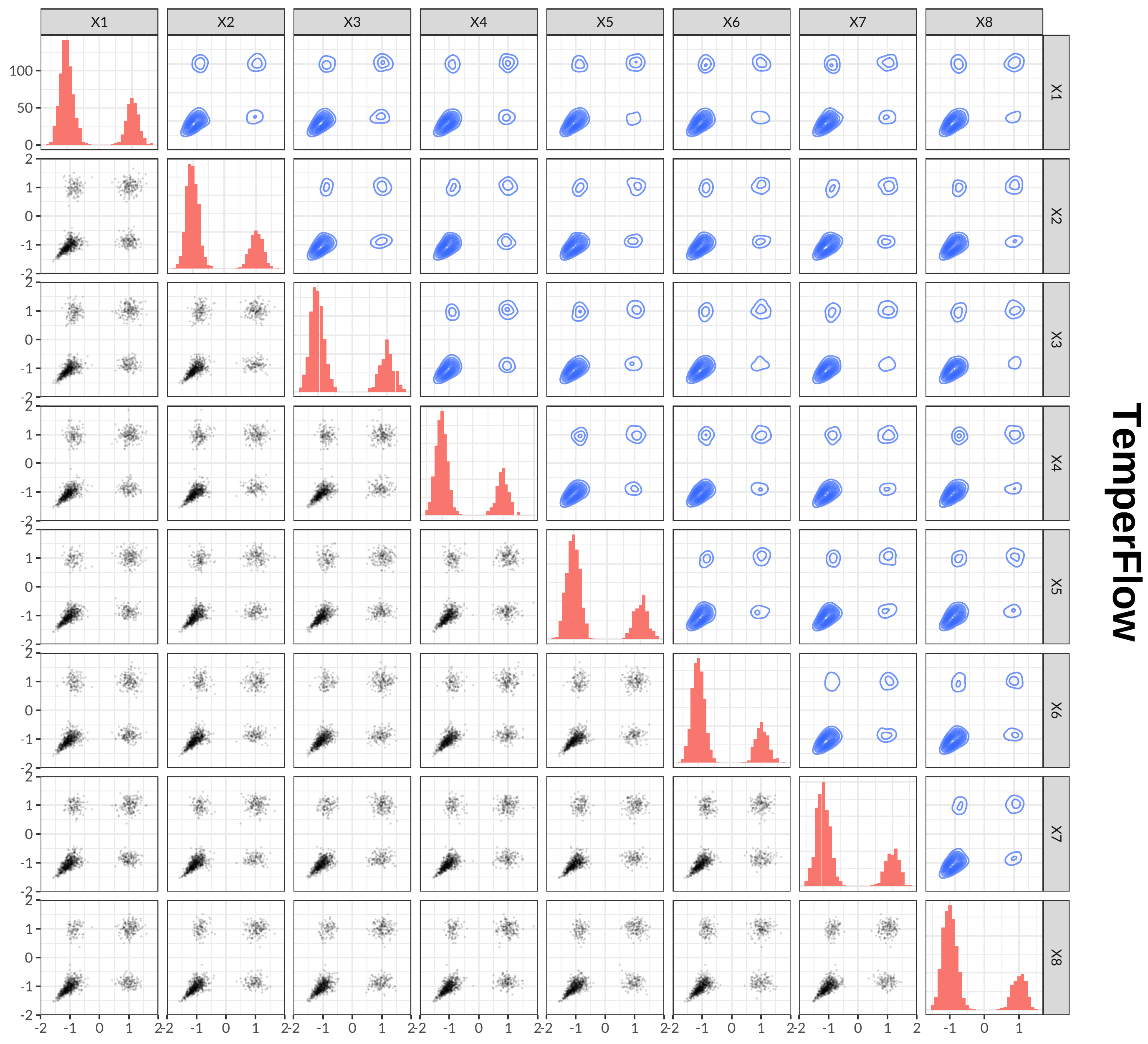}
\par\end{centering}
\caption{\label{fig:experiments_copula}Sampling results of parallel tempering
and TemperFlow for the copula-generated distribution with $s=d=8$.
The lower triangular part shows pairwise scatterplots of the generated
samples, and upper triangular part are the density contour plots.
Plots on the diagonal are histograms for the marginal distributions.}
\end{figure}

\subsubsection{Increasing the number of iterations for MCMC samplers}

\label{subsec:mcmc_more_iters}

Based on the experiment in Section \ref{subsec:copula}, here we increase
the number of iterations for MCMC samplers, and again compare their
results with TemperFlow. Similar to the previous setting, we drop
the first 200 steps as burn-in, but then take one data point every
ten iterations. Finally, 1000 data points are collected, so the total
number of iterations for MCMC samplers is 10200. Figure \ref{fig:experiments_copula_error_mcmc10}
demonstrates the sampling errors on this new setting. Compared with
Figure \ref{fig:experiments_copula_error}, we can observe that the
overall pattern is very similar, although MCMC is now run with ten
times the iterations as before.

\begin{figure}[h]
\begin{centering}
\includegraphics[width=0.9\textwidth]{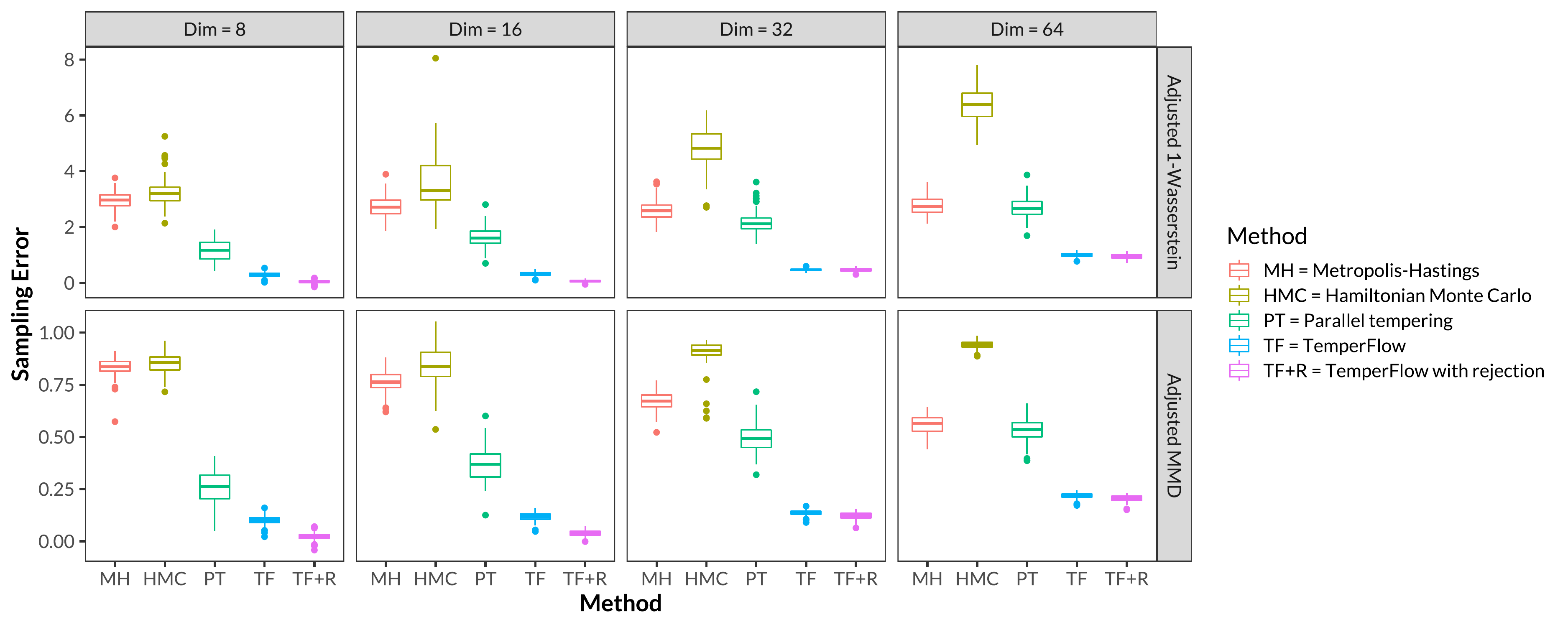}
\par\end{centering}
\caption{\label{fig:experiments_copula_error_mcmc10}Sampling errors of different
methods for the copula-generated distribution, with MCMC run for ten
times longer than in Figure \ref{fig:experiments_copula_error_mcmc10}.}
\end{figure}

\subsubsection{Computing time of sampling methods}

We report the computing time of different samplers in Table \ref{tab:benchmark}
for the distributions in Section \ref{subsec:copula}. All samplers
are implemented and run on GPUs. As introduced in the article, TemperFlow
has a training stage, whose computing time is given in the first row
of Table \ref{tab:benchmark}. Interestingly, the empirical results
show that the training time roughly scales linearly with the dimension
of the target distribution. Once the sampler is trained, the generation
time of TemperFlow can be virtually ignored, even with a post-processing
rejection sampling step. In contrast, MCMC methods have no training
costs, but their generation time is proportional to the number of
Markov chain iterations. It must also be emphasized that the times
for the MCMC methods in Table \ref{tab:benchmark} only reflect the
computing cost after a fixed number of iterations, but clearly Figures
\ref{fig:experiments_copula_error} and \ref{fig:experiments_copula_error_mcmc10}
suggest that their sampling errors are much higher than those of TemperFlow.
Therefore, in practice, a much larger number of iterations is needed
for MCMC samplers, which would dramatically increase the total computing
time.

\begin{table}[h]
\caption{\label{tab:benchmark}Computing time for different sampling methods
and dimensions. The first row shows the training time of TemperFlow,
and the number in the parenthesis is the number of adaptive $\beta$'s
used in Algorithm \ref{alg:temperflow}. Remaining rows show the time
to generate 10,000 points by each algorithm. All timing values are
in seconds.}

\centering{}%
\begin{tabular}{c>{\centering}p{0.15\textwidth}>{\centering}p{0.15\textwidth}>{\centering}p{0.15\textwidth}>{\centering}p{0.15\textwidth}}
\toprule 
 & $p=8$ & $p=16$ & $p=32$ & $p=64$\tabularnewline
\midrule
\midrule 
(TemperFlow Training) & 133.7 (17) & 309.9 (21) & 598.6 (22) & 1475.5 (30)\tabularnewline
TemperFlow/$10^{4}$ & 0.00206 & 0.00449 & 0.00899 & 0.0169\tabularnewline
TemperFlow+Rej./$10^{4}$ & 0.0148 & 0.0263 & 0.0547 & 0.0928\tabularnewline
MH/$10^{4}$ & 11.8 & 16.0 & 15.8 & 15.6\tabularnewline
HMC/$10^{4}$ & 80.9 & 133.0 & 131.0 & 130.0\tabularnewline
Parallel tempering/$10^{4}$ & 24.9 & 32.2 & 32.5 & 31.9\tabularnewline
\bottomrule
\end{tabular}
\end{table}

\subsubsection{Changing the base measure in TemperFlow}

When the target distribution $p(x)$ is supported on the whole Euclidean
space $\mathbb{R}^{d}$, TemperFlow by default uses the standard multivariate
normal as the base measure $\mu_{0}$. If $p(x)$ is known to be supported
on a compact set, for example, a hypercube $[-B,B]^{d}$, then $\mu_{0}$
can also be taken as a distribution supported on $[-B,B]^{d}$, such
as a uniform distribution. In Figure \ref{fig:2d_trunc} we show such
an example. The target distribution is the same as the ``Circle''
Gaussian mixture model in Section \ref{subsec:gmm}, but we restrict
its density to the $[-5,5]^{2}$ square. Then we choose the base measure
$\mu_{0}$ as the uniform distribution on $[-5,5]^{2}$, and use TemperFlow
to estimate the transport map $T$. Figure \ref{fig:2d_trunc} shows
the tempered distribution flow $T_{\sharp}^{(k)}\mu_{0}$ for different
inverse temperature parameters $\beta_{k}$. It can be found that
TemperFlow again estimates the target distribution well.

\begin{figure}[h]
\begin{centering}
\includegraphics[width=0.99\textwidth]{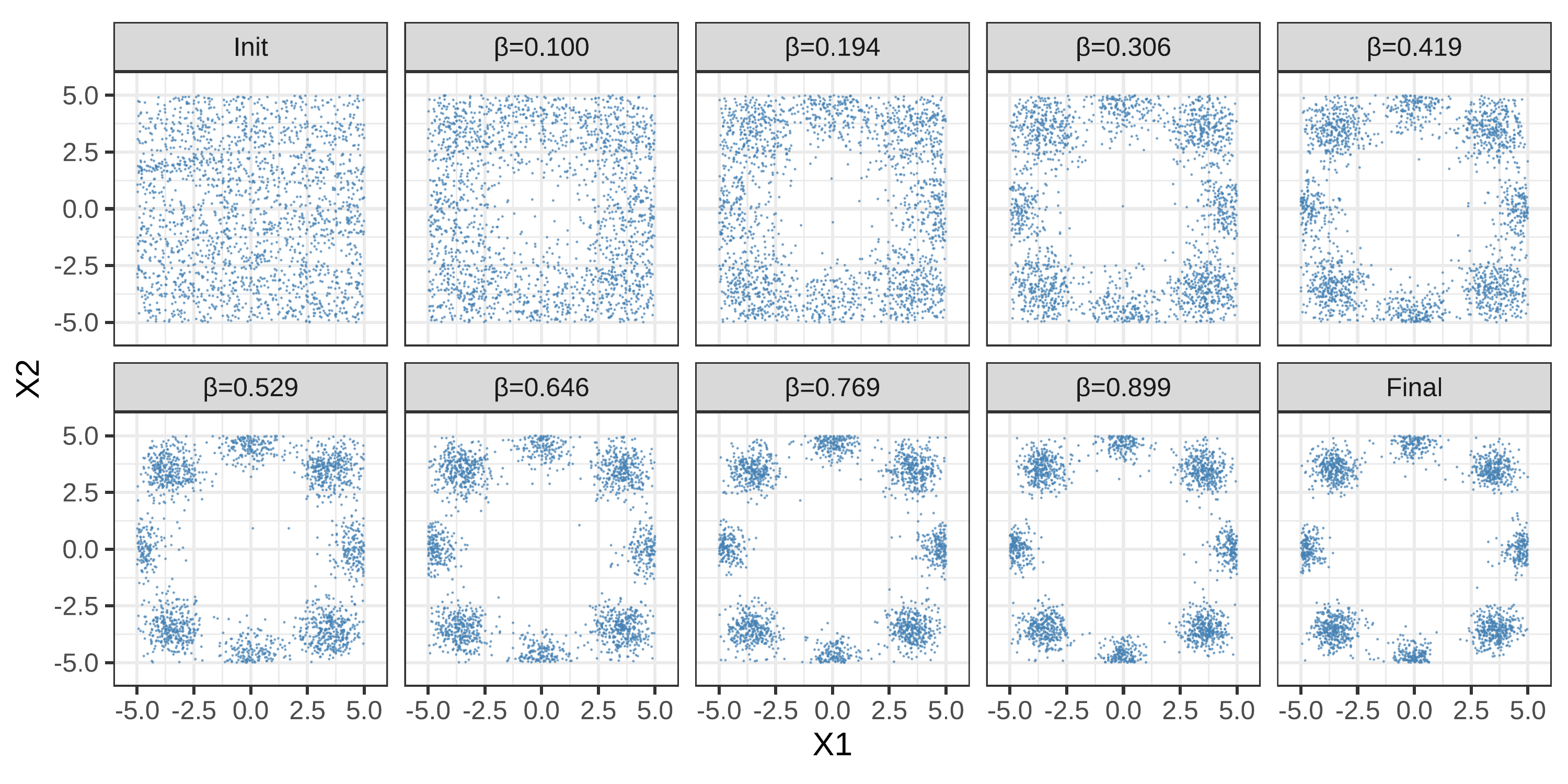}
\par\end{centering}
\caption{\label{fig:2d_trunc}Applying TemperFlow to a distribution supported
on a closed square. The base measure $\mu_{0}$ is taken to be a uniform
distribution.}
\end{figure}

\subsubsection{Comparing different INN architectures}

Theoretically, TemperFlow can utilize any variant of INNs to construct
$T$, and in practice, we find that spline-based models, such as linear
rational splines \citep{dolatabadi2020invertible}, are very powerful,
and we use them as the default choice in TemperFlow. To test the performance
of different INN architectures, we first consider the two-dimensional
distributions studied in Section \ref{subsec:gmm}, and include three
representative INNs for comparison: the inverse autoregressive flows
\citep{kingma2016improving}, the linear rational splines \citep{dolatabadi2020invertible},
and the neural spline flows based on quadratic rational splines \citep{durkan2019neural}.
Figure \ref{fig:2d_maps} shows the generated samples using different
INN architectures. It is clear that the inverse autoregressive flow
is able to capture the outline of the target distribution but is inferior
to the other two in the details. Due to this reason, we only study
the two spline-based INNs for higher-dimensional problems.

\begin{figure}[h]
\begin{centering}
\includegraphics[width=0.8\textwidth]{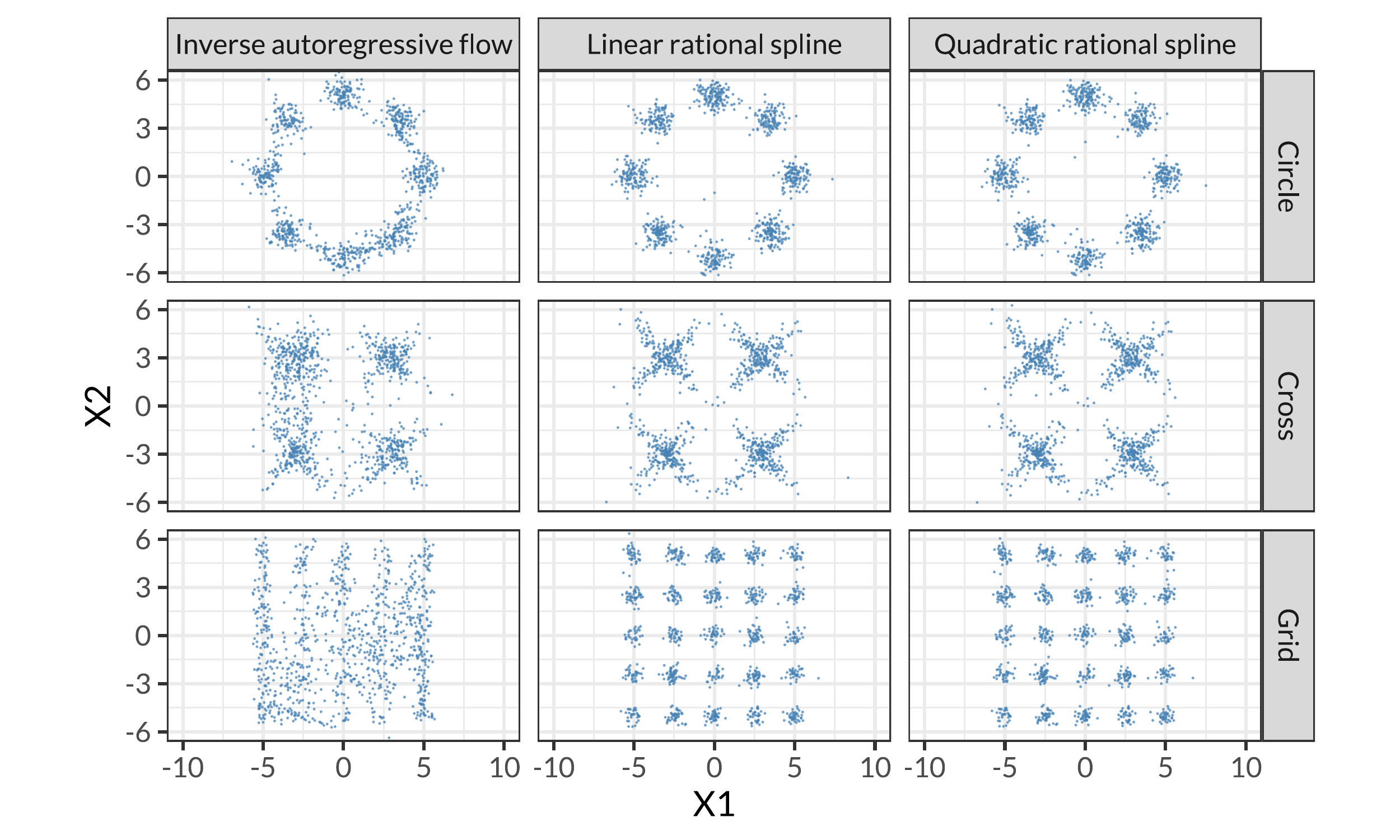}
\par\end{centering}
\caption{\label{fig:2d_maps}TemperFlow samples based on different INN architectures.}
\end{figure}

Figure \ref{fig:clayton_maps} recaps the experiments in Section \ref{subsec:copula},
but compares TemperFlow samplers based on linear rational splines
and quadratic rational splines, respectively. Table \ref{tab:benchmark_splines}
also shows the training and sampling times for each sampler. It can
be observed that the two INNs behave similarly, with linear rational
splines slightly better than quadratic rational splines in terms of
sampling errors, but marginally slower in terms of training time.

\begin{figure}[h]
\begin{centering}
\includegraphics[width=0.99\textwidth]{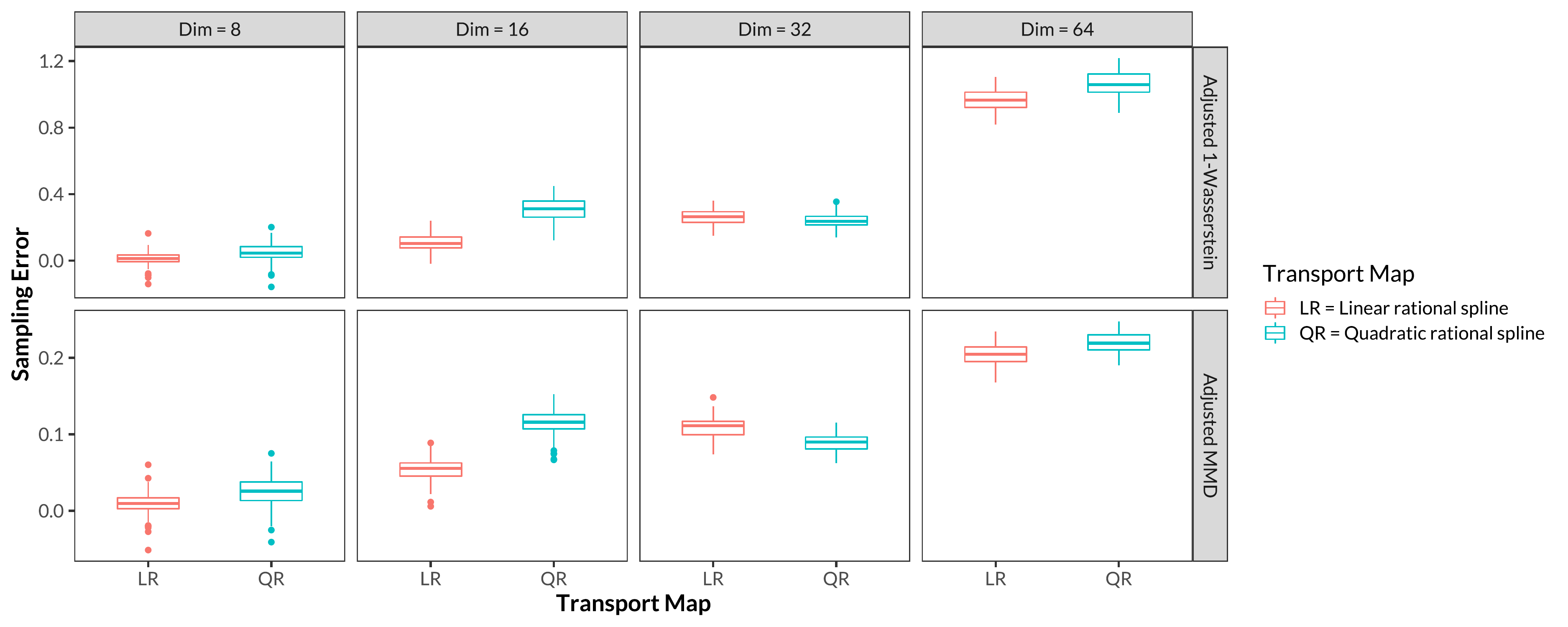}
\par\end{centering}
\caption{\label{fig:clayton_maps}Sampling errors of TemperFlow for the copula-generated
distribution, based on different INN architectures.}
\end{figure}

\begin{table}[h]
\caption{\label{tab:benchmark_splines}Computing time for TemperFlow on the
copula-generated distribution, based on two different INN architectures.
The interpretation of the numbers is similar to that of Table \ref{tab:benchmark}.}

\centering{}%
\begin{tabular}{cc>{\centering}p{0.15\textwidth}>{\centering}p{0.15\textwidth}>{\centering}p{0.15\textwidth}>{\centering}p{0.15\textwidth}}
\toprule 
 &  & $p=8$ & $p=16$ & $p=32$ & $p=64$\tabularnewline
\midrule
\midrule 
\multirow{3}{*}{LR} & Training & 135.6 (17) & 320.6 (21) & 642.0 (23) & 1534.7 (30)\tabularnewline
 & Sampling/$10^{4}$ & 0.00215 & 0.00465 & 0.00892 & 0.0173\tabularnewline
 & Sampling+Rej./$10^{4}$ & 0.0131 & 0.0189 & 0.0401 & 0.0783\tabularnewline
\cmidrule{1-2} \cmidrule{2-2} 
\multirow{3}{*}{QR} & Training & 122.9 (17) & 284.1 (21) & 567.7 (23) & 1366.2 (30)\tabularnewline
 & Sampling/$10^{4}$ & 0.00156 & 0.00406 & 0.00806 & 0.0156\tabularnewline
 & Sampling+Rej./$10^{4}$ & 0.0119 & 0.0199 & 0.0358 & 0.0779\tabularnewline
\bottomrule
\end{tabular}
\end{table}

\subsubsection{Comparing different discounting factors $\alpha$}

In this section we study the impact of the discounting factor $\alpha$
on adaptive temperature selection in TemperFlow. We again consider
the copula-generated distribution in Section \ref{subsec:copula},
and estimate the transport map using $\alpha=0.5,0.6,0.7,0.8,0.9$.
Figure \ref{fig:clayton_alphas} shows the final sampling error of
each method, and Table \ref{tab:benchmark_alphas} gives the training
time and the number of $\beta$'s used. The sampler with $\alpha=0.9$
has abnormal results for dimensions 32 and 64, since we have set the
maximum number of $\beta$'s to be 100, and their computations are
not finished.

In general, by setting a larger $\alpha$ value, one obtains smaller
sampling errors at the expense of more iterations and a longer training
time. There is no definite optimal choice for $\alpha$, but Figure
\ref{fig:clayton_alphas} suggests that TemperFlow is not very sensitive
to the choice of $\alpha$ for $\alpha\ge0.6$. Furthermore, our empirical
results suggest that $\alpha$ can be set small for simple and low-dimensional
distributions, whereas it should be larger for more challenging cases.

\begin{figure}[h]
\begin{centering}
\includegraphics[width=0.99\textwidth]{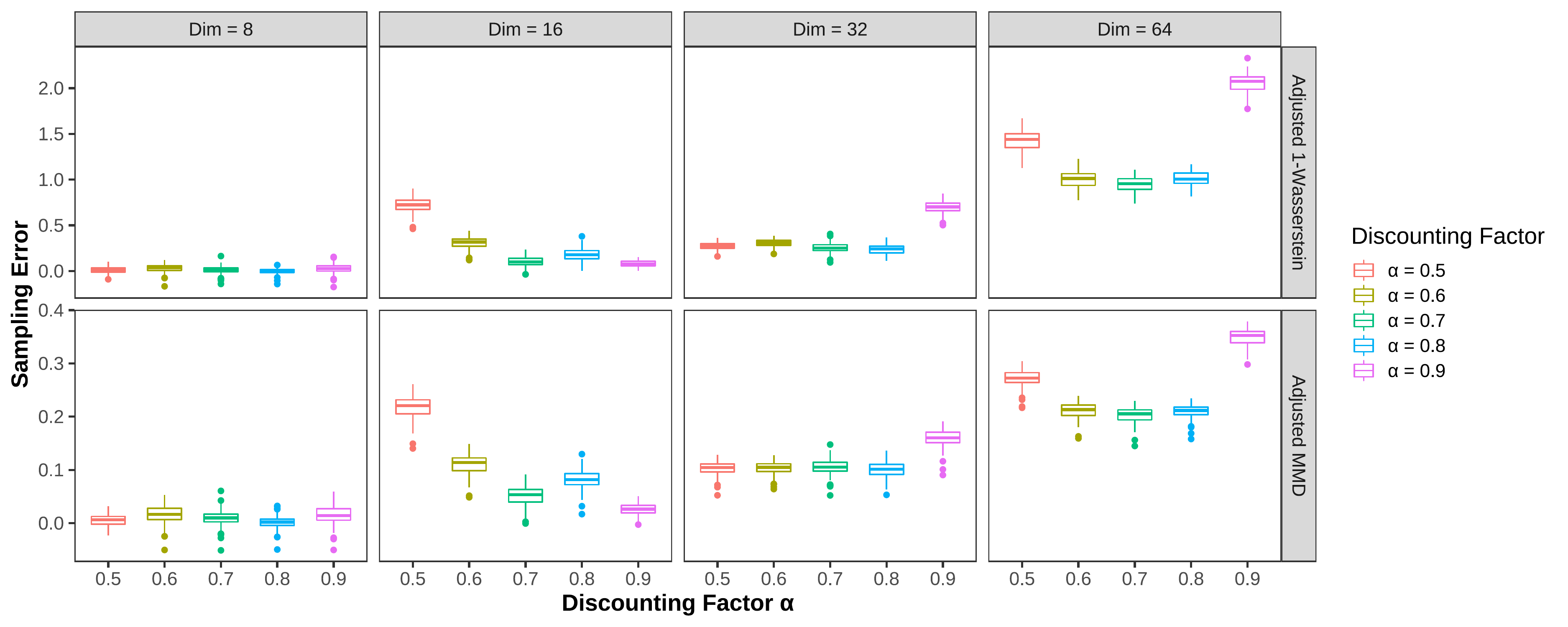}
\par\end{centering}
\caption{\label{fig:clayton_alphas}Sampling errors of TemperFlow for the copula-generated
distribution, based on different INN architectures.}
\end{figure}

\begin{table}[h]
\caption{\label{tab:benchmark_alphas}Training time of TemperFlow on the copula-generated
distribution, based on different $\alpha$ values. The number in the
parenthesis is the number of adaptive $\beta$'s used in Algorithm
\ref{alg:temperflow}, and \textquotedblleft Max\textquotedblright{}
means TemperFlow has used the maximum number of $\beta$'s, which
is set to 100 in the experiments.}

\centering{}%
\begin{tabular}{c>{\centering}p{0.15\textwidth}>{\centering}p{0.15\textwidth}>{\centering}p{0.15\textwidth}>{\centering}p{0.15\textwidth}}
\toprule 
 & $p=8$ & $p=16$ & $p=32$ & $p=64$\tabularnewline
\midrule
\midrule 
$\alpha=0.5$ & 86.3 (10) & 198.4 (12) & 418.4 (14) & 938.3 (17)\tabularnewline
$\alpha=0.6$ & 106.1 (13) & 236.5 (15) & 486.9 (17) & 1170.8 (22)\tabularnewline
$\alpha=0.7$ & 135.6 (17) & 320.6 (21) & 642.0 (23) & 1534.7 (30)\tabularnewline
$\alpha=0.8$ & 206.4 (27) & 459.7 (32) & 984.8 (38) & 2404.4 (50)\tabularnewline
$\alpha=0.9$ & 488.6 (73) & 1104.2 (87) & 2356.0 (Max) & 4730.1 (Max)\tabularnewline
\bottomrule
\end{tabular}
\end{table}

\subsubsection{Variance of the importance sampling estimator}

In TemperFlow, we rely on importance sampling to estimate the normalizing
constant of the tempered density functions. In general, the variance
of importance sampling is largely dependent on the proposal distribution.
For example, to estimate the normalizing constant
\[
Z_{\beta}=\int\exp\{-\beta E(x)\}\mathrm{d}x=\mathbb{E}_{X\sim h(x)}\exp\{-\beta E(X)-\log h(X)\},
\]
the importance sampling estimator $\hat{Z}_{\beta}$ is given by
\[
\hat{Z}_{\beta}=\frac{1}{M}\sum_{i=1}^{M}\frac{e^{-\beta E(X_{i})}}{h(X_{i})}\coloneqq\frac{1}{M}\sum_{i=1}^{M}w(X_{i}),\quad X_{i}\sim h(x),
\]
where $h(x)$ is the proposal distribution. It can be easily verified
that $\mathbb{E}(\hat{Z}_{\beta})=Z_{\beta}$, and
\[
\mathrm{Var}(\hat{Z}_{\beta})=\frac{1}{M}\int\frac{[e^{-\beta E(x)}-Z_{\beta}h(x)]^{2}}{h(x)}\mathrm{d}x=\frac{1}{M}\int[w(x)-Z_{\beta}]^{2}h(x)\mathrm{d}x.
\]
Clearly, if $h(x)\propto e^{-\beta E(x)}$ , then $\mathrm{Var}(\hat{Z}_{\beta})=0$.
In TemperFlow, we use $\hat{r}_{s}(x)\approx Z_{\beta_{s}}^{-1}e^{-\beta_{s}E(x)}$
as the proposal distribution to estimate $Z_{\beta_{s+1}}$, where
$\hat{r}_{s}=T_{\sharp}^{(s)}\mu_{0}$ is the estimated tempered distribution.
Therefore, as long as $\beta_{s}\approx\beta_{s+1}$, the variance
of $\hat{Z}_{\beta_{s+1}}$ would be small.

To verify this claim, we consider the experiment in Section \ref{subsec:copula}
with $d=8$, and extract the $\{\hat{r}_{k}\}$ distributions from
the training process. We then use the following coefficient of variation
to quantify the uncertainty of the importance sampling estimator:
\[
\mathrm{CV}(\hat{r}_{k},\beta_{s})=\frac{\sqrt{\widehat{\mathrm{Var}}(\hat{Z}_{\beta_{s}})}}{\hat{Z}_{\beta_{s}}}\quad\text{with }h(x)=\hat{r}_{k}(x),\quad s=0,1,\ldots,K,\ 0\le k\le s,
\]
where $\widehat{\mathrm{Var}}(\hat{Z}_{\beta})=M^{-1}(M-1)^{-1}\sum_{i=1}^{M}[w(X_{i})-\hat{Z}_{\beta}]^{2}$
is an estimator of $\mathrm{Var}(\hat{Z}_{\beta})$. Intuitively,
we expect that $\mathrm{CV}(\hat{r}_{k},\beta_{s})$ is small if $k$
is close to $s$. Figure \ref{fig:is_variance}(a) shows the sixteen
$\beta$ values selected by TemperFlow, $\beta_{0},\ldots,\beta_{15}$,
and Figure \ref{fig:is_variance}(b) illustrates the values of $\log(\hat{Z}_{\beta_{s}})$
with $\hat{r}_{k}$ used as the proposal distribution. As expected,
each column of the matrix has similar values of $\log(\hat{Z}_{\beta_{s}})$,
since the expectation of $\hat{Z}_{\beta_{s}}$ does not depend on
the proposal distribution. However, different $\hat{r}_{k}$'s result
in different variances of $\hat{Z}_{\beta_{s}}$. Figure \ref{fig:is_variance}(c)
shows the values of $\log[\mathrm{CV}(\hat{r}_{k},\beta_{s})]$ with
different combinations of $(k,s)$, from which we can observe that
the diagonal elements, corresponding to $k=s$, has the smallest uncertainty.
This is consistent with the theoretical analysis of importance sampling.
In TemperFlow, we only have $\hat{r}_{k}$, $k<s$, to estimate $Z_{\beta_{s}}$,
so $\hat{r}_{s-1}$ is to some extent the optimal choice for the proposal
distribution.

\begin{figure}[h]
\begin{centering}
\subfloat[]{\includegraphics[width=0.32\textwidth]{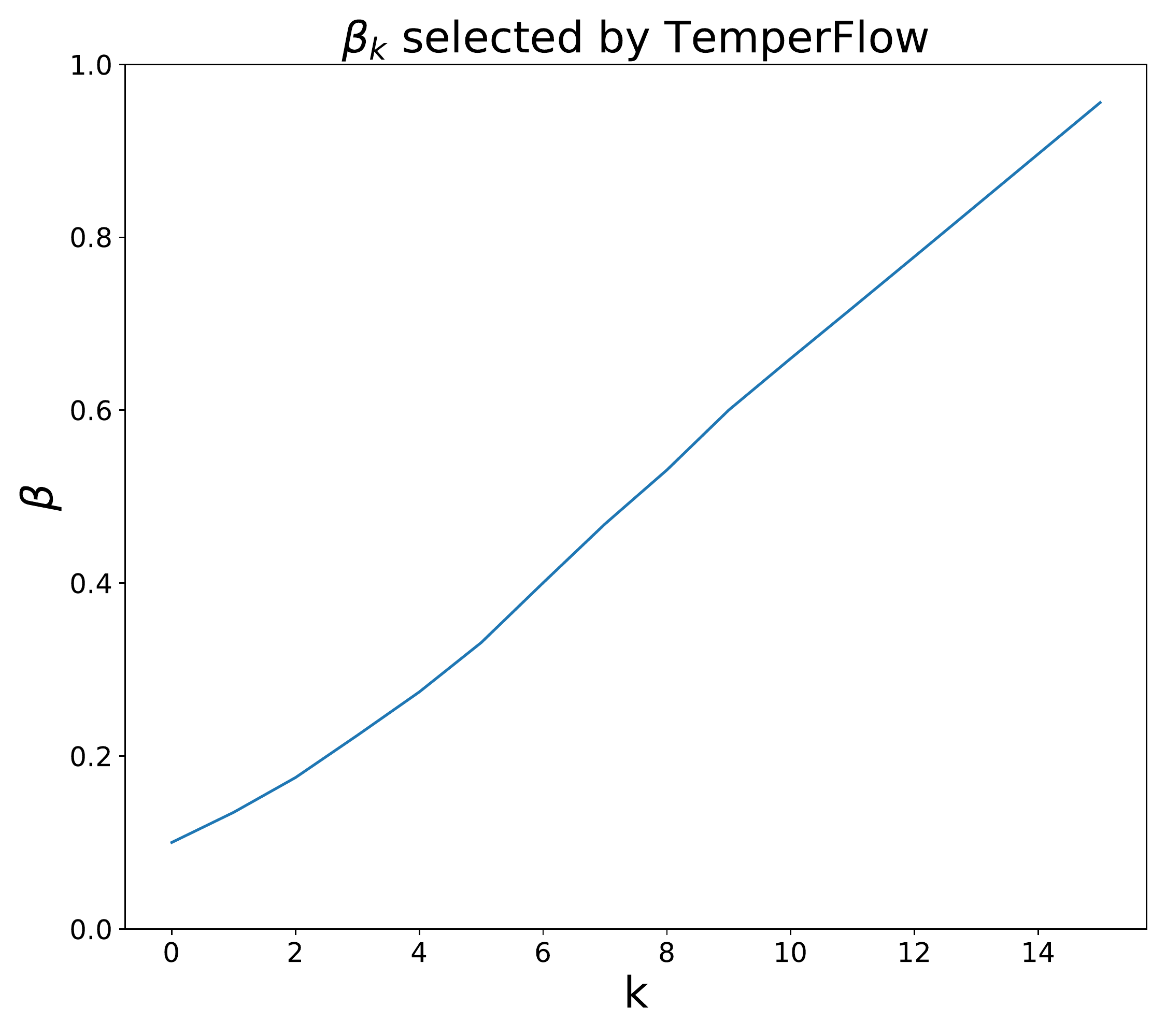}}
\subfloat[]{\includegraphics[width=0.32\textwidth]{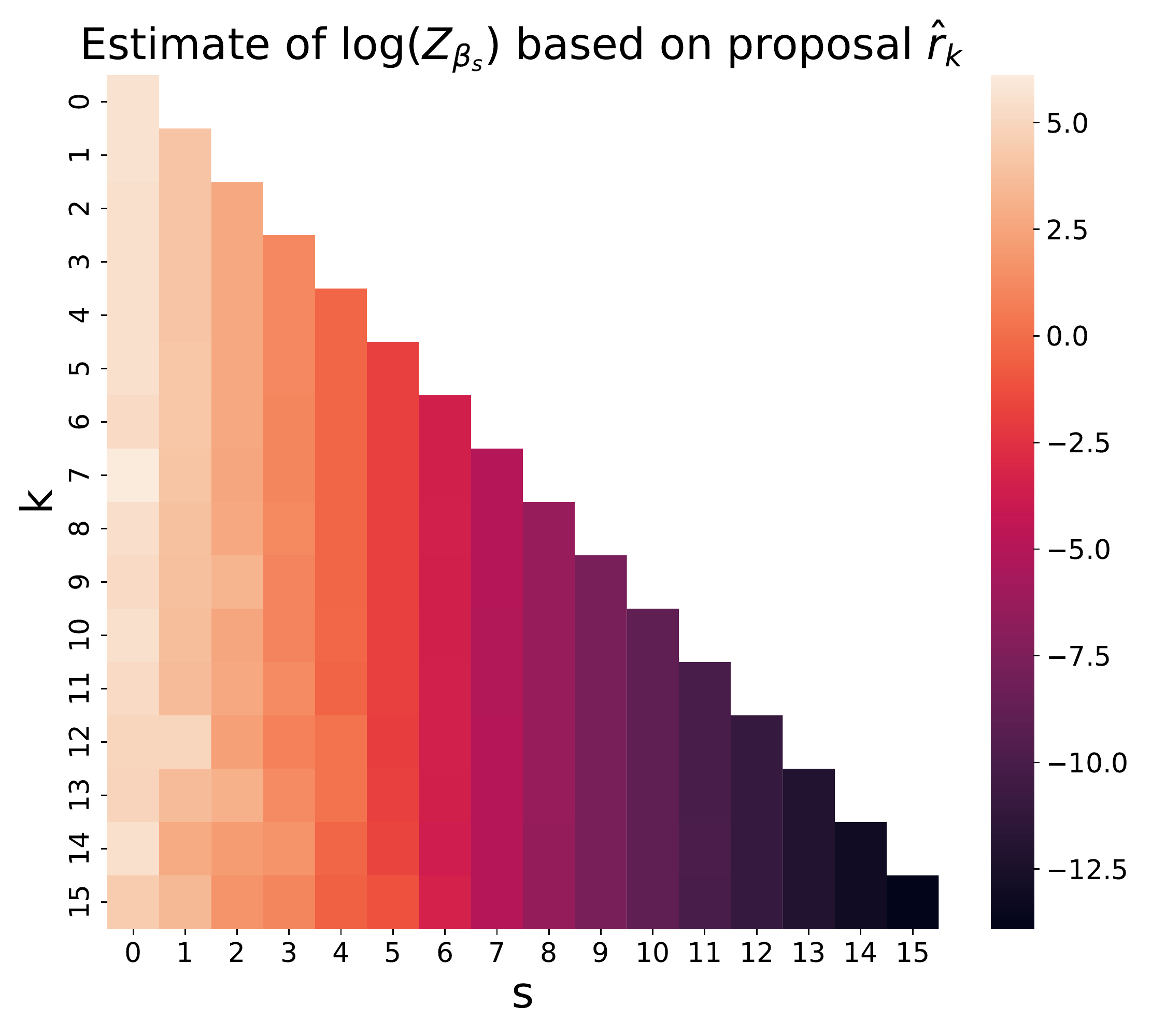}

} \subfloat[]{\includegraphics[width=0.32\textwidth]{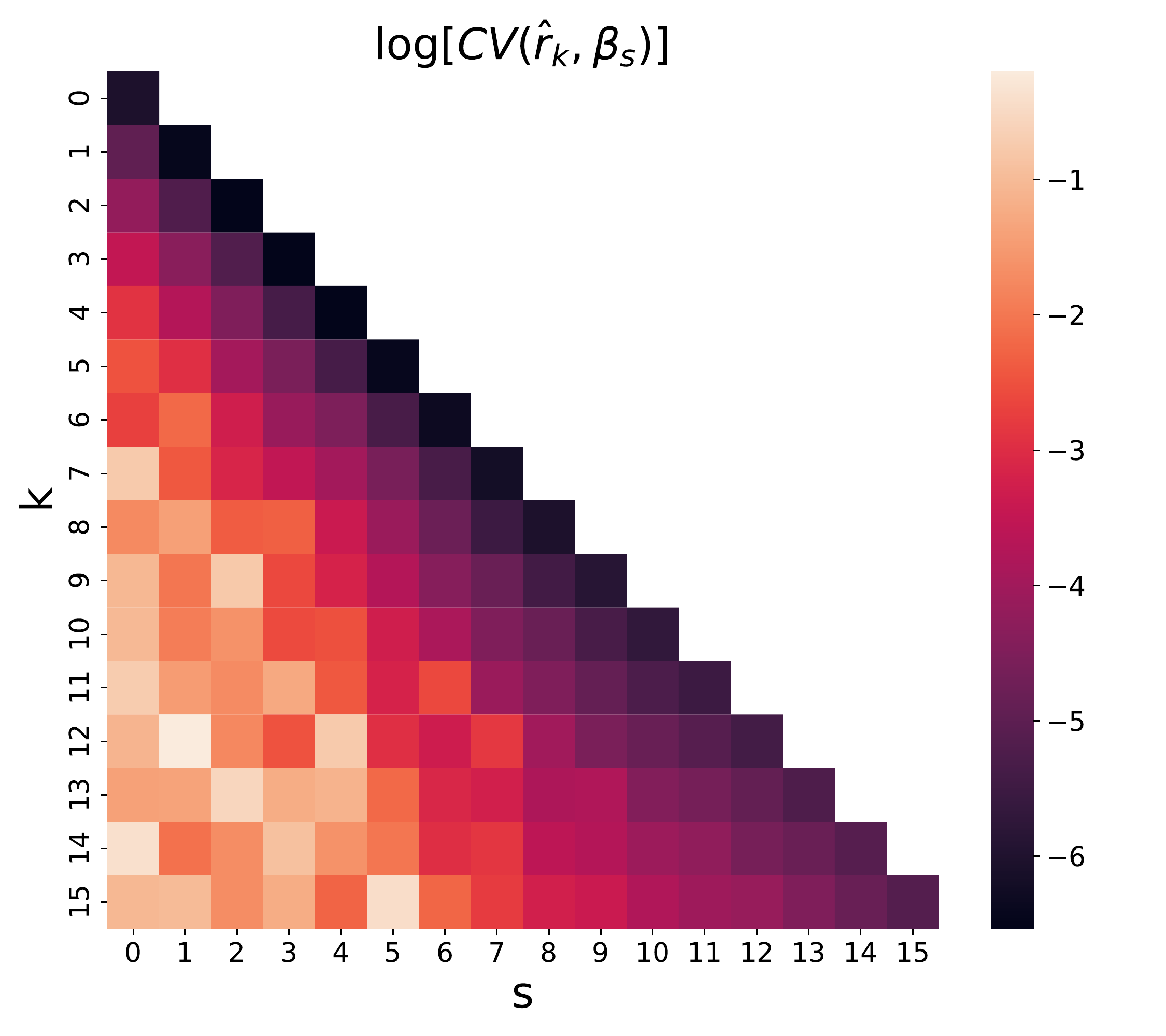}

}
\par\end{centering}
\caption{\label{fig:is_variance}(a) $\beta$ values selected by TemperFlow.
(b) Values of $\log(\hat{Z}_{\beta_{s}})$ with $\hat{r}_{k}$ used
as the proposal distribution. (c) Values of $\log[\mathrm{CV}(\hat{r}_{k},\beta_{s})]$
with different $(k,s)$.}
\end{figure}

\subsection{Proof of Proposition \ref{prop:vector_field}}

By definition, $\mathcal{G}(T)=\mathcal{G}_{1}(T)-\mathcal{G}_{2}(T)$,
where $\mathcal{G}_{1}(T)=\int q(x)\log q(x)\mathrm{d}x$ and $\mathcal{G}_{2}(T)=\int q(x)\log p(x)\mathrm{d}x$.
Let $\omega=T_{\sharp}\mu_{0}$ and $\nu=A_{\sharp}\omega=(A\circ T)_{\sharp}\mu_{0}$
for some mapping $A:\mathbb{R}^{d}\rightarrow\mathbb{R}^{d}$, and
then the density function of $\nu$, denoted by $r$, is given by
$r=(q/\det(\nabla A))\circ A^{-1}$. For any function $f:\mathbb{R}\rightarrow\mathbb{R}$,
we have
\[
\int f(r(x))\mathrm{d}x=\int f\left(\frac{q(A^{-1}(y))}{\det((\nabla A\circ A^{-1})(y))}\right)\mathrm{d}y=\int f\left(\frac{q(x)}{\det(\nabla A(x))}\right)\det(\nabla A(x))\mathrm{d}x.
\]
Take $f(x)=x\log(x)$, and then we obtain
\begin{align*}
\mathcal{G}_{1}(A\circ T) & =\int r(x)\log r(x)\mathrm{d}x=\int q(x)\left[\log q(x)-\log\det(\nabla A(x))\right]\mathrm{d}x\\
 & =\mathcal{G}_{1}(T)-\int\log\det(\nabla A(x))\mathrm{d}\omega(x).
\end{align*}
Let $I:\mathbb{R}^{d}\rightarrow\mathbb{R}^{d}$ denote the identity
map and $I_{d}$ the $d\times d$ identity matrix. Consider $A=I+\varepsilon\Psi\circ T^{-1}$
for an arbitrary $\Psi\in\mathbb{H}$, and then $\nabla A(x)=I_{d}+\varepsilon G(x)$
with $G=\nabla(\Psi\circ T^{-1})$.

Let $l(\varepsilon;x)=\log\det(I_{p}+\varepsilon G(x))$, and then
by Taylor's theorem,
\[
l(\varepsilon;x)=l(0;x)+l'(0;x)\varepsilon+\frac{1}{2}\int_{0}^{\varepsilon}l''(t;x)(\varepsilon-t)\mathrm{d}t.
\]
It can be verified that $l'(\varepsilon;x)=\mathrm{tr}\{U(\varepsilon;x)\}$
and $l''(\varepsilon;x)=-\mathrm{tr}\{[U(\varepsilon;x)]^{2}\}$,
where $U(\varepsilon;x)=[I_{d}+\varepsilon G(x)]^{-1}G(x)$, so
\[
\int\log\det(\nabla A(x))\mathrm{d}\omega(x)=\varepsilon\int\mathrm{tr}\{G(x)\}\mathrm{d}\omega(x)-\frac{1}{2}\int\left[\int_{0}^{\varepsilon}\mathrm{tr}\{[U(t;x)]^{2}\}(\varepsilon-t)\mathrm{d}t\right]\mathrm{d}\omega(x).
\]
Let $\sigma_{1}(M)\ge\cdots\ge\sigma_{d}(M)$ be the ordered singular
values of a matrix $M$, and then by Von Neumann's trace inequality,
we have
\[
\left|\mathrm{tr}\{[U(t;x)]^{2}\}\right|\le\sum_{i=1}^{d}[\sigma_{i}(U(t;x))]^{2}=\sum_{i=1}^{d}[\sigma_{i}(M_{1}^{-1}M_{2})]^{2},
\]
where $M_{1}=I_{d}+tG(x)$ and $M_{2}=G(x)$. Moreover, Theorem 3.3.14(b)
of \citet{horn1991matrix} shows that
\[
\sum_{i=1}^{d}[\sigma_{i}(M_{1}^{-1}M_{2})]^{2}\le\sum_{i=1}^{d}[\sigma_{i}(M_{1}^{-1})\sigma_{i}(M_{2})]^{2}=\sum_{i=1}^{d}\left[\frac{\sigma_{i}(M_{2})}{\sigma_{d-i+1}(M_{1})}\right]^{2}.
\]
By assumption (a), $\sigma_{i}(M_{2})\le\sigma_{1}(M_{2})=\Vert G(x)\Vert_{\mathrm{op}}\le C_{\Psi,T}$,
and Theorem 3.3.16(c) of \citet{horn1991matrix} indicates that $\sigma_{i}(M_{1})\ge1-|t|\sigma_{1}(G(x))\ge1-|t|C_{\Psi,T}$.
Therefore, for all $|t|<1/(2C_{\Psi,T})$, we have $\sigma_{i}(M_{1})\ge1/2$,
and hence
\[
\left|\mathrm{tr}\{[U(t;x)]^{2}\}\right|\le\sum_{i=1}^{d}\left[\frac{\sigma_{i}(M_{2})}{\sigma_{d-i+1}(M_{1})}\right]^{2}\le4dC_{\Psi,T}^{2}.
\]
As a result, for $0\le\varepsilon\le1/(2C_{\Psi,T})$,
\[
\left|\int\left[\int_{0}^{\varepsilon}\mathrm{tr}\{[U(t;x)]^{2}\}(\varepsilon-t)\mathrm{d}t\right]\mathrm{d}\omega(x)\right|\le4dC_{\Psi,T}^{2}\int_{0}^{\varepsilon}(\varepsilon-t)\mathrm{d}t=2dC_{\Psi,T}^{2}\varepsilon^{2},
\]
and the case of $\varepsilon<0$ can be proved similarly. This implies
that
\[
\left.\frac{\mathrm{d}}{\mathrm{d}\varepsilon}\int\log\det(\nabla A(x))\mathrm{d}\omega(x)\right|_{\varepsilon=0}=\int\mathrm{tr}\{G(x)\}\mathrm{d}\omega(x),
\]
where $\int\mathrm{tr}\{G(x)\}\mathrm{d}\omega(x)$ exists since $|\mathrm{tr}(G(x))|\le\sum_{i=1}^{d}\sigma_{i}(G(x))\le dC_{\Psi,T}$.
Consequently,
\[
\left.\frac{\mathrm{d}}{\mathrm{d}\varepsilon}\mathcal{G}_{1}(T+\varepsilon\Psi)\right|_{\varepsilon=0}=\left.\frac{\mathrm{d}}{\mathrm{d}\varepsilon}\mathcal{G}_{1}(A\circ T)\right|_{\varepsilon=0}=-\int\mathrm{tr}\left\{ G(x)\right\} \mathrm{d}\omega(x)=-\int\nabla\cdot(\Psi\circ T^{-1})\mathrm{d}\omega,
\]
where $\nabla\cdot F$ is the divergence of a vector field $F$. For
a scalar-valued function $\varphi$, the divergence operator satisfies
the product rule $\nabla\cdot(\varphi F)=(\nabla\varphi)\cdot F+\varphi(\nabla\cdot F)$,
so
\begin{align*}
\int\nabla\cdot(\Psi\circ T^{-1})\mathrm{d}\omega & =\int q(x)\left(\nabla\cdot(\Psi\circ T^{-1})\right)(x)\mathrm{d}x\\
 & =\int\nabla\cdot(q(\Psi\circ T^{-1}))\mathrm{d}\lambda-\int\langle\nabla q,\Psi\circ T^{-1}\rangle\mathrm{d}\lambda\\
 & =\int\nabla\cdot(q(\Psi\circ T^{-1}))\mathrm{d}\lambda-\int\langle\nabla\log q,\Psi\circ T^{-1}\rangle\mathrm{d}\omega\\
 & =\int\nabla\cdot(q(\Psi\circ T^{-1}))\mathrm{d}\lambda-\int\langle\nabla\log q\circ T,\Psi\rangle\mathrm{d}\mu_{0}.
\end{align*}
Under the assumption that $\lim_{\Vert x\Vert\rightarrow\infty}\Vert q(x)\Psi(T^{-1}(x))\Vert=0$,
we have $\int\nabla\cdot(q(\Psi\circ T^{-1}))\mathrm{d}\lambda=0$,
so $\delta\mathcal{G}_{1}/\delta T=\nabla\log q\circ T$.

For the second part, we have $\mathcal{G}_{2}(T)=\int\log p(T(x))\mathrm{d}\mu_{0}(x)$
according to the definition of the pushforward operator. By Taylor's
theorem,
\[
\log p(T+\varepsilon\Psi)-\log p(T)=\varepsilon\langle\nabla\log p\circ T,\Psi\rangle+\frac{1}{2}\varepsilon^{2}\Psi'\left[\nabla^{2}(\log p)\right](T+c\varepsilon\Psi)\Psi
\]
for some $c\in(0,1)$. Since we have assumed that $\Vert\nabla^{2}(\log p)\Vert_{\mathrm{op}}\le c$,
we get
\[
\int\left|\Psi'\left[\nabla^{2}(\log p)\right](T+c\varepsilon\Psi)\Psi\right|\mathrm{d}\mu_{0}\le c\int\Vert\Psi(x)\Vert^{2}\mathrm{d}\mu_{0}(x)<\infty,
\]
and then
\begin{align*}
\left.\frac{\mathrm{d}}{\mathrm{d}\varepsilon}\mathcal{G}_{2}(T+\varepsilon\Psi)\right|_{\varepsilon=0} & =\lim_{\varepsilon\rightarrow0}\frac{\mathcal{G}_{2}(T+\varepsilon\Psi)-\mathcal{G}_{2}(T)}{\varepsilon}\\
 & =\lim_{\varepsilon\rightarrow0}\varepsilon^{-1}\int\left[\varepsilon\langle\nabla\log p\circ T,\Psi\rangle+\frac{1}{2}\varepsilon^{2}\Psi'\left[\nabla^{2}(\log p)\right](T+c\varepsilon\Psi)\Psi\right]\mathrm{d}\mu_{0}\\
 & =\int\langle\nabla\log p\circ T,\Psi\rangle\mathrm{d}\mu_{0}=\langle\nabla\log p\circ T,\Psi\rangle_{\mathbb{H}}.
\end{align*}
As a result, $\delta\mathcal{G}_{2}/\delta T=\nabla\log p\circ T$.

Combining the results above, it is easy to find that $\delta\mathcal{G}/\delta T=\delta\mathcal{G}_{1}/\delta T-\delta\mathcal{G}_{2}/\delta T=\nabla(\log q-\log p)\circ T$.

\subsection{Proof of Theorem \ref{thm:kl_decay}}

The first part of the theorem is the consequence of several classical
and recent results. We first show that if the probability measure
$\mu$ has a log-concave density $p(x)$, then it satisfies the Poincar\'e
inequality
\begin{equation}
\int f^{2}\mathrm{d}\mu-\left(\int f\mathrm{d}\mu\right)^{2}\le C_{P}\int\Vert\nabla f\Vert^{2}\mathrm{d}\mu\label{eq:poincare_ineq}
\end{equation}
for all locally Lipschitz $f\in L^{2}(\mu)$, where $C_{P}>0$ is
independent of $f$. Indeed, \citet{cheeger1970lower} shows that
(\ref{eq:poincare_ineq}) holds with $C_{P}=4/\psi_{\mu}^{2}$, where
\[
\psi_{\mu}=\inf_{S\subset\mathbb{R}^{d}}\frac{\int_{\partial S}p(x)\mathrm{d}x}{\min\left\{ \mu(S),\mu(S^{c})\right\} }
\]
is called the isoperimetric coefficient of $\mu$. Moreover, Theorem
1 of \citet{chen2021almost} proves that for any integers $d,l\ge1$,
\[
\psi_{\mu}\ge\frac{1}{[c\cdot l(\log(d)+1)]^{l/2}\cdot d^{16/l}\cdot\sqrt{\sigma^{2}}}
\]
for some universal constant $c>0$, where $\sigma^{2}$ is the spectral
norm of the covariance matrix of $\mu$. For convenience, let $l_{1}\equiv l_{1}(d)=\log(d)+1$,
$l_{2}=\log(l_{1})+1$, $L_{1}=\sqrt{l_{1}/l_{2}}$, and $L_{2}=\sqrt{l_{1}l_{2}}$.
Take $l=\lceil L_{1}\rceil$, where $\lceil x\rceil$ is the smallest
integer greater than or equal to $x$. It is easy to verify that $L_{1}$
is increasing in $d$ and $L_{1}\ge1$, so we must have $L_{1}\le l=\lceil L_{1}\rceil\le2L_{1}$.
Also note that $L_{1}\le\sqrt{l_{1}}$ and
\[
\log(L_{1})\le\frac{1}{2}\log(l_{1})<\frac{1}{2}l_{2},\qquad L_{1}\log(l_{1})<L_{1}l_{2}=L_{2},\qquad L_{1}\log(L_{1})<L_{1}\cdot\frac{1}{2}l_{2}=\frac{1}{2}L_{2},
\]
so
\begin{align*}
\log\left\{ [c\cdot l(\log(d)+1)]^{l/2}\cdot d^{16/l}\right\}  & =\frac{l}{2}\cdot\log(c\cdot l\cdot l_{1})+\frac{16}{l}\log(d)\\
 & \le L_{1}\left[\log(2c)+\log(L_{1})+\log(l_{1})\right]+\frac{16}{L_{1}}\cdot l_{1}\\
 & \le\log(2c)L_{1}+\frac{3}{2}L_{2}+16L_{2}.
\end{align*}
Obviously $L_{1}\le L_{2}$, so there exists a constant $c'>0$ such
that $\log\left\{ [c\cdot l(\log(d)+1)]^{l/2}\cdot d^{16/l}\right\} \le c'L_{2}$
and
\[
\psi_{\mu}\ge\frac{1}{e^{c'L_{2}}\cdot\sqrt{\sigma^{2}}}.
\]
Combining the results above, we can take $C_{P}=4\sigma^{2}e^{2c'L_{2}}$.

Next, Corollary 4 of \citet{chewi2020exponential} states that if
$p(x)$ satisfies the Poincar\'e inequality (\ref{eq:poincare_ineq}),
then the law $\{\nu_{t}\}$ of the Langevin diffusion
\[
\mathrm{d}X_{t}=\nabla_{x}\log p(X_{t})\mathrm{d}t+\sqrt{2}\mathrm{d}B_{t},
\]
where $B_{t}$ is the Brownian motion on $\mathbb{R}^{d}$, satisfies
$\mathcal{F}(\nu_{t})\le e^{-2t/C_{P}}\chi^{2}(\nu_{0}\Vert\mu)$.
It is known that the law of the Langevin diffusion is the gradient
flow of the KL divergence \citep{jordan1998variational}, so $\mu_{t}$
and $\nu_{t}$ have the same marginal distributions. Therefore, we
also have $\mathcal{F}(\mu_{t})\le e^{-2t/C_{P}}\chi^{2}(\mu_{0}\Vert\mu)$.
Plugging in the expression of $C_{P}$ and absorbing constants into
$c'$, we get the stated result.

For the second part, (\ref{eq:kl_derivative}) and the equation $\mathbf{v}_{t}(x)=-\nabla\left[\log p_{t}(x)-\log p(x)\right]$
indicate that
\[
\frac{\mathrm{d}}{\mathrm{d}t}\mathcal{F}(\mu_{t})=-\int\left\Vert \nabla\log\frac{\mathrm{d\mu_{t}}}{\mathrm{d}\mu}\right\Vert ^{2}\mathrm{d}\mu_{t}.
\]
Let $f=\sqrt{\mathrm{d}\mu_{t}/\mathrm{d}\mu}$, and then
\[
\left\Vert \nabla\log\frac{\mathrm{d\mu_{t}}}{\mathrm{d}\mu}\right\Vert ^{2}=\left\Vert \nabla\log(f^{2})\right\Vert ^{2}=4\left\Vert \nabla\log f\right\Vert ^{2}=\frac{4}{f^{2}}\left\Vert \nabla f\right\Vert ^{2}.
\]
Therefore,
\[
\frac{\mathrm{d}}{\mathrm{d}t}\mathcal{F}(\mu_{t})=-\frac{4\mathrm{d}\mu}{\mathrm{d}\mu_{t}}\int\Vert\nabla f\Vert^{2}\mathrm{d}\mu_{t}=-4\int\Vert\nabla f\Vert^{2}\mathrm{d}\mu.
\]
By the Poincar\'e inequality (\ref{eq:poincare_ineq}), we have
\begin{align*}
\frac{\mathrm{d}}{\mathrm{d}t}\mathcal{F}(\mu_{t}) & \le-\frac{4}{C_{P}}\left[\int f^{2}\mathrm{d}\mu-\left(\int f\mathrm{d}\mu\right)^{2}\right]=-\frac{4}{C_{P}}\left[\int\frac{\mathrm{d}\mu_{t}}{\mathrm{d}\mu}\mathrm{d}\mu-\left(\int\sqrt{\frac{\mathrm{d}\mu_{t}}{\mathrm{d}\mu}}\mathrm{d}\mu\right)^{2}\right].\\
 & =-\frac{4}{C_{P}}\left[1-\left(\int\sqrt{\frac{\mathrm{d}\mu_{t}}{\mathrm{d}\lambda}\cdot\frac{\mathrm{d}\mu}{\mathrm{d}\lambda}}\mathrm{d}\lambda\right)^{2}\right].
\end{align*}
Note that
\begin{align*}
2H^{2}(\mu,\nu) & =\int(\sqrt{\mathrm{d}\mu/\mathrm{d}\lambda}-\sqrt{\mathrm{d}\nu/\mathrm{d}\lambda})^{2}\mathrm{d}\lambda=\int\mathrm{d}\mu+\int\mathrm{d}\nu-2\int\sqrt{\mathrm{d}\mu/\mathrm{d}\lambda}\cdot\sqrt{\mathrm{d}\nu/\mathrm{d}\lambda}\mathrm{d}\lambda\\
 & =2\left[1-\int\sqrt{\mathrm{d}\mu/\mathrm{d}\lambda}\cdot\sqrt{\mathrm{d}\nu/\mathrm{d}\lambda}\mathrm{d}\lambda\right],
\end{align*}
so
\[
\frac{\mathrm{d}}{\mathrm{d}t}\mathcal{F}(\mu_{t})\le-\frac{4}{C_{P}}\left[1-\left(1-H^{2}(\mu_{t},\mu)\right)^{2}\right]=-\frac{4}{C_{P}}\left[2-H^{2}(\mu_{t},\mu)\right]\cdot H^{2}(\mu_{t},\mu)\le-\frac{4}{C_{P}}H^{2}(\mu_{t},\mu).
\]
The last inequality holds since $0\le H(\mu,\nu)\le1$. Finally, plugging
in the expression of $C_{P}$ yields the result.

\subsection{Proof of Theorem \ref{thm:f_divergence_grad}}

For the brevity of notation we let $q(x)\equiv p_{t^{*}}(x)$, and
$\mathcal{F}$ stands for the $\phi$-divergence-based functional
$\mathcal{F}_{\phi}$. By definition,
\begin{align*}
\mathcal{F}(\mu_{t}) & =\mathcal{D}_{\phi}(\mu_{t}\Vert\mu)=\int p(x)\phi\left(\frac{q(x)}{p(x)}\right)\mathrm{d}x\\
 & =\alpha\int h(x)\phi\left(\frac{q(x)}{p(x)}\right)\mathrm{d}x+(1-\alpha)\int h(x-\mu)\phi\left(\frac{q(x)}{p(x)}\right)\mathrm{d}x\\
 & =\alpha\int h(x)\phi\left(\frac{q(x)}{p(x)}\right)\mathrm{d}x+(1-\alpha)\int h(x)\phi\left(\frac{q(x+\mu)}{p(x+\mu)}\right)\mathrm{d}x,\\
\frac{q(x)}{p(x)} & =\frac{\gamma h(x)+(1-\gamma)h(x-\mu)}{\alpha h(x)+(1-\alpha)h(x-\mu)}=1+\frac{(\gamma-\alpha)(h(x)-h(x-\mu))}{\alpha h(x)+(1-\alpha)h(x-\mu)}.
\end{align*}
Using the elementary inequality $(a+b)/(c+d)\le a/c+b/d$ for any
$a,b,c,d\ge0$, we have
\[
\frac{p(x)}{q(x)}\le\frac{\alpha}{\gamma}+\frac{1-\alpha}{1-\gamma}\coloneqq c_{1}^{-1}(\alpha,\gamma),\quad\frac{q(x)}{p(x)}\le\frac{\gamma}{\alpha}+\frac{1-\gamma}{1-\alpha}\coloneqq c_{2}(\alpha,\gamma).
\]
Therefore, $0<c_{1}\le q(x)/p(x)\le c_{2}<\infty$ for all $x\in\mathbb{R}^{d}$,
and we denote $D_{1}=\max_{x\in[c_{1},c_{2}]}|\phi(x)|$ and $D_{2}=\max_{x\in[c_{1},c_{2}]}|\phi''(x)|$.
Therefore,
\[
\left|\phi\left(\frac{q(x)}{p(x)}\right)\right|\le D_{1},\quad\left|\phi\left(\frac{q(x+\mu)}{p(x+\mu)}\right)\right|\le D_{1},\quad\forall x,\mu\in\mathbb{R}^{d}.
\]
Moreover, by assumption (a), we have for any fixed $x$, $\lim_{\Vert\mu\Vert\rightarrow\infty}\phi(q(x)/p(x))=\phi(\gamma/\alpha)$
and $\lim_{\Vert\mu\Vert\rightarrow\infty}\phi(q(x+\mu)/p(x+\mu))=\phi((1-\gamma)/(1-\alpha))$.
Since $\int Dh(x)\mathrm{d}x=D<\infty$, by the dominated convergence
theorem it holds that
\begin{align*}
\lim_{\Vert\mu\Vert\rightarrow\infty}\int h(x)\phi\left(\frac{q(x)}{p(x)}\right)\mathrm{d}x & =\int h(x)\phi\left(\frac{\gamma}{\alpha}\right)\mathrm{d}x=\phi\left(\frac{\gamma}{\alpha}\right),\\
\lim_{\Vert\mu\Vert\rightarrow\infty}\int h(x)\phi\left(\frac{q(x+\mu)}{p(x+\mu)}\right)\mathrm{d}x & =\int h(x)\phi\left(\frac{1-\gamma}{1-\alpha}\right)\mathrm{d}x=\phi\left(\frac{1-\gamma}{1-\alpha}\right),
\end{align*}
which gives the first result.

For the second part, by definition,
\begin{align*}
\nabla\left(\frac{q(x)}{p(x)}\right) & =(\alpha-\gamma)\cdot\nabla\left(\frac{h(x-\mu)-h(x)}{p(x)}\right)\\
 & =\frac{\alpha-\gamma}{[p(x)]^{2}}\cdot\left\{ p(x)[\nabla h(x-\mu)-\nabla h(x)]-[h(x-\mu)-h(x)]\nabla p(x)\right\} \\
 & =\frac{\alpha-\gamma}{[p(x)]^{2}}\cdot\left[h(x)\nabla h(x-\mu)-h(x-\mu)\nabla h(x)\right]\\
 & =(\alpha-\gamma)\left[\frac{h(x)}{p(x)}\right]^{2}\nabla\left(\frac{h(x-\mu)}{h(x)}\right).
\end{align*}
Define $\delta(x,\mu)=V(x-\mu)-V(x)$, and then
\[
\frac{h(x)}{p(x)}=\frac{h(x)}{\alpha h(x)+(1-\alpha)h(x-\mu)}=\frac{1}{\alpha+(1-\alpha)\exp\{-\delta(x,\mu)\}}.
\]
Take $c=\min\{\alpha,1-\alpha\}$ and define $\sigma(x)=1/(1+e^{-x})$,
and then $h(x)/p(x)\le\sigma(\delta(x,\mu))/c$. Moreover, $\nabla(h(x-\mu)/h(x))=\nabla(e^{-\delta(x,\mu)})=-e^{-\delta(x,\mu)}\nabla_{x}\delta(x,\mu)$.
By assumption (b), $\Vert\nabla_{x}\delta(x,\mu)\Vert=\Vert\nabla V(x-\mu)-\nabla V(x)\Vert\le C_{1}\Vert\mu\Vert^{k}$
for sufficiently large $\Vert\mu\Vert$, so
\begin{equation}
\left\Vert \nabla\left(\frac{q(x)}{p(x)}\right)\right\Vert \le\frac{|\alpha-\gamma|}{c^{2}}\cdot[\sigma(\delta(x,\mu))]^{2}e^{-\delta(x,\mu)}\cdot C_{1}\Vert\mu\Vert^{k}\le\frac{C_{1}|\alpha-\gamma|}{c^{2}}\cdot\Vert\mu\Vert^{k}e^{-|\delta(x,\mu)|},\label{eq:dqp_bound}
\end{equation}
where the second inequality is due to the fact that $[\sigma(x)]^{2}e^{-x}\le e^{-|x|}$.

Next, Lemma 2.1 of \citet{gao2019deep} shows that $(\delta\mathcal{F}/\delta\nu)(\rho)=\phi'(\mathrm{d}\rho/\mathrm{d}\mu)$,
and hence $\mathbf{v}_{t}(x)=-\phi''(p_{t}(x)/p(x))\nabla[p_{t}(x)/p(x)]$.
Plugging in (\ref{eq:dqp_bound}), we have
\[
\Vert\mathbf{v}_{t^{*}}(x)\Vert\le\frac{C_{1}D_{2}|\alpha-\gamma|}{c^{2}}\cdot\Vert\mu\Vert^{k}e^{-|\delta(x,\mu)|}.
\]
Therefore,
\[
0\le-\left.\frac{\mathrm{d}\mathcal{F}(\mu_{t})}{\mathrm{d}t}\right|_{t=t^{*}}=\int q(x)\Vert\mathbf{v}_{t^{*}}(x)\Vert^{2}\mathrm{d}x\le\frac{C_{1}D_{2}|\alpha-\gamma|}{c^{2}}\cdot\Vert\mu\Vert^{k}\int q(x)e^{-|\delta(x,\mu)|}\mathrm{d}x.
\]
By definition,
\begin{align*}
\int q(x)e^{-|\delta(x,\mu)|}\mathrm{d}x & =\gamma\int h(x)e^{-|\delta(x,\mu)|}\mathrm{d}x+(1-\gamma)\int h(x-\mu)e^{-|\delta(x,\mu)|}\mathrm{d}x\\
 & =\gamma\int h(x)e^{-|\delta(x,\mu)|}\mathrm{d}x+(1-\gamma)\int h(x)e^{-|\delta(x+\mu,\mu)|}\mathrm{d}x\\
 & =\gamma\mathbb{E}e^{-|Y_{\mu,1}|}+(1-\gamma)\mathbb{E}e^{-|Y_{\mu,2}|}.
\end{align*}
Let $g_{1}(y)$ be the density function of $|Y_{\mu,1}|$, and then
\begin{align*}
\mathbb{E}e^{-|Y_{\mu,1}|} & =\int_{0}^{C_{2}\Vert\mu\Vert}g_{1}(y)e^{-y}\mathrm{d}x+\int_{C_{2}\Vert\mu\Vert}^{+\infty}g_{1}(y)e^{-y}\mathrm{d}x\\
 & \le\int_{0}^{C_{2}\Vert\mu\Vert}g_{1}(y)\mathrm{d}x+\int_{C_{2}\Vert\mu\Vert}^{+\infty}g_{1}(y)e^{-C_{2}\Vert\mu\Vert}\mathrm{d}x\\
 & =P(|Y_{\mu,1}|\le C_{2}\Vert\mu\Vert)+e^{-C_{2}\Vert\mu\Vert}P(|Y_{\mu,1}|>C_{2}\Vert\mu\Vert)\\
 & \le P(|Y_{\mu,1}|\le C_{2}\Vert\mu\Vert)+e^{-C_{2}\Vert\mu\Vert}.
\end{align*}
Similarly, we can verify that $\mathbb{E}e^{-|Y_{\mu,2}|}\le P(|Y_{\mu,2}|\le C_{2}\Vert\mu\Vert)+e^{-C_{2}\Vert\mu\Vert}$.
Finally, by assumption (c), we obtain
\begin{align*}
C_{1}\Vert\mu\Vert^{k}\int q(x)e^{-|\delta(x,\mu)|}\mathrm{d}x & \le C_{1}\Vert\mu\Vert^{k}e^{-C_{2}\Vert\mu\Vert}+\gamma C_{1}\Vert\mu\Vert^{k}P(|Y_{\mu,1}|\le C_{2}\Vert\mu\Vert)+\\
 & \qquad(1-\gamma)C_{1}\Vert\mu\Vert^{k}P(|Y_{\mu,2}|\le C_{2}\Vert\mu\Vert)\rightarrow0
\end{align*}
as $\Vert\mu\Vert\rightarrow\infty$, which gives the requested conclusion.

\subsection{Proof of Theorem \ref{thm:monotonicity_kl} and (\ref{eq:delta_gamma})}

Let $q_{\beta}(x)=e^{-\beta E(x)}/Z(\beta)$, where $Z(\beta)=\int e^{-\beta E(x)}\mathrm{d}x$.
Define the function $\ell(\beta)=\mathrm{KL}(q_{\beta}\Vert p)$,
and we are going to show that $\ell(\beta)$ is decreasing in $\beta$.
By definition,
\[
\ell(\beta)=\int q_{\beta}(x)\log\frac{q_{\beta}(x)}{p(x)}\mathrm{d}x=\int q_{\beta}(x)[\log q_{\beta}(x)-\log p(x)]\mathrm{d}x.
\]
For simplicity let $\zeta(\beta)=\log Z(\beta)$, and then $\log q_{\beta}(x)=-\beta E(x)-\zeta(\beta)$
and $\log p(x)=-E(x)-\zeta(1)$. Therefore,
\begin{align*}
\ell(\beta) & =\int q_{\beta}(x)[E(x)+\zeta(1)-\beta E(x)-\zeta(\beta)]\mathrm{d}x\\
 & =(1-\beta)\int q_{\beta}(x)E(x)\mathrm{d}x+\zeta(1)-\zeta(\beta),
\end{align*}
and
\[
\ell'(\beta)=-\int q_{\beta}(x)E(x)\mathrm{d}x+(1-\beta)\int\frac{\partial q_{\beta}(x)}{\partial\beta}E(x)\mathrm{d}x-\zeta'(\beta).
\]
Note that
\begin{align*}
\zeta'(\beta) & =\frac{Z'(\beta)}{Z(\beta)}=\frac{1}{Z(\beta)}\int e^{-\beta E(x)}[-E(x)]\mathrm{d}x=-\int q_{\beta}(x)E(x)\mathrm{d}x,
\end{align*}
so $\ell'(\beta)=(1-\beta)\int[\partial q_{\beta}(x)/\partial\beta]E(x)\mathrm{d}x$.
Moreover, since $\partial\log q_{\beta}(x)/\partial\beta=-E(x)-\zeta'(\beta)$,
we have
\[
\frac{\partial q_{\beta}(x)}{\partial\beta}=q_{\beta}(x)\cdot\frac{\partial\log q_{\beta}(x)}{\partial\beta}=-q_{\beta}(x)[E(x)+\zeta'(\beta)].
\]
Consequently,
\begin{align}
\ell'(\beta) & =-(1-\beta)\int q_{\beta}(x)[E(x)+\zeta'(\beta)]E(x)\mathrm{d}x\nonumber \\
 & =-(1-\beta)\left\{ \int q_{\beta}(x)E^{2}(x)\mathrm{d}x-\left[\int q_{\beta}(x)E(x)\mathrm{d}x\right]^{2}\right\} .\label{eq:dl_beta}
\end{align}
By Jensen's inequality, $\int q_{\beta}(x)E^{2}(x)\mathrm{d}x\ge\left[\int q_{\beta}(x)E(x)\mathrm{d}x\right]^{2}$,
and the equal sign holds if and only if $E(x)$ is a constant. Since
the energy function cannot be a constant over $\mathbb{R}^{d}$, the
inequality is strict. As a result, $\ell'(\beta)<0$ when $0<\beta<1$,
and $\ell'(\beta)=0$ only when $\beta=1$. Since $\beta_{t}$ is
increasing in $t$, it follows that $\mathcal{F}(\rho_{t})=\ell(\beta_{t})$
is decreasing in $t$.

Finally, (\ref{eq:delta_gamma}) can be proved by noting that $\mathrm{d}[\log\ell(e^{\gamma})]/\mathrm{d}\gamma=\beta\cdot\ell'(\beta)/\ell(\beta)$,
where $\ell'(\beta)$ is given in (\ref{eq:dl_beta}), and we show
the expression of $\ell(\beta)$ below. By definition,
\[
\ell(\beta)=\int q_{\beta}(x)[\log q_{\beta}(x)-\log p(x)]\mathrm{d}x=\int q_{\beta}(x)[\log q_{\beta}(x)+E(x)+\zeta(1)]\mathrm{d}x,
\]
where
\[
\zeta(1)=\log Z(1)=\log\int\exp\{-E(x)\}\mathrm{d}x=\log\int q_{\beta}(x)\exp\{-E(x)-\log q_{\beta}(x)\}\mathrm{d}x,
\]
and then the equation holds.

\subsection{Proof of Proposition \ref{prop:assump1_normal}}

Since $f(x_{\varepsilon})=\varepsilon^{1/4}$ and $f(x)$ is decreasing
for $x\ge0$, we have $\inf_{Q^{\varepsilon}}f\le\varepsilon^{1/4}$.
Moreover, $\int_{Q^{\varepsilon}}f\mathrm{d}\lambda=2\Phi(x_{\varepsilon}+M\varepsilon^{1/12})-1$,
where $\Phi(\cdot)$ is the c.d.f. of the standard normal distribution.
For sufficiently small $\varepsilon$, $x_{\varepsilon}=\sqrt{-(\log\varepsilon)/2-\log(2\pi)}>\sqrt{\log(1/\varepsilon)/3}>2$.
Then by the well-known inequality $1-\Phi(x)<f(x)/x$ for $x>0$,
we have
\[
\int_{Q^{\varepsilon}}f\mathrm{d}\lambda>2\Phi(x_{\varepsilon})-1>1-2f(x_{\varepsilon})/x_{\varepsilon}=1-2\varepsilon^{1/4}/x_{\varepsilon}>1-\varepsilon^{1/4}
\]
if $\varepsilon$ is small enough. Hence Assumption \ref{assu:balls}(a)
holds.

For convenience let $\delta=M\varepsilon^{1/12}$. Since $f^{-2}(x)=2\pi e^{x^{2}}$
is symmetric about zero and is increasing for $x>0$, the supremum
in Assumption \ref{assu:balls}(b) is achieved when $Q=Q^{\varepsilon}$.
It is known that $h(x)=e^{-x^{2}}\int_{0}^{x}e^{y^{2}}\mathrm{d}y<0.6$,
so
\[
\int_{Q^{\varepsilon}}f^{-2}(x)\mathrm{d}x=4\pi\int_{0}^{x_{\varepsilon}+\delta}e^{x^{2}}\mathrm{d}x=4\pi e^{(x_{\varepsilon}+\delta)^{2}}h(x_{\varepsilon}+\delta)<2.4\pi e^{(x_{\varepsilon}+\delta)^{2}}.
\]
Let $l(x)=e^{x^{2}/2}$, and then $l'(x)=xl(x)$ and $l''(x)=l(x)(x^{2}+1)$,
so
\[
e^{(x_{\varepsilon}+\delta)^{2}/2}=e^{x_{\varepsilon}^{2}/2}\left[1+x_{\varepsilon}\delta+(1+x_{\varepsilon}^{2})\delta^{2}+O((x_{\varepsilon}\delta)^{3})\right]<Ce^{x_{\varepsilon}^{2}/2}=C\varepsilon^{-1/4}
\]
for some constant $C>0$. As a result,
\[
\sup_{\begin{subarray}{c}
Q\subset Q^{\varepsilon}\\
Q\in\mathcal{Q}
\end{subarray}}|Q|^{2/d}\left[\frac{1}{|Q|}\int_{Q}f^{-2}\mathrm{d}\lambda\right]^{1/2}=|Q^{\varepsilon}|^{3/2}\left[\int_{Q^{\varepsilon}}f^{-2}(x)\mathrm{d}x\right]^{1/2}<\sqrt{2.4\pi}C\varepsilon^{-1/4}|Q^{\varepsilon}|^{3/2}.
\]
$|Q^{\varepsilon}|$ is at the order of $\sqrt{\log(1/\varepsilon)}$,
so Assumption \ref{assu:balls}(b) holds with $c=3/4$.

Finally, note that
\[
\int_{Q^{\varepsilon}}\left[\varepsilon^{1/4}-f(x)\right]_{+}^{2}\mathrm{d}x=2\int_{x_{\varepsilon}}^{x_{\varepsilon}+\delta}\left[\varepsilon^{1/4}-f(x)\right]^{2}\mathrm{d}x.
\]
Let $g(x)=\left[\varepsilon^{1/4}-f(x)\right]^{2}$, and then we have
$g'(x)=-2\left[\varepsilon^{1/4}-f(x)\right]f'(x)$ and $g''(x)=2[f'(x)]^{2}-2\left[\varepsilon^{1/4}-f(x)\right]f''(x)$.
Since $g(x_{\varepsilon})=g'(x_{\varepsilon})=0$ and
\[
g''(x_{\varepsilon})=2[f'(x_{\varepsilon})]^{2}=2x_{\varepsilon}^{2}[f(x_{\varepsilon})]^{2}=2x_{\varepsilon}^{2}\varepsilon^{1/2},
\]
by the Taylor expansion we have
\begin{align*}
\frac{1}{2}\int_{Q^{\varepsilon}}\left[\varepsilon^{1/4}-f(x)\right]_{+}^{2}\mathrm{d}x & =g(x_{\varepsilon})\delta+\frac{1}{2}g'(x_{\varepsilon})\delta^{2}+\frac{1}{6}g''(x_{\varepsilon})\delta^{3}+O(\delta^{4})\\
 & =\frac{1}{3}x_{\varepsilon}^{2}\varepsilon^{1/2}\delta^{3}>\frac{M^{3}}{9}\varepsilon^{3/4}\log(1/\varepsilon)>\frac{M^{3}}{9}\varepsilon^{3/4}[\log(1/\varepsilon)]^{3/4}
\end{align*}
for sufficiently small $\varepsilon$, which verifies Assumption \ref{assu:balls}(c).

\subsection{Technical Lemmas}

Let $\lambda$ denote the Lebesgue measure on $\mathbb{R}^{d}$. A
function $w$ is called a weight if it is nonnegative and locally
integrable on $\mathbb{R}^{d}$. A weight $w$ induces a measure,
defined by $w(A)=\int_{A}w\mathrm{d}\lambda$. Denote by $\mathcal{Q}$
the set of balls in $\mathbb{R}^{d}$. A weight $w$ is doubling if
$w(2Q)\le Cw(Q)$ for every $Q\in\mathcal{Q}$, where $2Q$ denotes
the ball with the same center as $Q$ and twice its radius, and $C$
is called the doubling constant for $w$. For a measurable set $A\subset\mathbb{R}^{d}$,
let $|A|=\lambda(A)$, and we use $\Vert f\Vert_{L_{w}^{p}(A)}$ to
denote the $w$-weighted $L^{p}$ norm of a vector-valued function
$f:\mathbb{R}^{d}\rightarrow\mathbb{R}^{d'}$ on $A$, \emph{i.e.},
\[
\Vert f\Vert_{L_{w}^{p}(A)}=\left[\int_{A}\Vert f(x)\Vert^{p}w(x)\mathrm{d}x\right]^{1/p},
\]
where $\Vert\cdot\Vert$ is the Euclidean norm.
\begin{lem}[Theorem 2.14, \citealp{chua1993weighted}]
\label{lem:weighted_sobolev_ineq}Let $1<p\le q<\infty$ and let
$\sigma=v^{-1/(p-1)}$ where $v$ is a weight. Suppose $w$ is a doubling
weight. Then for all $Q_{0}\in\mathcal{Q}$ and Lipschitz continuous
function $f$ on $Q_{0}$,
\[
\Vert f-f_{Q_{0},w}\Vert_{L_{w}^{q}(Q_{0})}\le A(v,w,Q_{0})\Vert\nabla f\Vert_{L_{v}^{p}(Q_{0})}
\]
where
\[
A(v,w,Q_{0})=C(p,q,C_{0})\sup_{\substack{Q\subset Q_{0}\\
Q\in\mathcal{Q}
}
}|Q|^{1/d-1}w(Q)^{1/q}\sigma(Q)^{1/p'}
\]
when $p<q$, and
\[
A(v,w,Q_{0})=C(p,r,C_{0})\sup_{\begin{subarray}{c}
Q\subset Q_{0}\\
Q\in\mathcal{Q}
\end{subarray}}|Q|^{1/d}\left[\frac{1}{|Q|}\int_{Q}w^{r}\mathrm{d}\lambda\right]^{1/(pr)}\left[\frac{1}{|Q|}\int_{Q}\sigma^{r}\mathrm{d}\lambda\right]^{1/(p'r)}
\]
when $p=q$ for any $r>1$. $C_{0}$ is the doubling constant for
$w$, $p'=p/(p-1)$, and $f_{Q_{0},w}=w(Q_{0})^{-1}\int_{Q_{0}}fw\mathrm{d}\lambda$.
\end{lem}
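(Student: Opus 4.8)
The plan is to reduce the stated weighted Sobolev--Poincar\'e inequality to a two-weight norm bound for the fractional integral operator $I_1g(x)=\int_{Q_0}|x-y|^{1-d}g(y)\,\mathrm{d}y$, via the classical pointwise potential representation of a Sobolev function. First I would replace the $w$-weighted mean by the ordinary Lebesgue mean $f_{Q_0}=|Q_0|^{-1}\int_{Q_0}f\,\mathrm{d}\lambda$: for any constant $c$, Hölder's inequality gives $|c-f_{Q_0,w}|=w(Q_0)^{-1}\left|\int_{Q_0}(f-c)w\,\mathrm{d}\lambda\right|\le w(Q_0)^{-1/q}\Vert f-c\Vert_{L^q_w(Q_0)}$, hence $\Vert f-f_{Q_0,w}\Vert_{L^q_w(Q_0)}\le 2\Vert f-c\Vert_{L^q_w(Q_0)}$, so it suffices to take $c=f_{Q_0}$. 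Since $Q_0$ is a ball, hence convex and star-shaped about each of its points, writing $f(x)-f(y)=\int_0^1\nabla f\big(y+t(x-y)\big)\cdot(x-y)\,\mathrm{d}t$, averaging in $y$ over $Q_0$, and applying Fubini and a change of variables yields the standard bound $|f(x)-f_{Q_0}|\le C_d\,I_1(|\nabla f|)(x)$ for every $x\in Q_0$. Thus the lemma follows once one proves $\Vert I_1g\Vert_{L^q_w(Q_0)}\le A(v,w,Q_0)\,\Vert g\Vert_{L^p_v(Q_0)}$ for nonnegative $g$, where $A(v,w,Q_0)$ is exactly the supremum over subballs $Q\subset Q_0$ appearing in the statement and $\sigma=v^{-1/(p-1)}$ is the dual weight.

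To establish this two-weight estimate for $I_1$, I would run a Calder\'on--Zygmund stopping-time argument on the level sets of $I_1g$ inside $Q_0$. Decomposing the potential into dyadic annuli dominates $I_1g(x)$ by $C\sum_j 2^{-j}\big(|B_j|^{-1}\int_{B_j}g\big)$ with $B_j=B(x,2^{-j})$, i.e.\ by a truncated fractional maximal function of $g$; one then decomposes $\{I_1g>\lambda\}\cap Q_0$ into maximal dyadic cubes $Q$ on which the relevant average of $g$ is comparable to the level, and estimates the $w$-mass contributed by each such $Q$ by testing the defining condition of $A(v,w,Q_0)$ on $Q$ and invoking Hölder's inequality in the weight $v$ (equivalently in $\sigma$). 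Summing a geometric series over dyadic scales --- using the doubling inequality $w(2Q)\le C_0w(Q)$ both to pass from a stopping cube to a fixed dilate and to make the summation converge --- gives the claimed bound. In the strictly subcritical range $1<p<q<\infty$ this is precisely the Sawyer--Wheeden theory of two-weight fractional integrals, and the natural testing functional is exactly $|Q|^{1/d-1}w(Q)^{1/q}\sigma(Q)^{1/p'}$, which is what the first expression for $A(v,w,Q_0)$ records.

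The delicate case, which I expect to be the main obstacle, is the endpoint $p=q$. There the bare two-weight condition built from the averages $|Q|^{-1}\int_Q w$ and $|Q|^{-1}\int_Q\sigma$ is known to be insufficient --- even for the fractional maximal operator --- which is exactly why the statement introduces the free parameter $r>1$ and replaces those averages by the stronger $L^r$-averaged bumps $\big(|Q|^{-1}\int_Q w^r\big)^{1/(pr)}$ and $\big(|Q|^{-1}\int_Q\sigma^r\big)^{1/(p'r)}$. To exploit the bumps I would apply Hölder's inequality with exponent $r$ on each stopping cube, so that the local averages of $g$ are controlled by $L^{pr}$-type rather than $L^p$-type quantities; the extra integrability manufactures a genuine geometric decay factor (a small positive power of the ratio of sidelengths) in the sum over scales, which, combined once more with the doubling of $w$, rescues the otherwise borderline summation. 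Routine items --- the precise dependence of $C(p,q,C_0)$ and $C(p,r,C_0)$ on the doubling constant, the verification of the potential representation formula, and the measurability and finiteness of $A(v,w,Q_0)$ --- I would defer. This is, in outline, the scheme carried out in \citet{chua1993weighted}.
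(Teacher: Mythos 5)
This lemma is not proved in the paper at all: it is quoted verbatim as Theorem 2.14 of \citet{chua1993weighted} and used as an imported black box (its only role is to specialize, via Lemma \ref{lem:weighted_sobolev_density} with $p=q=r=2$, $w\equiv 1$, $v=g$, to the weighted Poincar\'e inequality needed in the proof of Theorem \ref{thm:grad_l2}). So there is no in-paper argument to compare yours against; what you have written is a reconstruction of the proof strategy from the harmonic-analysis literature, and as an outline it is essentially sound. The reduction steps are correct: the H\"older computation showing $\Vert f-f_{Q_{0},w}\Vert_{L_{w}^{q}(Q_{0})}\le 2\Vert f-c\Vert_{L_{w}^{q}(Q_{0})}$ for any constant $c$ is right, the pointwise bound $|f(x)-f_{Q_{0}}|\le C_{d}\,I_{1}(|\nabla f|\chi_{Q_{0}})(x)$ is valid because $Q_{0}$ is a ball (hence convex), and the quantity $|Q|^{1/d-1}w(Q)^{1/q}\sigma(Q)^{1/p'}$ is indeed the Sawyer--Wheeden $A_{p,q}^{\alpha}$ functional for $I_{1}$ after unwinding the normalizations, whose finiteness is sufficient for the strong-type two-weight bound precisely in the strict range $p<q$. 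Your diagnosis of the endpoint $p=q$ --- that the unbumped condition fails and the $L^{r}$ power bumps with $r>1$ restore summability over scales --- is also the correct mechanism. Two caveats worth recording: first, Chua's actual proof does not go through the convexity-based potential representation but through the Chanillo--Wheeden framework and a chaining/representation argument, since his theorem is set up to cover John domains and general doubling $w$, so your route is genuinely different (and simpler, because here only a single ball is needed); second, you should make explicit where the doubling of $w$ enters, since for the bare fractional-integral estimate on a fixed ball the doubling constant $C_{0}$ appears only through the dilation steps in the stopping-time summation, and a careless version of the argument would lose track of why $C(p,q,C_{0})$ depends on $C_{0}$ and on nothing else about $w$. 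Neither caveat is a gap for the purposes of this paper, where the lemma is only ever invoked in the unweighted-$w$, $p=q=2$ case.
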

\begin{lem}
\label{lem:weighted_sobolev_density}Let $g(x)$ be a density function
defined on $\mathbb{R}^{d}$. Then for any ball $Q_{0}\in\mathcal{Q}$
and any Lipschitz continuous function $h$ on $Q_{0}$, the following
inequality holds,
\[
\int_{Q_{0}}\left|h(x)-h_{Q_{0}}\right|^{2}\mathrm{d}x\le A(g,d,Q_{0})\int_{Q_{0}}\Vert\nabla h(x)\Vert^{2}g(x)\mathrm{d}x,
\]
where
\[
h_{Q_{0}}=\frac{1}{|Q_{0}|}\int_{Q_{0}}h\mathrm{d}\lambda,\quad A(g,d,Q_{0})=C(d)\sup_{\begin{subarray}{c}
Q\subset Q_{0}\\
Q\in\mathcal{Q}
\end{subarray}}|Q|^{2/d}\left[\frac{1}{|Q|}\int_{Q}g^{-2}\mathrm{d}\lambda\right]^{1/2},
\]
\textup{and $C(d)$ is a constant independent of $g$, $h$, and $Q_{0}$.}
\end{lem}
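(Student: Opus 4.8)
The plan is to obtain Lemma~\ref{lem:weighted_sobolev_density} as a direct specialization of the weighted Sobolev--Poincar\'e inequality in Lemma~\ref{lem:weighted_sobolev_ineq}, applied in the critical exponent case $p=q=2$ and then squared. I would take $v=g$, $w\equiv 1$, and $r=2$. The hypotheses are easy to check: a probability density $g$ is nonnegative and integrable, hence locally integrable, so it is a weight in the required sense; the constant weight $w\equiv 1$ is doubling with doubling constant $C_0=2^{d}$; $p=q=2$ satisfies $1<p\le q<\infty$; and $r=2>1$ is admissible in the $p=q$ branch. With these choices, $p'=p/(p-1)=2$ and $\sigma=v^{-1/(p-1)}=g^{-1}$, so that $w^{r}\equiv 1$ and $\sigma^{r}=g^{-2}$.

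Next I would simply read off the two sides of Lemma~\ref{lem:weighted_sobolev_ineq} under these substitutions. Since $w\equiv 1$ we have $w(Q_0)=|Q_0|$, hence $f_{Q_0,w}=|Q_0|^{-1}\int_{Q_0}h\,\mathrm{d}\lambda=h_{Q_0}$ and $\|h-h_{Q_0}\|_{L^2_w(Q_0)}^2=\int_{Q_0}|h-h_{Q_0}|^2\,\mathrm{d}x$; and with $v=g$, $\|\nabla h\|_{L^2_v(Q_0)}^2=\int_{Q_0}\|\nabla h(x)\|^2 g(x)\,\mathrm{d}x$. For the constant, the $p=q$ formula with $r=2$ gives
\[
A(g,1,Q_0)=C(2,2,2^{d})\sup_{\substack{Q\subset Q_0\\ Q\in\mathcal{Q}}}|Q|^{1/d}\left[\frac{1}{|Q|}\int_Q 1\,\mathrm{d}\lambda\right]^{1/4}\left[\frac{1}{|Q|}\int_Q g^{-2}\,\mathrm{d}\lambda\right]^{1/4},
\]
and $\frac{1}{|Q|}\int_Q 1\,\mathrm{d}\lambda=1$ collapses the first bracket, leaving $A(g,1,Q_0)=C(2,2,2^{d})\sup_{Q\subset Q_0}|Q|^{1/d}\big[\frac{1}{|Q|}\int_Q g^{-2}\,\mathrm{d}\lambda\big]^{1/4}$. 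Squaring the conclusion of Lemma~\ref{lem:weighted_sobolev_ineq} then yields exactly the claimed inequality, because $\big(|Q|^{1/d}[\,\cdot\,]^{1/4}\big)^2=|Q|^{2/d}[\,\cdot\,]^{1/2}$, so one may set $C(d)=C(2,2,2^{d})^2$, which depends only on the dimension.

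There is no real analytic obstacle here; the entire proof is bookkeeping, so the ``hard part'' is only making sure the normalizations match (the $w$-weighted average in Chua's statement is the plain Lebesgue average once $w\equiv 1$) and that the exponent arithmetic produces $|Q|^{2/d}$ together with the $1/2$ power on $\frac{1}{|Q|}\int_Q g^{-2}$ after squaring. The single caveat worth a sentence is the degenerate case: if $\int_Q g^{-2}\,\mathrm{d}\lambda=\infty$ for some ball $Q\subset Q_0$, then $A(g,d,Q_0)=\infty$ and the asserted inequality is vacuously true, so one may assume $g^{-2}$ is locally integrable on $Q_0$, in which case Lemma~\ref{lem:weighted_sobolev_ineq} applies with a finite right-hand side.
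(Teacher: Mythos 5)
Your proof is correct and is exactly the paper's argument: the paper likewise takes $v=g$, $w\equiv 1$, $p=q=r=2$ in Lemma~\ref{lem:weighted_sobolev_ineq} and squares, only stating this in one line without spelling out the exponent bookkeeping. Your additional checks (collapsing the $w$-bracket, $|Q|^{1/d}[\cdot]^{1/4}$ squaring to $|Q|^{2/d}[\cdot]^{1/2}$, and the vacuous case $\int_Q g^{-2}=\infty$) are all sound and merely make explicit what the paper leaves implicit.
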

\begin{proof}
It is easy to verify that $w\equiv1$ is a doubling weight with doubling
constant $C_{0}=2^{d}$. Then take $v=g$, $w=1$, $p=q=r=2$ in Lemma
\ref{lem:weighted_sobolev_ineq}, and the conclusion holds.
\end{proof}
\begin{lem}
\label{lem:inf_bound}Let $f(x)$ be a continuous function defined
on a compact set $S\subset\mathbb{R}^{d}$. Assume $\inf_{S}f(x)<c$
for some constant $c>0$, and denote $M=\int_{S}\left[c-f(x)\right]_{+}^{2}\mathrm{d}x$,
where $g_{+}=\max(0,g)$ stands for the positive part of a function
$g$. Then for any continuous function $e(x)$ satisfying $\int_{S}[e(x)]^{2}\mathrm{d}x\le M$,
we have $\inf_{S}\ [f(x)+e(x)]\le c$.
\end{lem}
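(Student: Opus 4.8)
The plan is to argue by contradiction, exploiting the fact that the hypothesis $\int_{S}[e(x)]^{2}\,\mathrm{d}x\le M$ with $M=\int_{S}[c-f(x)]_{+}^{2}\,\mathrm{d}x$ is exactly tight, so that any extra ``slack'' forced on $e$ by a pointwise lower bound would overshoot $M$. First I would suppose the conclusion fails, \emph{i.e.}, $\inf_{S}[f(x)+e(x)]>c$. Since $f+e$ is continuous and $S$ is compact, this infimum is attained, so the supposition is equivalent to the pointwise statement $e(x)>c-f(x)$ for every $x\in S$.

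The key step is to promote this to a pointwise inequality between squares: $[e(x)]^{2}\ge[c-f(x)]_{+}^{2}$ for all $x\in S$, with strict inequality on $\{x\in S:f(x)<c\}$. On $\{f\ge c\}$ the right-hand side vanishes, so the inequality is trivial; on $\{f<c\}$ the quantity $c-f(x)$ is positive, and since $e(x)>c-f(x)>0$, squaring the two positive numbers preserves the strict inequality, giving $[e(x)]^{2}>(c-f(x))^{2}=[c-f(x)]_{+}^{2}$. All functions involved are continuous on the compact set $S$, so the relevant integrals are finite and the inequality may be integrated.

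Integrating over $S$ then yields $\int_{S}[e(x)]^{2}\,\mathrm{d}x\ge\int_{S}[c-f(x)]_{+}^{2}\,\mathrm{d}x=M$, and the strictness on $\{x\in S:f(x)<c\}$ upgrades this to $\int_{S}[e(x)]^{2}\,\mathrm{d}x>M$, provided that set has positive Lebesgue measure. This contradicts the assumed bound $\int_{S}[e(x)]^{2}\,\mathrm{d}x\le M$, so the supposition $\inf_{S}[f+e]>c$ is untenable and the lemma follows.

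I expect the only delicate point to be the claim that $\{x\in S:f(x)<c\}$ has positive Lebesgue measure. It is nonempty (because $\inf_{S}f<c$) and relatively open in $S$ (because $f$ is continuous), but for a pathological compact $S$ a nonempty relatively open subset could still be $\lambda$-null; this does not occur in the paper, where $S$ is always a ball $Q_{i}^{\varepsilon}$, for which the conclusion is immediate. Everything else is elementary --- the continuity/compactness reduction to a pointwise statement, and monotonicity of $t\mapsto t^{2}$ on $[0,\infty)$.
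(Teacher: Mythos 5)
Your proof is correct and takes a somewhat cleaner route than the paper's. The paper first reduces to $e\ge 0$ without loss of generality, introduces the extremal perturbation $e^{*}(x)=[c-f(x)]_{+}$ and its square $u^{*}=(e^{*})^{2}$, and then proves (by a sub-contradiction) that some $x_{0}\in T:=\{f\le c\}$ must satisfy $u(x_{0})\le u^{*}(x_{0})$ where $u=e^{2}$; from there $f(x_{0})+e(x_{0})\le f(x_{0})+e^{*}(x_{0})=\max(c,f(x_{0}))=c$. You dispense with both preliminaries by arguing directly: supposing $\inf_{S}(f+e)>c$ yields the pointwise bound $e>c-f$, which on $\{f<c\}$ already forces $e>0$ (so the sign issue that the WLOG step handles never arises) and squares to the strict pointwise majorization $e^{2}>[c-f]_{+}^{2}$; integrating then overshoots $M$. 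Both proofs ultimately compare $e^{2}$ against $[c-f]_{+}^{2}$, but yours does so without auxiliary constructions.

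You have also correctly identified the one hidden hypothesis: the strict integral inequality needs $\lambda\bigl(\{x\in S:f(x)<c\}\bigr)>0$, which $\inf_{S}f<c$ alone does not guarantee for an arbitrary compact $S$. The paper's proof quietly relies on exactly the same thing --- its step ``$u>u^{*}$ on $T$ implies $\int_{T}u>\int_{T}u^{*}$'' needs $\lambda(T)>0$ --- so this is a gap in the lemma as stated rather than a defect peculiar to your argument, and it is indeed harmless where the lemma is invoked ($S=Q_{i}^{\varepsilon}$ is a ball, so $\{f<c\}$ is a nonempty relatively open subset of a ball and thus has positive measure). One small sharpening: the strict inequality $e^{2}>[c-f]_{+}^{2}$ in fact holds on all of $\{f\le c\}$, not only $\{f<c\}$, since on $\{f=c\}$ your supposition gives $e>0$ and hence $e^{2}>0=[c-f]_{+}^{2}$; this lets you weaken the positivity requirement to $\lambda(\{f\le c\})>0$, matching what the paper's proof implicitly needs.
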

\begin{proof}
Since any $e(x)$ can be replaced by $|e(x)|$ to achieve a larger
$\inf_{S}\ [f(x)+e(x)]$ under the same condition $\int_{S}[e(x)]^{2}\mathrm{d}x\le M$,
we can assume that $e(x)\ge0$ without loss of generality. 

Let $h_{c}(x)=\max(c,f(x))$, $e^{*}(x)=h_{c}(x)-f(x)=[c-f(x)]_{+}$,
and $u^{*}(x)=[e^{*}(x)]^{2}$. Then it is easy to see that $h_{c}(x)=c$
on $T\coloneqq\{x:f(x)\le c\}$ and $e^{*}(x)=u^{*}(x)=0$ on $S\backslash T$,
implying $M=\int_{S}u^{*}(x)\mathrm{d}x=\int_{T}u^{*}(x)\mathrm{d}x$.
For any continuous $u(x)\ge0$ such that $\int_{S}u(x)\mathrm{d}x\le M$,
we claim that there must exist some point $x_{0}\in T$ such that
$u(x_{0})\le u^{*}(x_{0})$. If this is not true, then
\[
\int_{S}u(x)\mathrm{d}x>\int_{T}u^{*}(x)\mathrm{d}x+\int_{S\backslash T}u(x)\mathrm{d}x\ge\int_{T}u^{*}(x)\mathrm{d}x+\int_{S\backslash T}u^{*}(x)\mathrm{d}x=\int_{S}u^{*}(x)\mathrm{d}x=M,
\]
which causes a contradiction.

Let $e(x)=\sqrt{u(x)}$, and then it is also true that $e(x_{0})\le e^{*}(x_{0})$
for some $x_{0}\in T$, implying
\[
\inf_{S}\ [f(x)+e(x)]\le f(x_{0})+e(x_{0})\le f(x_{0})+e^{*}(x_{0})=h_{c}(x_{0})=c.
\]
\end{proof}

\subsection{Proof of Theorem \ref{thm:grad_l2}}

It is easy to show that the first variation of $\mathcal{L}$ at $g$
is $(\delta\mathcal{L}/\delta g)(g)=2(g-f)$, and then the vector
field $\mathbf{v}_{t}$ for the continuity equation $\partial g_{t}/\partial t+\nabla\cdot(\mathbf{v}_{t}g_{t})=0$
is given by $\mathbf{v}_{t}=2\nabla(f-g_{t})$. Let $h_{t}=g_{t}-f$,
and then
\[
\frac{\mathrm{d}\mathcal{L}(g_{t})}{\mathrm{d}t}=-\int\Vert2\nabla h_{t}\Vert^{2}g_{t}\mathrm{d}\lambda=-4\int\Vert\nabla h_{t}\Vert^{2}g_{t}\mathrm{d}\lambda.
\]
In what follows, we omit the subscript $t$ in $g_{t}$ and $h_{t}$
and the superscript $\varepsilon$ in $Q_{i}^{\varepsilon}$ whenever
no confusion is caused. Since $|\mathrm{d}\mathcal{F}(g_{t})/\mathrm{d}t|\le\varepsilon$,
we have $\int\Vert\nabla h\Vert^{2}g\mathrm{d}\lambda\le\varepsilon/4$,
and hence
\[
\max_{1\le i\le K}\int_{Q_{i}}\Vert\nabla h\Vert^{2}g\mathrm{d}\lambda\le\varepsilon/4.
\]
By Lemma \ref{lem:weighted_sobolev_density}, we have
\begin{equation}
\int_{Q_{i}}\left|h(x)-h_{Q_{i}}\right|^{2}\mathrm{d}x\le A(g,d,Q_{i})\varepsilon/4,\label{eq:h_bound}
\end{equation}
where
\[
A(g,d,Q_{i})=C(d)\sup_{\begin{subarray}{c}
Q\subset Q_{i}\\
Q\in\mathcal{Q}
\end{subarray}}|Q|^{2/d}\left[\frac{1}{|Q|}\int_{Q}g^{-2}\mathrm{d}\lambda\right]^{1/2}.
\]
By Assumptions \ref{assu:balls}(b) and \ref{assu:lower_bound}, we
have
\[
A(g,d,Q_{i})\le\alpha^{-1}C(d)\sup_{\begin{subarray}{c}
Q\subset Q_{i}\\
Q\in\mathcal{Q}
\end{subarray}}|Q|^{2/d}\left[\frac{1}{|Q|}\int_{Q}f^{-2}\mathrm{d}\lambda\right]^{1/2}\le\alpha^{-1}C(d)M_{1}\varepsilon^{-1/4}[\log(1/\varepsilon)]^{c}.
\]
Since $M_{2}>0$ in Assumptions \ref{assu:balls}(c) is a sufficiently
large constant, we can take $M_{2}=(4\alpha)^{-1}C(d)M_{1}$, and
then (\ref{eq:h_bound}) leads to
\begin{equation}
\int_{Q_{i}}\left|h(x)-h_{Q_{i}}\right|^{2}\mathrm{d}x\le M_{2}\varepsilon^{3/4}[\log(1/\varepsilon)]^{c}.\label{eq:h_norm_bound}
\end{equation}

Let $e_{i}(x)=h(x)-h_{Q_{i}}$, and then by definition, $\int_{Q_{i}}e_{i}\mathrm{d}\lambda=0$
and $\int_{Q_{i}}e_{i}^{2}\mathrm{d}\lambda\le M_{2}\varepsilon^{3/4}[\log(1/\varepsilon)]^{c}$.
In Lemma \ref{lem:inf_bound}, take $S=Q_{i}$, $e=e_{i}$, $M=M_{2}\varepsilon^{3/4}[\log(1/\varepsilon)]^{c}$,
and $c=\varepsilon^{1/4}$, and then by Assumption \ref{assu:balls}(a)
we obtain $\inf_{Q_{i}}\ (f+e_{i})\le\varepsilon^{1/4}$. Since $f+e_{i}$
is continuous and $Q_{i}$ is compact, there must exist a point $x^{*}\in Q_{i}$
such that
\[
f(x^{*})+e_{i}(x^{*})=\inf_{Q_{i}}\ (f+e_{i})\le\varepsilon^{1/4}.
\]
Then we obtain
\begin{equation}
h_{Q_{i}}=h(x^{*})-e_{i}(x^{*})=g(x^{*})-f(x^{*})-e_{i}(x^{*})\ge-[f(x^{*})+e_{i}(x^{*})]\ge-\varepsilon^{1/4}.\label{eq:h_lb}
\end{equation}

On the other hand, $\int_{Q}g\mathrm{d}\lambda\le1$ and $\int_{Q}f\mathrm{d}\lambda\ge1-\varepsilon^{1/4}$
by Assumption \ref{assu:balls}(a), so
\[
\varepsilon^{1/4}\ge\int_{Q}(g-f)\mathrm{d}\lambda=\sum_{i=1}^{K}\int_{Q_{i}}(g-f)\mathrm{d}\lambda=\sum_{i=1}^{K}\int_{Q_{i}}(e_{i}+h_{Q_{i}})\mathrm{d}\lambda.
\]
Note that $\int_{Q_{i}}e_{i}\mathrm{d}\lambda=0$ and $\int_{Q_{i}}h_{Q_{i}}\mathrm{d}\lambda=h_{Q_{i}}|Q_{i}|$,
so we have
\begin{equation}
h_{Q_{i}}|Q_{i}|\le\sum_{i=1}^{K}h_{Q_{i}}|Q_{i}|\le\varepsilon^{1/4}.\label{eq:h_ub}
\end{equation}
Combining (\ref{eq:h_lb}) and (\ref{eq:h_ub}), and then we get $|h_{Q_{i}}|\le\max(1,|Q_{i}|^{-1})\cdot\varepsilon^{1/4}.$

Finally, (\ref{eq:h_norm_bound}) indicates that $\Vert h-h_{Q_{i}}\Vert_{L^{2}(Q_{i})}\le M_{2}^{1/2}\varepsilon^{3/8}[\log(1/\varepsilon)]^{c/2}$,
so
\begin{align*}
\Vert h\Vert_{L^{2}(Q_{i})} & \le\Vert h-h_{Q_{i}}\Vert_{L^{2}(Q_{i})}+\Vert h_{Q_{i}}\Vert_{L^{2}(Q_{i})}\le M_{2}^{1/2}\varepsilon^{3/8}[\log(1/\varepsilon)]^{c/2}+h_{Q_{i}}|Q_{i}|^{1/2}\\
 & \le M_{2}^{1/2}\varepsilon^{3/8}[\log(1/\varepsilon)]^{c/2}+\max(|Q_{i}|^{1/2},|Q_{i}|^{-1/2})\cdot\varepsilon^{1/4}.
\end{align*}

\setstretch{1}

\bibliographystyle{apalike}
\bibliography{ref}

\end{document}